\numberwithin{equation}{section}
\newtheorem{theorem}{Theorem}[section]
\newtheorem{lemma}[theorem]{Lemma}
\newtheorem{proposition}[theorem]{Proposition}
\theoremstyle{definition}
\def\wt{\widetilde}
\def\cE{\mathcal{E}}
\def\cU{\mathcal{U}}
\def\cD{\mathcal{D}}
\def\cM{\mathcal{M}}
\def\cB{\mathcal{B}}
\def\bR{\mathbb{R}}
\def\bN{\mathbb{N}}
\def\eps{\varepsilon}
\let\e=\varepsilon
\def\tr{\text{\rm tr}}
\def\square{\hbox{$\sqcap\kern-7pt\sqcup$}}
\def\be{\begin{equation}}
\def\ee{\end{equation}}
\def\bea{\begin{eqnarray}}
\def\eea{\end{eqnarray}}
\def\o{\omega}
\title{From the Hartree dynamics to the Vlasov equation}
\author{Niels Benedikter, Marcello Porta, Chiara Saffirio 
and Benjamin Schlein}
\def\adresse{
\begin{description}
\item[N. Benedikter:] 
Department of Mathematical Sciences,\\ University of Copenhagen,
Universitetsparken 5,
2100 Copenhagen,
Denmark\\
E-mail: \texttt{niels.benedikter@math.ku.dk }

\item[M. Porta:] Institute of Mathematics, \\ University of Z\"urich, Winterthurerstrasse 190, 8057 Z\"urich, Switzerland\\
E-mail: \texttt{marcello.porta@math.uzh.ch}

\item[C. Saffirio:] Institute of Mathematics, \\ University of Z\"urich, Winterthurerstrasse 190, 8057 Z\"urich, Switzerland\\
E-mail: \texttt{chiara.saffirio@math.uzh.ch}

\item[B. Schlein:] Institute of Mathematics, \\ University of Z\"urich, Winterthurerstrasse 190, 8057 Z\"urich, Switzerland\\
E-mail: \texttt{benjamin.schlein@math.uzh.ch}
\end{description}
}
\date{\today}
\begin{document}

\maketitle

\begin{abstract}
We consider the evolution of quasi-free states describing $N$ fermions in the mean field limit, as governed 
by the nonlinear Hartree equation. In the limit of large $N$, we study the convergence towards the classical Vlasov equation. For a class of regular interaction potentials, we establish precise bounds on the rate of convergence.    
\end{abstract}


\section{Introduction and main results}
\setcounter{equation}{0}

This work is motivated by the study of the time-evolution of systems of $N$ fermions in the mean field regime, characterized by a large number of weak collisions. The many body evolution of $N$ fermions is generated by the Hamilton operator 
\begin{equation}\label{eq:ham} H_N = \sum_{j=1}^N -\Delta_{x_j} + \lambda \sum_{i<j}^N V(x_i - x_j)
\end{equation}
acting on 
\[ L^2_a (\bR^{3N}) = \{ \psi \in L^2 (\bR^{3N}) : \psi (x_{\pi 1}, \dots , x_{\pi N}) = \sigma_\pi \psi (x_1, \dots , x_N) \text{ for all } \pi \in S_N \}, \]
the subspace of permutation antisymmetric functions 
in $L^2 (\bR^{3N})$ ($\sigma_\pi$ denotes here the sign of the permutation $\pi$). Due to the antisymmetry, the kinetic energy in (\ref{eq:ham}) is typically (for data occupying a volume of order one) of the order $N^{5/3}$ (for bosons, particles described by permutation symmetric wave functions, it is much smaller, of order $N$). Hence, to obtain a non-trivial competition between kinetic and potential energy, we have to choose $\lambda = N^{-1/3}$. Moreover, the large kinetic energy of the particles implies that we can only follow their time evolution for short times, of the order $N^{-1/3}$ (the kinetic energy per particle is proportional to $N^{2/3}$; the typical velocity of the particles is therefore of the order $N^{1/3}$). After rescaling time, the evolution of the $N$ fermions is governed by the many body Schr\"odinger equation
\begin{equation}\label{eq:schro} i N^{1/3} \partial_t \psi_{N,t} = \left[ \sum_{j=1}^N -\Delta_{x_j} + \frac{1}{N^{1/3}} \sum_{i<j}^N V (x_i -x_j) \right] \psi_{N,t} 
\end{equation}
for $\psi_{N,t} \in L^2_a (\bR^{3N})$. It is convenient to rewrite (\ref{eq:schro}) as follows. We introduce the small parameter \[ \eps = N^{-1/3} \] and we multiply (\ref{eq:schro}) by $\eps^2$. We obtain
\begin{equation}\label{eq:schro2} i\eps \partial_t \psi_{N,t} = \left[ \sum_{j=1}^N - \eps^2 \Delta_{x_j} + \frac{1}{N} \sum_{i<j}^N V(x_i -x_j) \right] \psi_{N,t}  \,. 
\end{equation}
Hence, the mean field scaling for fermionic systems (characterized by the $N^{-1}$ factor in front of the potential energy) is naturally linked with a semiclassical scaling, where $\eps = N^{-1/3}$ plays the role of Planck's constant. Notice that for particles in $d$ dimensions, similar arguments show that we would have to take $\eps = N^{-1/d}$; in fact, our analysis applies to general dimensions (with appropriate changes on the regularity assumptions); to simplify our presentation we will only discuss the case $d=3$. 

{F}rom the point of view of physics, we are interested in understanding the evolution of the fermionic system resulting from a change of the external fields. In other words, we are interested in the solution of (\ref{eq:schro2}) for initial data describing equilibrium states of trapped systems. It is expected (and in certain cases, it is even known) that equilibrium states in the mean-field regime are approximately quasi-free. 

At zero temperature, the relevant quasi-free states are Slater determinants, having the form
\[ \psi_\text{Slater} (x_1, \dots , x_N) =\frac{1}{\sqrt{N!}} \det \, (f_j (x_i))_{1 \leq i,j \leq N} \]
where $\{ f_j\}_{j=1}^N$ is an orthonormal system in $L^2 (\bR^3)$. Slater determinants are completely characterized by their one-particle reduced density $\omega_N$, defined as the non-negative trace class operator over $L^2 (\bR^3)$ with the integral kernel 
\[ \omega_N (x;y) = N \int dx_2 \dots  dx_N \, \psi_\text{Slater} (x, x_2, \dots , x_N) \overline{\psi_\text{Slater} (y, x_2, \dots , x_N)}\;. \]
A simple computation shows that 
\[ \omega_N = \sum_{j=1}^N |f_j \rangle \langle f_j |\;, \]
i.e. $\omega_{N}$ is the orthogonal projection onto the $N$-dimensional space spanned by the $N$ orbitals $f_1, \dots , f_N$ defining $\psi_\text{Slater}$ (we used here the notation $|f \rangle \langle f |$ to indicate the orthogonal projection onto $f \in L^2 (\bR^3)$). In the language of probability theory, the one-particle reduced density corresponds to the one-particle marginal distribution, obtained by integrating out the degrees of freedom of the other $(N-1)$ particles. Slater determinants have the properties that higher order marginals can all be expressed in terms of $\omega_N$ via the Wick rule (this is, in fact, the defining property of quasi-free states). 

The many-body evolution of a Slater determinant, as determined by (\ref{eq:schro2}), is not a Slater determinant. Still, because of the mean-field form of the interaction, we can expect it to remain close, in an appropriate sense, to a Slater determinant. Under this assumption, it is easy to find a self-consistent equation for the dynamics of the Slater determinant. We obtain the nonlinear Hartree-Fock equation 
\begin{equation}\label{eq:HF0}
i\eps \partial_t \omega_{N,t} = \left[ -\eps^2 \Delta + (V* \rho_t) - X_t , \omega_{N,t} \right]\,.
 \end{equation}
Here $\rho_t (x) = N^{-1} \omega_N (x;x)$ is the normalized density of particles at $x \in \bR^3$, the exchange operator $X_t$ has the integral kernel $X_t (x;y) = N^{-1} V(x-y) \omega_{N,t} (x;y)$, and, as before, $\eps = N^{-1/3}$. It is easy to check that, if $\omega_{N,t=0}$ is an orthogonal projection with rank $N$, then the same is true for the solution $\omega_{N,t}$; in other words, the Hartree-Fock evolution of a Slater determinant is again a Slater determinant. 

In \cite{BPS}, it was shown that indeed, for sufficiently regular interaction potentials, the many body Schr\"odinger evolution of initial Slater determinants can be approximated by the Hartree-Fock evolution, in the sense that the one-particle reduced density associated with the solution $\psi_{N,t}$ of (\ref{eq:schro2}) remains close (in the Hilbert-Schmidt and in the trace norm) to the solution $\omega_{N,t}$ of the Hartree-Fock equation (\ref{eq:HF0}). Previous results in this direction have been obtained in \cite{EESY}; convergence towards the Hartree-Fock dynamics in other regimes, which do not involve a semiclassical limit, has been also established in \cite{BGGM,FK,PP,BBPPT}. 

At positive temperature, on the other hand, relevant quasi-free states approximating equilibria of trapped systems are mixed states, described by a one-particle reduced density $\omega_N$ with $\tr\, \omega_N = N$ and $0 \leq \omega_N \leq 1$ (it follows from the Shale-Stinespring condition, see e. g. \cite[Theorem 9.5]{Sol}, that every such $\omega_N$ is the one-particle reduced density of a quasi-free state with $N$ particles;  Slater determinants form a special case, with $\omega_N$ having only the eigenvalues $0$ and $1$).  In the simple case of $N$ fermions with one-particle Hamiltonian $h = -\eps^2 \Delta + V_\text{ext}$ and no interaction, equilibrium at temperature $T > 0$ is described by the Gibbs state with one-particle reduced density 
\begin{equation}\label{eq:gibbs} \omega_N = \frac{1}{1+ e^{\frac{1}{T} (-\eps^2 \Delta + V_\text{ext} - \mu)}} 
\end{equation}
where the chemical potential $\mu \in \bR$ has to be chosen so that $\tr \, \omega_N = N$. If we turn on a mean-field interaction, it is expected that equilibrium states continue to be approximated by quasi-free states with one-particle reduced density of the form (\ref{eq:gibbs}), with the external potential $V_\text{ext}$ appropriately modified to take into account, in a self-consistent manner, the interaction among the particles (for results in this direction see, for example, \cite{NP1,P}). 

In suitable scaling regimes, the state of the system at positive temperature is expected to be well approximated by an appropriate mixed quasi-free state. Similarly as in the case of Slater determinants, mixed quasi-free states are completely characterized by their one-particle reduced density. All higher order correlation functions (i.e. all higher order marginals) can be expressed in terms of $\omega_N$ \footnote{In general quasi-free states are characterized by two operators on $L^2 (\bR^3)$, a one-particle reduced density $\omega_N$ and a pairing density $\alpha$. Here we restrict our attention to states with $\alpha = 0$; this is expected to be a very good approximation for equilibrium states of fermions in the mean field regime considered here.}. For the evolution of mixed quasi-free states, we find the same self-consistent equation (\ref{eq:HF0}) derived for Slater determinants. We observe here that the properties $\tr \, \omega_N = N$ and $0 \leq \omega_N \leq 1$, characterizing the reduced one-particle density of mixed quasi-free states, are preserved by the Hartree-Fock equation (\ref{eq:HF0}). In \cite{BJPSS}, it was shown that, for sufficiently regular potential, the many-body evolution of a mixed quasi-free state can be approximated by the self-consistent Hartree-Fock equation (\ref{eq:HF0}) (also here, the convergence has been established through bounds on the distance between 
reduced densities). 

To summarize, it follows from the analysis of  \cite{BPS,BJPSS} that the many-body evolution of fermionic quasi-free states can be approximated by the Hartree-Fock equation (\ref{eq:HF0}). This holds true for Slater determinants (in this case $\omega_{N,t}$ is an orthogonal projection with rank $N$) as well as for general mixed quasi-free states (satisfying only $\tr\, \omega_{N,t} = N$ and the bounds $0 \leq \omega_{N,t} \leq 1$).

In the mean field regime, the energy contribution associated with the exchange term can be estimated as follows, for bounded potentials $V$:
\be
\Big| \frac{1}{2N} \int dxdy\, V(x-y) |\omega(x;y)|^{2} \Big| \leq \frac{\| V\|_{\infty}}{2N}\| \omega_{N} \|_{\text{HS}}^{2} \leq C\;,
\ee
where the full energy is of order $N$ (here we used that the Hilbert-Schmidt norm\footnote{The Hilbert-Schmidt norm of a compact operator $A$ is defined as $\| A \|_\text{HS}^2 = \tr A^* A$.} of $\omega_{N}$ is bounded by $N^{1/2}$). Because of the smallness of the exchange term, instead of considering the Hartree-Fock equation (\ref{eq:HF0}), we will drop the exchange term and study the fermionic Hartree dynamics, governed by the nonlinear equation 
\begin{equation}\label{eq:hartree} i\eps \partial_t \omega_{N,t} = \left[ -\eps^2 \Delta + (V*\rho_t), \omega_{N,t} \right] 
\end{equation}
with $\rho_t (x) = N^{-1} \omega_{N,t} (x;x)$ (a proof of the fact that the exchange term does not affect the dynamics can be found in Appendix A of \cite{BPS}).

The Hartree equation (\ref{eq:hartree}) still depends on $N$ (recall the choice $\eps = N^{-1/3}$ and the normalization $\tr\,\omega_{N} = N$). It is therefore natural to ask what happens to it in the limit $N \to \infty$. To answer this question, we define the Wigner transform of the one-particle reduced density $\omega_{N,t}$ by setting 
\begin{equation}\label{eq:WT} W_{N,t} (x,v) = \left(\frac{\eps}{2\pi}\right)^3 \int \omega_{N,t} \left( x+ \frac{\eps y}{2}; x - \frac{\eps y}{2} \right) e^{-i v\cdot y} dy\;. \end{equation} 
Hence, $W_{N,t}$ is a function of position and velocity, defined on the phase-space $\bR^3 \times \bR^3$. It is normalized so that \[ \int W_{N,t} (x,v) dx dv = \eps^3 \tr \, \omega_{N,t} = 1\;. \] 
The Wigner transform can be inverted, noticing that
\begin{equation}\label{eq:weyl} \omega_{N,t} (x;y) = N \int dv \, W_{N,t} \Big( \frac{x+y}{2} , v \Big) e^{iv\cdot \frac{x-y}{\eps}}\;. 
\end{equation}
Eq.\ (\ref{eq:weyl}) is known as the Weyl quantization of the function $W_{N,t}$. Notice that \ $\|\omega_{N,t}\|_{\rm HS}=\sqrt{N}\|W_{N,t}\|_2$.

The Wigner transform $W_{N,t}$ can be used to compute expectations in the quasi-free state described by $\omega_{N,t}$ of observables depending only on the position $x$ or on the momentum $-i\eps\nabla$ of the particles. In fact, for a large class of functions $f$ on $\bR^3$, 
\[ \tr \, f(x) \, \omega_{N,t} = \int dx f(x) \omega_{N,t} (x;x) = N \int dv dx f(x) W_{N,t} (x,v) \]
and 
\[ \tr \, f(i\eps \nabla) \, \omega_{N,t} = N \int dx dv \, f(v) W_{N,t} (x,v)\,.  \]
In other words, $\int dv \, W_{N,t} (x,v)$ is the density of fermions in position space at point $x \in \bR^3$, while $\int dx\, W_{N,t} (x,v)$ is the density of particles with velocity $v \in \bR^3$. Notice, however, that $W_{N,t}$ is not a probability density on the phase-space, because in general it is not positive (this observation is related with the Heisenberg principle; position and momentum of the particles cannot be measured simultaneously with arbitrary precision). 

{F}rom (\ref{eq:hartree}), we find an evolution equation for the Wigner transform $W_{N,t}$:
\[ \begin{split} 
i\eps \partial_t W_{N,t} &(x,v) =\;  \frac{1}{(2\pi)^3} \int dy \, i\eps\partial_t \omega_{N,t} \left( x+ \frac{\eps y}{2} ; x - \frac{\eps y}{2} \right) e^{-iv \cdot y} \\ = \; &\frac{\eps^2}{(2\pi)^3} \int dy \, (-\Delta_{x+\eps y/2} + \Delta_{x-\eps y/2})\, \omega_{N,t} \left( x+ \frac{\eps y}{2} ; x - \frac{\eps y}{2} \right) e^{-iv \cdot y} \\ &+\frac{1}{(2\pi)^3} \int dy \, \left( (V*\rho_t) (x+\eps y/2) - (V*\rho_t) (x-\eps y/2) \right) \omega_{N,t} \left( x+ \frac{\eps y}{2} ; x - \frac{\eps y}{2} \right) e^{-iv \cdot y}\;. \end{split} \]
Using $-\Delta_{x+\eps y/2} + \Delta_{x-\eps y/2} = -2/\eps \nabla_x \cdot \nabla_y$ and expanding
\[ (V*\rho_t) (x+\eps y/2) - (V*\rho_t) (x-\eps y/2)  \simeq \eps y \cdot \nabla (V*\rho_t) + O(\eps^2) \]
we conclude, formally, that  
\[ \begin{split} 
i\eps \partial_t W_{N,t} (x,v) = \; &-2\eps \frac{1}{(2\pi)^3} \nabla_x \cdot \int dy \, \nabla_y  \omega_{N,t} \left( x+ \frac{\eps y}{2} ; x - \frac{\eps y}{2} \right) e^{-iv \cdot y} dy\\ &+\eps \nabla (V*\rho_t) (x) \frac{1}{(2\pi)^3} \int dy \, y \, \omega_{N,t} \left( x+ \frac{\eps y}{2} ; x - \frac{\eps y}{2} \right) e^{-iv \cdot y} dy + O (\eps^2)\\ \; &= -2i\eps v \cdot \nabla_x W_{N,t} (x,v) +i\eps \nabla (V*\rho_t) (x) \cdot \nabla_v W_{N,t} (x,v) + O(\eps^2)\;. 
\end{split} \]
As a consequence, we expect that, in the limit $N \to \infty$ (and hence $\eps \to 0$; recall that $\eps = N^{-1/3}$), $W_{N,t}$ approaches a solution $W_{t}$ of the classical Vlasov equation 
\begin{equation}\label{eq:vlasov} \partial_t W_{t} + 2 v \cdot \nabla_x W_{t} = \nabla (V*\varrho_{t}) \cdot \nabla_v W_{t} 
\end{equation}
with the density $\varrho_{t}  (x) = \int W_{t} (x,v) dv$ (in contrast with $W_{N,t}$, the limit $W_{t}$ is a probability density, if this is true at time $t=0$). The goal of this paper is to study the convergence of the Hartree dynamics towards the Vlasov equation (\ref{eq:vlasov}), in the limit $N \to\infty$. 

This work is not the first one devoted to the derivation of the Vlasov equation (\ref{eq:vlasov}) from quantum evolution equations. In \cite{NS,Sp}, the Vlasov equation is obtained directly from many body quantum dynamics, starting from the fundamental $N$-fermion Schr\"odinger equation (the Vlasov equation also emerges in the $N$-boson case, if the mean field limit is combined with a semiclassical limit; see \cite{GMP}, where the dynamics of factored WKB states is analyzed). In \cite{LP,MM}, the authors take the Hartree equation (\ref{eq:hartree}) as starting point of their analysis, and they prove convergence (in a weak sense) towards the solution of the Vlasov equation (\ref{eq:vlasov}). Note that the analysis of \cite{LP,MM} also applies to singular interactions, including a Coulomb potential (the analysis was extended to the Hartree-Fock equation in \cite{GIMS}).

In \cite{NS,Sp,LP,MM,GIMS}, the convergence towards the classical Vlasov dynamics is established in an abstract sense, without control on its rate. The problem of determining bounds on the rate of convergence is not only of academic interest. When considering applications to real physical systems, the number of particles $N$ is large but, of course, finite. Bounds on the rate of convergence are therefore important to decide whether $N$ is large enough for the Vlasov equation to be a good approximation of the Hartree and of the full many body Schr\"odinger dynamics. 


Bounds on the rate of convergence of the Hartree evolution towards the Vlasov equation have been first obtained in \cite{APPP}. In this paper the authors obtain the convergence in the Hilbert-Schmidt with a relative rate $\eps^{2/7} = N^{-2/21}$ for sufficiently regular initial data and potentials (they require $V \in H^1 (\bR^3)$ and that $\widehat{V} \in L^1 (\bR^3, (1+|p|^4) dp)$). For smooth potentials, an expansion of the solution of the Hartree equation (\ref{eq:hartree}) in powers of $\eps$ has been shown in \cite{PP09} (with no control on the remainder) and in \cite{AKN,AKN2}. 

Our approach here is similar to the one of \cite{APPP}; we consider the solution of the Hartree equation (\ref{eq:hartree}) for initial data $\omega_N$ with sufficiently smooth Wigner transform $W_N$, and we compare it with the Weyl quantization of the solution of the Vlasov equation (\ref{eq:vlasov}), with initial data $W_N$.
We consider regular interaction potentials. In Theorem \ref{thm:main} and in Theorem \ref{cor:relaxed-main} we establish bounds on the norm-distance of the solution of the Hartree equation $\omega_{N,t}$ with initial data $\omega_N$ and the Weyl quantization $\wt{\omega}_{N,t}$ of the solution of the Vlasov equation with initial data $W_{N}$. For every fixed $t \in \bR$, the relative error is of the order $\eps = N^{-1/3}$ in the limit of large $N$. The dependence on $N$ of these bounds is expected to be optimal. This expectation is confirmed by the expansion of \cite{AKN}, where the next order corrections are constructed (in fact, if we assumed initial data with smooth Wigner transform $W_N \in W^{\infty,\infty} (\bR^3 \times \bR^3)$ and smooth interaction potential $V \in W^{\infty,\infty} (\bR^3)$, the result of Theorem \ref{thm:main} would follow from Theorem 1.2 in \cite{AKN}). 

In Theorem \ref{thm:main}, we get convergence in the trace-norm, for very regular initial data. In Theorem \ref{cor:relaxed-main}, we bound the Hilbert-Schmidt norm, under weaker assumptions on the regularity of $W_N$. The strategy to show Theorem \ref{cor:relaxed-main} is similar to the one of \cite{APPP}; we regularize the initial data, we compare the solutions of the regularized Hartree and Vlasov equations and then we establish stability of both equations with respect to  the regularization. We can improve the bounds of \cite{APPP} by using the trace norm convergence shown in Theorem \ref{thm:main}
for the solutions with regularized data. The nonlinearity in the Hartree and in the Vlasov equation depends on the convolution of the potential with the density of particles in space. Differences among densities can be easily controlled through the trace-norm of the corresponding fermionic operators (which are bounded in Theorem \ref{thm:main}). Estimating them directly by means of Hilbert-Schmidt norms, as done in \cite{APPP}, leads instead to a deterioration of the rate of convergence. 

Notice that, in Theorems \ref{thm:main} and \ref{cor:relaxed-main}, we consider the solution of the Vlasov equation for initial data which are not probability densities. The well-posedness of the Vlasov equation for such initial data can be obtained adapting the arguments of \cite{Dobr79}; in Appendix \ref{app:dobr} we sketch the proof. 

If we assume additionally that the sequence of initial data $\omega_N$ has a limit, in the sense that their Wigner transform $W_N$ converge towards a probability density $W_0$, then we can also establish the convergence of the Wigner transform $W_{N,t}$ of the solution of the Hartree equation towards the solution of the Vlasov equation $W_t$ with initial data $W_0$ (in this case, the solution of the Vlasov equation is a classical probability density, for all $t \in \bR$). This is the content of Theorem \ref{thm:new}. 

Our bounds on the norms of the distance between the Wigner transform $W_{N,t}$ and the solution of the Vlasov equation $W_t$ (as well as the bounds for the distance between $W_{N,t}$ and the Weyl quantization $\wt{W}_{N,t}$ of the solution of the Vlasov equation with initial data $W_N$) hold for sufficiently regular initial data. In particular, Theorem \ref{cor:relaxed-main} needs $W_N \in H^2 (\bR^3\times \bR^{3})$ (with some additional weights; see Theorem \ref{thm:main2} for the precise assumptions). This condition is justified for initial data describing equilibrium states of confined fermionic system at positive temperatures. At zero temperature, on the other hand, the system at equilibrium relaxes to its ground state, which can be approximated by a Slater determinant. Typically, in this case, the corresponding Wigner transform is not regular. For example, the ground state of a system of $N$ free fermions in a periodic box with volume one is a Slater determinant with Wigner transform
\begin{equation}\label{eq:WN-free} W_N (x,v) = N^{-1} {\bf 1} (|v| \leq c \rho^{1/3}) 
\end{equation}
where $\rho = N$ is the density of the particles (this system is translation invariant; therefore, particles are uniformly distributed in the box). Eq.\ (\ref{eq:WN-free}) corresponds to the idea that to construct the free ground state, we should fill the $N$ one-particle states with the smallest possible energy (by the antisymmetry of fermionic wave functions, there cannot be two particles in the same state). If we switch on an external potential and a mean-field interaction, it is believed that the ground state can still
be approximated by a state with Wigner transform of the form  (\ref{eq:WN-free}); the only difference is that now we have to fill low energy states locally, according to an effective particle density $\rho_\text{TF}$ that can be determined by minimizing the Thomas-Fermi functional 
\[ \cE_\text{TF} (\rho) = \frac{3}{5} c_\text{TF} \int dx \, \rho^{5/3} (x) +\int dx \, V_\text{ext} (x) \rho (x) + \frac{1}{2} \int dx dy \, V(x-y) \rho (x) \rho (y) \]
among all $\rho \in L^1 \cap L^{5/3} (\bR^3)$ with $\| \rho \|_1 = N$.  The resulting sequence of Wigner transforms $W_N (x,v) = N^{-1} {\bf 1} (|v| \leq c \rho_\text{TF}^{1/3} (x))$ is not in $H^2 (\bR^3 \times \bR^3)$. So, while Theorem \ref{thm:main}, Theorem \ref{cor:relaxed-main} and Theorem \ref{thm:new} provide a good description of the fermionic dynamics in the mean field limit at positive temperature, they cannot be applied at zero temperature. 

For such initial data, we do not get norm convergence towards the solution of the Vlasov equation. Nevertheless, in Theorem \ref{thm:main2} and Theorem \ref{thm:new2} we can still prove convergence for the expectation of a class of semiclassical observables. Semiclassical observables are functions of the multiplication operator $x$ and of the momentum operator $-i\eps \nabla$; they detect variations in the spatial distribution of the particles on ``macroscopic'' scales of order one and, at the same time, they are sensitive to variations of order $\eps^{-1}$ in the momentum distribution  (corresponding to the ``microscopic'' length scale $\eps$). 

Let us stress the fact that, to the best of our knowledge, Theorem \ref{thm:main2} and Theorem \ref{thm:new2} are the first rigorous results concerning convergence from the Hartree dynamics towards the Vlasov equation that can be applied to reasonable approximations of ground states. 

In Section \ref{sec:statements}, in the remarks following our main theorems, we provide explicit examples of fermionic states, constructed with the help of coherent states, approximating ground states and positive temperature equilibrium states of fermionic systems in the mean-field regime, to which our theorems can be applied.

\section{Statement of the results}
\setcounter{equation}{0}
\label{sec:statements}

In order to state our results in a precise form, we need to introduce some norms for functions on the phase space $(x,v) \in \bR^3 \times \bR^3$. For $s \in \bN$, we define the Sobolev norm 
\[ \| f \|^2_{H^s} = \sum_{|\beta| \leq s} \int |\nabla^\beta f(x,v)|^2 dx dv \]
where $\beta$ is a multi-index, and $\nabla^\beta$ can act on both position and momentum variables. For $s,a \in \bN$, we introduce also the weighted norms 
\[ \| f \|^2_{H^s_{a}} = \sum_{|\beta| \leq s} \int (1+x^2+v^2)^{a} |\nabla^\beta f (x,v)|^2 dx dv \]

We are now ready to state our main theorems. In the first theorem, we assume strong regularity of the initial data, and we prove bounds in the trace-norm. 

\begin{theorem}\label{thm:main}
Let $V \in W^{2,\infty} (\bR^3)$. Let $\omega_N$ be a sequence of reduced densities on $L^2 (\bR^3)$, with $\tr \, \omega_N = N$, $0 \leq \omega_N \leq 1$ and with Wigner transform $W_N$ satisfying $\| W_N \|_{H^5_4} \leq C$, uniformly in $N$. 

We denote by $\omega_{N,t}$ the solution of the Hartree equation 
\begin{equation}\label{eq:hartree1} i\partial_t \omega_{N,t} = \left[ -\eps^2 \Delta + (V * \rho_t) , \omega_{N,t} \right] 
\end{equation}
with $\rho_t (x) = N^{-1} \omega_{N,t} (x;x)$ and initial data $\omega_N$. 

On the other hand, we denote by $\wt{W}_{N,t}$ the solution of the Vlasov equation 
\begin{equation}\label{eq:vlasov1} \partial_t \wt{W}_{N,t} + 2 v \cdot \nabla_x \wt{W}_{N,t} = \nabla (V * \wt{\rho}_t) \cdot \nabla_v \wt{W}_{N,t} \end{equation} 
with $\wt{\rho}_{t} (x) = \int dv\, \wt{W}_{N,t} (x,v)$ and with initial data $\wt{W}_{N,0} = W_N$. Moreover, let $\wt{\omega}_{N,t}$ be the Weyl quantization of $\wt{W}_{N,t}$, defined as in (\ref{eq:weyl}).

Then there exists a constant $C>0$ (depending on $\| V \|_{W^{2,\infty}}$ and on $\sup_N \, \| W_N \|_{H^2_4}$, but not on the higher Sobolev norms of $W_N$) such that 
\be\label{eq:main}
\tr \, \left| \o_{N,t}-\wt\o_{N,t} \right| \leq C N \e \, \exp (C \exp (C|t|)) \Big[ 1 + \sum_{k=1}^{3} \e^{k} \sup_N \, \|W_N \|_{H^{k+2}_{4}} \Big]\, .
\ee
\end{theorem}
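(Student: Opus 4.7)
The plan is to compare the two flows via the difference $D_{N,t}:=\omega_{N,t}-\wt\omega_{N,t}$, derive its evolution, and close a Duhamel--Gronwall estimate. The first step is to quantify how closely $\wt\omega_{N,t}$ satisfies the Hartree equation. Starting from \eqref{eq:weyl}, substituting the Vlasov equation \eqref{eq:vlasov1} for $\partial_t \wt W_{N,t}$, and using the identity $-\Delta_{x+\eps y/2}+\Delta_{x-\eps y/2}=-(2/\eps)\nabla_x\cdot\nabla_y$ already employed in the introduction, the transport term $2v\cdot\nabla_x\wt W_{N,t}$ reproduces $[-\eps^2\Delta,\wt\omega_{N,t}]/(i\eps)$ exactly. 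The classical force $\nabla(V*\wt\rho_t)\cdot\nabla_v\wt W_{N,t}$, however, matches only the leading symbol of $[V*\wt\rho_t,\wt\omega_{N,t}]/(i\eps)$: the discrepancy comes from the third-order Taylor remainder of $V*\wt\rho_t(x\pm\eps y/2)$ (even-order terms vanishing by symmetry in $y$). An integration by parts in $v$ transfers one derivative from the potential onto $\wt W_{N,t}$, so that only two derivatives of $V$ are needed. The net result is
\[
i\eps\,\partial_t\wt\omega_{N,t}=\bigl[-\eps^2\Delta+V*\wt\rho_t,\,\wt\omega_{N,t}\bigr]+\mathcal{R}_{N,t},
\]
where semiclassical trace-norm bounds on the Weyl-quantized remainder yield $\|\mathcal{R}_{N,t}\|_1\le CN\eps^3\|V\|_{W^{2,\infty}}\sum_{k=0}^{3}\eps^k\|\wt W_{N,t}\|_{H^{k+2}_{4}}$; the weight $a=4$ is the minimal phase-space decay needed for finiteness of these Weyl trace norms.

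Subtracting the Hartree equation for $\omega_{N,t}$, conjugating by the Hartree unitary propagator $\cU(t,s)$, and taking the trace norm with $D_{N,0}=0$, Duhamel gives
\[
\|D_{N,t}\|_1\le\frac{1}{\eps}\int_0^t\bigl\|[V*(\rho_s-\wt\rho_s),\wt\omega_{N,s}]\bigr\|_1\,ds+\frac{1}{\eps}\int_0^t\|\mathcal{R}_{N,s}\|_1\,ds.
\]
The critical step is extracting from the first commutator an extra factor of $\eps$ cancelling the $1/\eps$ from Duhamel, without which Gronwall would produce the ruinous growth $\exp(C|t|/\eps)=\exp(CN^{1/3}|t|)$. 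Writing $\phi_s:=V*(\rho_s-\wt\rho_s)$ in Fourier as $\phi_s(x)=\int\hat\phi_s(k)e^{ikx}dk$ reduces the problem to $\|[e^{ikx},\wt\omega_{N,s}]\|_1$; a direct kernel manipulation, with $m=(x+y)/2$ and $z=x-y$, produces
\[
[e^{ikx},\wt\omega_{N,s}](x;y)=N e^{ikm}\int\bigl(\wt W_{N,s}(m,v-\tfrac{\eps k}{2})-\wt W_{N,s}(m,v+\tfrac{\eps k}{2})\bigr)e^{ivz/\eps}\,dv,
\]
and a Taylor expansion in $\eps k$ gives $-i\eps k\cdot(\text{Weyl quantization of }N\nabla_v\wt W_{N,s})$ at leading order, plus an $\eps^3|k|^3$ remainder. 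Combined with $|\hat\phi_s(k)|\le|\hat V(k)|\,\|\rho_s-\wt\rho_s\|_1$ and the crucial $L^1$-density bound $\|\rho_s-\wt\rho_s\|_1\le N^{-1}\|D_{N,s}\|_1$ (since $\rho_t-\wt\rho_t$ is the normalized diagonal of the trace-class operator $D_{N,t}$), this yields
\[
\tfrac{1}{\eps}\bigl\|[V*(\rho_s-\wt\rho_s),\wt\omega_{N,s}]\bigr\|_1\le C\|V\|_{W^{2,\infty}}\bigl(\|\wt W_{N,s}\|_{H^{3}_{4}}+\eps^{2}\|\wt W_{N,s}\|_{H^{5}_{4}}\bigr)\|D_{N,s}\|_1.
\]

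To close the loop, I need $\|\wt W_{N,t}\|_{H^s_{4}}\le C\exp(C|t|)$ along the Vlasov flow, which follows from a standard energy estimate on \eqref{eq:vlasov1}: each derivative acting on $\wt W_{N,t}$ produces commutator terms with the Lipschitz force field $\nabla(V*\wt\rho_t)$, whose Lipschitz constant is bounded by $\|\nabla^2 V\|_\infty\|\wt\rho_t\|_1\le\|\nabla^2 V\|_\infty$, yielding single-exponential growth in time (the weights $(1+x^2+v^2)^4$ propagate because the characteristics of the transport equation spread at most linearly in time). Plugging everything in produces a Gronwall-type inequality
\[
\|D_{N,t}\|_1\le CN\eps\Bigl[1+\sum_{k=1}^3\eps^k\sup_N\|W_N\|_{H^{k+2}_{4}}\Bigr]e^{C|t|}+\int_0^t Ce^{C|s|}\|D_{N,s}\|_1\,ds,
\]
and Gronwall's lemma yields $\|D_{N,t}\|_1\le CN\eps\,\exp(C\exp(C|t|))[\,\cdots\,]$, the inner $\exp(C|t|)$ coming from the regularity propagation of $\wt W_{N,s}$ and the outer one from integrating that exponentially-in-time coefficient in the Gronwall exponent. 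The principal obstacle is the commutator estimate in trace norm: in Hilbert--Schmidt the analogous step (as in \cite{APPP}) is comparatively soft, but trace-ideal bounds on Weyl quantizations of smooth symbols require strong Sobolev regularity with weights---this is precisely why the hypothesis demands $W_N\in H^5_4$, three derivatives beyond the $H^2_4$ of Theorem \ref{cor:relaxed-main}, to absorb the three orders $\eps^k\|W_N\|_{H^{k+2}_{4}}$ produced by the Weyl-quantization trace-norm estimates.
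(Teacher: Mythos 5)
Your overall Duhamel--Gronwall plan, the identification of the remainder $\mathcal{R}_{N,t}$ as the second-order Taylor residual of the mean-field potential times $\wt\omega_{N,t}$, and the $L^1$-density bound $\|\rho_s-\wt\rho_s\|_1\le N^{-1}\|D_{N,s}\|_1$ all match the paper's structure. But there are two substantive gaps. First, you decompose $V*(\rho_s-\wt\rho_s)$ through $\hat V(k)$ and require integrability of $|\hat V(k)|$ against polynomials in $|k|$ (a factor $|k|$ from the leading Taylor term, $|k|^3$ from the cubic remainder). This is \emph{not} a consequence of the stated hypothesis $V\in W^{2,\infty}(\bR^3)$, and the inconsistency shows in your own display where the constant is written as $C\|V\|_{W^{2,\infty}}$ yet the derivation rests on $\int|\hat V(k)||k|^j\,dk$. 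The paper instead decomposes spatially, $[V*(\rho_s-\wt\rho_s),\wt\omega_{N,s}]=\int dz\,(\rho_s(z)-\wt\rho_s(z))[V(\cdot-z),\wt\omega_{N,s}]$, and then uses only $|V(x-z)-V(y-z)|\le\|\nabla V\|_\infty|x-y|$ together with $\|\nabla^2V\|_\infty<\infty$; this is precisely what $V\in W^{2,\infty}$ buys, and is a strictly weaker requirement than Fourier integrability.

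Second, your claimed estimate feeds $\|\wt W_{N,s}\|_{H^3_4}$ into the coefficient multiplying $\|D_{N,s}\|_1$. After Gronwall this coefficient lands in the exponent, and by the propagation bound $\|\wt W_{N,s}\|_{H^3_4}\le Ce^{C|s|}\|W_N\|_{H^3_4}$ the final constant would then depend on $\|W_N\|_{H^3_4}$, contradicting the theorem's explicit claim that $C$ depends only on $\sup_N\|W_N\|_{H^2_4}$. (This matters concretely: the later proof of Theorem \ref{cor:relaxed-main} regularizes so that $\|W_N^k\|_{H^3_4}\sim\sqrt{k}=\eps^{-1}$; that factor can afford to sit linearly in the inhomogeneity, but not inside the exponent.) The paper avoids this by splitting the commutator bound into a leading term proportional to $\eps\|W_N\|_{H^1_4}\,\|\rho_s-\wt\rho_s\|_1$ (which goes into the Gronwall integrand, with $\|W_N\|_{H^1_4}\le\|W_N\|_{H^2_4}$) and $\eps^2$-subdominant terms which are bounded using only $\|\rho_s-\wt\rho_s\|_1\le C$ and shunted into the inhomogeneity. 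Finally, the actual trace-ideal estimate is not a soft fact: the paper's mechanism is the factorization $\tr|A|\le\|(1-\eps^2\Delta)^{-1}(1+x^2)^{-1}\|_{\rm HS}\,\|(1+x^2)(1-\eps^2\Delta)A\|_{\rm HS}$ with the explicit bound $\|(1-\eps^2\Delta)^{-1}(1+x^2)^{-1}\|_{\rm HS}\le C\sqrt N$, followed by a careful commutation of $(1+x^2)(1-\eps^2\Delta)$ through the Weyl kernel, which is where the weighted Sobolev norms and the balancing of $\eps$-powers against derivative orders actually come from; your sketch acknowledges the need for this step but does not supply it. As a minor point, $\|\mathcal{R}_{N,s}\|_1$ is $O(N\eps^2)$, not $O(N\eps^3)$; after dividing by $\eps$ this is exactly what produces the $N\eps$ in the inhomogeneity, so the $\eps^3$ claim is inconsistent with your own final Gronwall inequality.
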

{\it Remarks.} 
\begin{enumerate}
\item[1)] Recall that we use the normalization $\tr \omega_{N,t} = N$. In this sense, (\ref{eq:main}) shows that $\omega_{N,t}$ and $\wt{\omega}_{N,t}$ are close, in the limit of large $N$, since their difference is smaller, by a factor $\eps = N^{-1/3}$, than their trace norms. 
\item[2)] The assumption $\| W_N \|_{H^5_4} \leq C$ on the Wigner transform of the initial data is equivalent to suitable commutator estimates for the initial fermionic reduced density $\omega_{N}$ with the differential operator $\nabla$ and the multiplication operator $x$. We begin by noticing that
\be \label{eq:commgrad2}
\begin{split}
\| \nabla_{x} W_{N} \|^{2}_{2} &= \int dxdv\, |\nabla_{x} W_{N}(x,v)|^{2}\\
&= \int dxdv\, \Big| \frac{\eps^{3}}{(2\pi)^{3}}\int dy\, e^{-iv\cdot y} [\nabla,\, \omega_{N}](x + \eps y/2, x - \eps y/2) \Big|^{2}\\
&= N^{-1} \| [\nabla, \omega_{N}] \|_{\rm HS}^{2}\;.
\end{split}
\ee
Similarly, we find $\| \nabla_{v} W_{N} \|_{2}^{2} = N^{-1}\eps^{-2} \| [ x, \omega_{N}] \|_{\rm HS}^{2}$. As for the weights in the definition of the $H_a^s$-norms of $W_N$, we notice that
\[
\| (1 + x^{2} + v^{2})^{a/2} W_{N} \|_{2}^{2}\leq C N^{-1}\| (1 +  x^{2} - \e^{2}\Delta)^{a/2} \omega_{N} \|_{\rm HS}^{2}\;,
\]
for some $N$-independent constant $C>0$. 

Proceeding analogously, one can show that the estimate $\| W_{N} \|_{H^{5}_{4}}\leq C$ follows from the bounds 
\be\label{eq:manycomm}
N^{-1}\| (1 + x^{2} - \eps^{2}\Delta)^{a/2} [a_{1}, [a_{2}, [a_{3}, [a_{4}, [a_{5}, \omega_{N}]]]]] \|_{\rm HS}^{2}\leq C\;,
\ee
uniformly in $N$ and for all choices of $a_1, \dots,  a_5$ with either $a_{i} = x/\eps$ or $a_i = \nabla$.

Therefore the commutator structure allows to quantify the regularity and decay properties of the quantum state $W_N$. Estimates of commutators $[x,\omega_{N}]$ and $[\e\nabla,\omega_N]$ already played a key role in \cite{BJPSS,BPS}.

\item[3)] The estimate $\sup_N \| W_N \|_{H^5_4} \leq C$ or, equivalently, the bounds (\ref{eq:manycomm}), are expected to hold true for fermionic mixed states, describing systems of $N$ particles in equilibrium at positive temperature, in the mean-field regime, \cite{BJPSS}. A reasonable approximation for the reduced density of such a state is given by the superposition 
\be\label{eq:coherent}
\begin{split}
&\omega_{N}(x;y) = \int dpdr\, M(r,p) f_{pr}(x) \overline{f_{pr}(y)}\;,
\end{split}
\ee
of the coherent states 
\begin{equation}\label{eq:fpr} f_{pr}(x) = \eps^{-3/2} e^{-ip\cdot x/\eps} g(x - r) \end{equation} 
with a probability density $M$ with $0\leq M(r,p) \leq 1$ and \[ \int dp dr\, M(r,p) = 1 \]
In (\ref{eq:fpr}), the function $g$ is assumed to vary on the (possibly $N$-dependent) scale $\delta$ and to be normalized so that $\| g \|_{2} = 1$. For simplicity, we shall make the explicit choice
\be\label{eq:g-Gauss}
g(x) = \frac{1}{(2\pi\delta^2)^{3/4}}e^{-x^{2}/2\delta^2}\;.
\ee
It is simple to check that, with the definition (\ref{eq:coherent}), one indeed finds that $0 \leq \omega_N \leq 1$ and $\tr\, \omega_N = N$. 

The smoothness and decay properties of the Wigner transform $W_{N}$ of (\ref{eq:coherent}) follow
from analogous properties of the phase space density $M(r,p)$, i.e.
\be\label{eq:boundcoherent}
\| W_{N} \|_{H^{5}_{4}} \leq C\| M \|_{H^{5}_{4}}\; .
\ee

In fact, according to the previous remark, to prove (\ref{eq:boundcoherent}) it is enough to show (\ref{eq:manycomm}). To this end, we notice that 
\be
\begin{split}
[ x/\eps, \omega_{N}](x;y) &= \int dpdr\, M(p,r) (-i\nabla_{p}) f_{pr}(x) \overline{f_{pr}(y)} = \int dpdr\, (i\nabla_{p} M(p,r)) f_{pr}(x) \overline{f_{pr}(y)}\\
[\nabla, \omega_{N}](x;y) &= \int dpdr\, M(p,r) \nabla_{r} f_{pr}(x) \overline{f_{pr}(y)} = -\int dpdr\, (\nabla_{r} M(p,r)) f_{pr}(x) \overline{f_{pr}(y)}\;.
\end{split}
\ee
More generally, using integration by parts, all commutators of $\omega_{N}$ with $x/\eps$ and $\nabla$ can be written as superpositions of coherent states, weighted by derivatives of the phase space density. Therefore,  (\ref{eq:boundcoherent}) follows from 
\be\label{eq:gaussian}
|\langle f_{pr}, f_{p'r'}\rangle| = \Big|\int dx\, f_{pr}(x) \overline{f_{p'r'}}(x)\Big| = C N\exp\Big\{ -\frac{(r - r')^{2}}{4\delta^2} - \frac{\delta^2}{4\eps^{2}}(p - p')^{2} \Big\}\;,
\ee
for a constant $C>0$, independent of $N$ and $\delta$, and from the bound
\be
\begin{split}
\| [a_{1}, [a_{2}, \ldots , [a_{j}, \omega_{N}]\ldots ] \|_{\rm HS}^{2} &\leq \int dp dp'drdr'\, |\nabla^{\beta} M(p,r)| |\nabla^{\beta} M(p',r')| |\langle f_{p,r}, f_{p',r'} \rangle|^{2}\\ 
&\leq CN \| \nabla^{\beta} M \|_{2}\|\| \nabla^{\beta} M \|_{2} \leq CN \| M \|^{2}_{H^{j}_{0}}\;,
\end{split}
\ee
for an appropriate multi-index $\beta$ with $|\beta| = j$. The effect of the operators $(1+x^2 - \eps^2 \Delta)$ appearing in (\ref{eq:manycomm}) can be controlled using the decay of (\ref{eq:gaussian}) and of the probability density $M$. 

We conclude that, for any probability density $M \in H^{5}_4 (\bR^3 \times \bR^3)$ with $0 \leq M (r,p) \leq 1$ for all $r,p \in \bR^3$, the sequence of reduced densities (\ref{eq:coherent}) is an example of initial data satisfying the assumption of Theorem \ref{thm:main}. 
\end{enumerate}

In our second theorem, we relax partly the regularity assumption on the initial data. To reach this goal, we start from (\ref{eq:main}) and we apply an approximation argument. In contrast with Theorem \ref{thm:main}, here we only get bounds for the difference $\omega_{N,t} - \wt{\omega}_{N,t}$ in the Hilbert-Schmidt norm (the Hilbert-Schmidt norm of a reduced density is directly related with the $L^2$ norm of its Wigner transform; there is no such simple relation between the trace norm of a reduced density and the $L^1$-norm of its Wigner transform). 

\begin{theorem}\label{cor:relaxed-main}\label{thm:HS}
Let $V\in L^1 (\bR^3)$ be such that
\be\label{eq:pot-hyp}
\int |\widehat{V}(p)| (1+|p|^2) \, dp < \infty\,.
\ee
Let $\omega_N$ be a sequence of reduced densities on $L^2 (\bR^3)$, with $\tr \, \omega_N = N$, $0 \leq \omega_N \leq 1$ and with Wigner transform $W_N$ satisfying $\| W_N \|_{H^2_4} \leq C$, uniformly in $N$. 

As in Theorem \ref{thm:main}, we denote by $\omega_{N,t}$ the solution of the Hartree equation (\ref{eq:hartree1}) with initial data $\omega_N$ and by 
$\wt{\omega}_{N,t}$ the Weyl quantization of the solution $\wt{W}_{N,t}$ of the Vlasov equation (\ref{eq:vlasov1}) with initial data $\wt{W}_{N,0} = W_N$. Then, there exists a constant $C>0$ depending only on $\sup_N \| W_N \|_{H^2_4}$ and on the integral (\ref{eq:pot-hyp}) such that
\be\label{eq:HS-est}
\|\o_{N,t}-\wt\o_{N,t}\|_{\rm HS} \leq C \sqrt{N}\e\,  \exp (C \exp (C|t|)) \, .  
\ee
\end{theorem}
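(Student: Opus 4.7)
The plan is to apply Theorem~\ref{thm:main} after mollifying the initial Wigner transform, then to control the mollification error through Hilbert--Schmidt-level stability estimates for the Hartree and Vlasov flows. Fix a regularization parameter $\kappa>0$ (to be optimized at $\kappa\sim\eps$), let $\phi_\kappa(x,v)=\kappa^{-6}\phi(x/\kappa,v/\kappa)$ be a smooth phase-space mollifier, set $W_N^\kappa=W_N*\phi_\kappa$, and let $\omega_N^\kappa$ be an associated reduced density (if necessary, adjusted through the coherent-state superposition of Remark~3 after Theorem~\ref{thm:main} so that $0\le\omega_N^\kappa\le 1$ and $\tr\omega_N^\kappa=N$). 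Standard mollification estimates give $\|W_N^\kappa\|_{H^{k+2}_4}\le C\kappa^{-k}$ for $k=1,2,3$ and $\|\langle(x,v)\rangle^2(W_N-W_N^\kappa)\|_{L^2}\le C\kappa^2$. Let $\omega_{N,t}^\kappa$ denote the Hartree flow starting from $\omega_N^\kappa$ and $\wt\omega_{N,t}^\kappa$ the Weyl quantization of the Vlasov flow from $W_N^\kappa$, and split by the triangle inequality
\[
\|\omega_{N,t}-\wt\omega_{N,t}\|_{\rm HS}\le T_1+T_2+T_3,
\]
with $T_1=\|\omega_{N,t}-\omega_{N,t}^\kappa\|_{\rm HS}$, $T_2=\|\omega_{N,t}^\kappa-\wt\omega_{N,t}^\kappa\|_{\rm HS}$, $T_3=\|\wt\omega_{N,t}^\kappa-\wt\omega_{N,t}\|_{\rm HS}$.

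For $T_2$ I would not simply apply Theorem~\ref{thm:main} and interpolate from trace to Hilbert--Schmidt norm (which would irrevocably lose a factor $\sqrt\eps$), but rather run a Duhamel argument \emph{directly} in HS norm for $\omega_{N,t}^\kappa-\wt\omega_{N,t}^\kappa$ (whose initial data agree). The difference satisfies
\[
i\eps\partial_t(\omega^\kappa-\wt\omega^\kappa)=[h^\kappa,\omega^\kappa-\wt\omega^\kappa]+[V*(\rho^\kappa-\wt\rho^\kappa),\wt\omega^\kappa]-\eps^2 R^\kappa,
\]
where $R^\kappa$ is the Moyal-expansion remainder, controlled in HS using the enhanced regularity of $W_N^\kappa$ together with~(\ref{eq:pot-hyp}) (after an integration by parts that shifts one $x$-derivative from $V*\wt\rho^\kappa$ onto $\wt\rho^\kappa$, so that only the $|p|^2$ Fourier weight on $\widehat V$ is needed). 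For the source commutator the key observation is that, decomposing $V$ in Fourier modes and noting that $e^{ip\cdot x}\wt\omega^\kappa e^{-ip\cdot x}$ is the Weyl quantization of $\wt W^\kappa(x,v-p\eps)$, one has
\[
\|[e^{ip\cdot x},\wt\omega^\kappa_{N,s}]\|_{\rm HS}\le \sqrt N\,|p|\eps\,\|\nabla_v\wt W^\kappa_{N,s}\|_{L^2};
\]
combined with $|\widehat{\rho^\kappa-\wt\rho^\kappa}(p)|\le N^{-1}\tr|\omega^\kappa_{N,s}-\wt\omega^\kappa_{N,s}|$ and the trace-norm bound from Theorem~\ref{thm:main} (of order $N\eps\exp(C\exp(Cs))(1+\kappa^{-3}\eps^3)$), Duhamel integration then produces $T_2\le C\sqrt N\,\eps\,\exp(C\exp(C|t|))(1+\kappa^{-3}\eps^3)$.

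The stability terms $T_3$ and $T_1$ are handled by analogous arguments. For $T_3=\sqrt N\|\wt W_{N,t}-\wt W_{N,t}^\kappa\|_{L^2}$, a standard $L^2$-energy estimate for the difference of two Vlasov solutions---using that the transport part is volume-preserving and that the mean-field correction is Lipschitz in the $L^1$-norm of the density, which is controlled by the weighted $L^2$ Wigner-difference via Cauchy--Schwarz---yields $T_3\le C\sqrt N\kappa^2\exp(C\exp(C|t|))$. For $T_1$ the same Duhamel-in-HS strategy as for $T_2$ is applied to $\omega_{N,t}-\omega_{N,t}^\kappa$; the crucial step, and the improvement over~\cite{APPP}, is that the density difference $\|\rho_s-\rho_s^\kappa\|_{L^1}$ is controlled via the \emph{trace} norm of $\omega_s-\omega_s^\kappa$ rather than via its HS norm, with this trace norm in turn bounded through a weighted-$L^2$ stability estimate for the Hartree flow on the Wigner side. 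Choosing $\kappa=\eps$ then makes every factor $(1+\kappa^{-3}\eps^3)$ of order one and balances all mollification errors, delivering the claimed rate $\sqrt N\eps\exp(C\exp(C|t|))$. The main obstacle is precisely this careful routing of density differences through the trace norm rather than the HS norm---the naive HS route costs a factor $\sqrt\eps$, exactly the deterioration responsible for the weaker rate $\eps^{2/7}$ of~\cite{APPP}---together with the weighted-$L^2$ propagation by the nonlinear Hartree flow that underlies it.
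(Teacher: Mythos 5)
Your overall strategy matches the paper's: mollify the initial data, split $\|\omega_{N,t}-\wt\omega_{N,t}\|_{\rm HS}\le T_1+T_2+T_3$, and for $T_2$ run a Duhamel argument in the Hilbert--Schmidt norm while borrowing the trace-norm bound from Theorem~\ref{thm:main} to control the density difference $\|\rho^\kappa_s-\wt\rho^\kappa_s\|_1\le N^{-1}\tr|\omega^\kappa_{N,s}-\wt\omega^\kappa_{N,s}|$. Your treatment of $T_2$ and $T_3$ is essentially the paper's.

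There is, however, a genuine gap in your handling of $T_1$. You propose to control $\|\rho_s-\rho_s^\kappa\|_1$ via the full trace norm $\tr|\omega_{N,s}-\omega_{N,s}^\kappa|$, and then to bound that trace norm by a ``weighted-$L^2$ stability estimate on the Wigner side.'' This step would not close. Under the hypothesis $\sup_N\|W_N\|_{H^2_4}\le C$ (and \emph{not} assuming the trace commutator bounds $\tr|[x,\omega_N]|\le CN\eps$, which are only invoked later in Theorem~\ref{thm:main2}), there is no route from weighted $L^2$ control of $W_N-W_N^\kappa$ to a trace-norm bound of size $N(\eps+\kappa)$: converting HS to trace via $(1-\eps^2\Delta)^{-1}(1+x^2)^{-1}$ costs a factor $\sqrt N$ together with two extra derivatives and weight on the Wigner side, so you would need control in $H^4_4$ of $W_N-W_N^\kappa$, which is exactly what the $H^2_4$ assumption does not give and which does not decay as $\kappa\to 0$. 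The paper avoids this by controlling the strictly weaker quantity $\sup_p(1+|p|)^{-1}\big|\tr\,e^{ip\cdot x}(\omega_{N,t}-\omega^k_{N,t})\big|$ (Lemma~\ref{lem:eipx}), which is precisely what the Duhamel integrand requires once $\widehat V(1+|p|^2)\in L^1$. The Gronwall closure for this quantity needs two ingredients you have not accounted for: (i) the semiclassical commutator propagation bound of Lemma~\ref{lm:cm1}, which shows that $\cU^*(t;s)e^{ip\cdot x}\cU(t;s)$ stays ``semiclassical'' (its commutator with $e^{ir\cdot x}$ is $O(\eps|r|(|p|+|q|))$ uniformly in the test operator), and (ii) a high/low kinetic-energy decomposition $\chi_<(-\eps^2\Delta/R)+\chi_>(-\eps^2\Delta/R)$ of the initial difference $\omega_N-\omega^k_N$, using $\tr(-\eps^2\Delta)\omega_N\le CN$ (from $\|W_N\|_{H^0_4}\le C$) to discard the high-frequency part, before the weighted HS estimates can be applied to the low-frequency pieces $D_1,D_2,D_3$. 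Without this refined Fourier-trace quantity and the commutator lemma, the Hartree stability step does not deliver the factor $N(\eps+k^{-1/2})$ that makes $T_1$ of the required size after the Duhamel $\eps^{-1}$ is paid.
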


Instead of comparing the solution $\omega_{N,t}$ of the Hartree equation with the Weyl quantization $\wt{\omega}_{N,t}$ of the solution of the Vlasov equation $\wt{W}_{N,t}$, we can equivalently compare $\wt{W}_{N,t}$ with the Wigner transform $W_{N,t}$ of $\omega_{N,t}$. Eq. (\ref{eq:HS-est}) implies that 
\begin{equation}\label{eq:WWNt} \| W_{N,t} - \wt{W}_{N,t} \|_2 \leq C \eps \exp (C \exp (C|t|)) \end{equation}

If we assume that the fermionic initial data $\omega_N$ has a Wigner transform $W_N$ (with appropriately bounded $H^2_4$-norm) approaching, in the limit of large $N$, a probability density $W_0$ on the phase space, we can also compare the Wigner transform $W_{N,t}$ of the solution $\omega_{N,t}$ of the Hartree equation with the solution $W_t$ of the Vlasov equation with initial data $W_0$. In the next theorem, we show the $L^2$-convergence of $W_{N,t}$ towards $W_t$. 

\begin{theorem}\label{thm:new}
Let $V \in L^1 (\bR^3)$ be such that (\ref{eq:pot-hyp}) holds true. Let $\omega_N$ be a sequence of reduced densities on $L^2 (\bR^3)$, with $\tr \, \omega_N = N$, $0 \leq \omega_N \leq 1$ and with Wigner transform $W_N$ satisfying $\| W_N \|_{H^2_4} \leq C$, uniformly in $N$. 

Furthermore, let $W_0$ be a probability density on $\bR^3 \times \bR^3$ with $\| W_0 \|_{H^2_4} < \infty$ and such that 
\begin{equation}\label{eq:init-conv} \| W_N - W_0 \|_1 \leq C \kappa_{N,1}, \quad \text{and } \| W_N - W_0 \|_{2} \leq C \kappa_{N,2} 
\end{equation}
for sequences $\kappa_{N,1}, \kappa_{N,2} \geq 0$ with $\kappa_{N,j} \to 0$ as $N \to \infty$ for $j=1,2$. 

Let $\omega_{N,t}$ denote the solution of the Hartree equation (\ref{eq:hartree1}) with initial data $\omega_N$ and let $W_{N,t}$ be its Wigner transform. On the other hand, let $W_t$ denote the solution of the Vlasov equation (\ref{eq:vlasov1}), with initial data $W_0$. Then we have
\begin{equation}\label{eq:state-new} \| W_{N,t} - W_t \|_2 \leq C \eps \exp (C \exp (C |t|)) + C (\kappa_{N,1} + \kappa_{N,2}) \exp (C |t|) \end{equation}
\end{theorem}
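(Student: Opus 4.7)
The plan is to insert between $W_{N,t}$ and $W_t$ the Vlasov solution $\wt{W}_{N,t}$ with initial data $W_N$, as in Theorem \ref{cor:relaxed-main}, and to apply the triangle inequality
\begin{equation*}
\|W_{N,t} - W_t\|_2 \leq \|W_{N,t} - \wt{W}_{N,t}\|_2 + \|\wt{W}_{N,t} - W_t\|_2.
\end{equation*}
The first term is controlled directly by Theorem \ref{cor:relaxed-main}: the observation (\ref{eq:WWNt}) translates the Hilbert-Schmidt bound (\ref{eq:HS-est}) into $\|W_{N,t} - \wt{W}_{N,t}\|_2 \leq C \eps \exp(C \exp(C|t|))$, which is precisely the first summand on the right-hand side of (\ref{eq:state-new}).

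For the second term I would establish a stability estimate for the Vlasov equation with respect to its initial datum. Both $\wt{W}_{N,t}$ and $W_t$ solve (\ref{eq:vlasov1}), with initial data differing by $\kappa_{N,1}$ in $L^1$ and by $\kappa_{N,2}$ in $L^2$. Setting $g_t = \wt{W}_{N,t} - W_t$ and subtracting the two equations gives
\begin{equation*}
\partial_t g_t + 2 v \cdot \nabla_x g_t - \nabla(V*\wt{\rho}_t)\cdot \nabla_v g_t = -\nabla(V*(\wt{\rho}_t - \rho_t))\cdot \nabla_v W_t,
\end{equation*}
where $\wt{\rho}_t,\rho_t$ are the respective spatial densities. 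The transport operator on the left-hand side is divergence-free in phase space and therefore preserves all $L^p$ norms, so a standard $L^2$ energy estimate, combined with $\|\nabla V\|_\infty \leq \int |p||\widehat{V}(p)|\, dp < \infty$ implied by (\ref{eq:pot-hyp}), yields
\begin{equation*}
\frac{d}{dt}\|g_t\|_2 \leq C \|\wt{\rho}_t - \rho_t\|_1 \|\nabla_v W_t\|_2 \leq C \|g_t\|_1 \|\nabla_v W_t\|_2.
\end{equation*}
I would complement this with an auxiliary $L^1$ estimate on $g_t$, obtained by integrating the PDE along characteristics of the $\wt{W}$-flow (which preserve phase-space volume), in the spirit of the Dobrushin-type argument of Appendix \ref{app:dobr}; this gives
\begin{equation*}
\|g_t\|_1 \leq \kappa_{N,1} + C \int_0^t \|g_s\|_1 \|\nabla_v W_s\|_1 \, ds.
\end{equation*}
Propagation of Sobolev regularity for the Vlasov flow, together with $\|W_0\|_{H^2_4} < \infty$ and the Lipschitz continuity of the force $\nabla(V*\rho_s)$ under (\ref{eq:pot-hyp}), supplies bounds $\|\nabla_v W_s\|_p \leq C\exp(C|s|)$ for $p = 1, 2$. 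Grönwall then closes both estimates, producing $\|\wt{W}_{N,t} - W_t\|_2 \leq C(\kappa_{N,1} + \kappa_{N,2})\exp(C|t|)$, which combined with the first-term bound yields (\ref{eq:state-new}).

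The hard part is controlling the coupling between the $L^1$ and $L^2$ estimates: the force $\nabla(V*(\wt{\rho}_t - \rho_t))$ is only bounded in $L^\infty$ by $\|g_t\|_1$, so the $L^2$ energy bound cannot be closed in isolation, and naively combining the characteristic $L^1$ estimate with the exponential growth of Sobolev norms of $W_t$ threatens to produce iterated exponentials. Careful tracking of the moments encoded in the $H^2_4$-norms (or a direct Dobrushin-style coupling of characteristics as in Appendix \ref{app:dobr}) is needed to obtain the single exponential $\exp(C|t|)$ claimed in (\ref{eq:state-new}).
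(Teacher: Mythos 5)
Your overall strategy is the same as the paper's: insert the Vlasov solution $\wt{W}_{N,t}$ with initial datum $W_N$, control $\| W_{N,t} - \wt{W}_{N,t} \|_2$ directly by \eqref{eq:WWNt} (the $L^2$ reformulation of Theorem~\ref{cor:relaxed-main}), and reduce the second term to a stability estimate for the Vlasov equation under $L^1$-small and $L^2$-small perturbations of the initial datum. That first term is handled identically. The second term is where you and the paper diverge in technique.

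For the Vlasov stability, the paper does not run a new argument at this point: it simply invokes the estimates established in Step~3 of Section~\ref{sec:proof1b} (the comparison of $\wt{\omega}_{N,t}$ with $\wt{\omega}_{N,t}^k$), which is a Lagrangian argument: one compares the two characteristic flows $(X_t, V_t)$ and $(X^k_t, V^k_t)$, their first derivatives, and the associated Jacobians, then uses $\wt{W}_{N,t}=W_N\circ\Phi_{-t}$ to transfer the flow-difference estimate to $L^1$ and $L^2$ bounds on $\wt{W}_{N,t}-\wt{W}^k_{N,t}$. This argument produces derivatives only of the \emph{initial} datum $W_N$ (or $W_0$), and the change of variables is handled via a bootstrap on the Jacobian, valid up to a $k$-dependent time $t^*$ defined in \eqref{eq:t*}, after which the trivial $L^2$ bound takes over. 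Your proposal replaces this with an Eulerian energy identity for $g_t=\wt{W}_{N,t}-W_t$ plus a separate characteristic-based $L^1$ estimate. That route is plausible in spirit, but your $L^1$ Gr\"onwall integrand is $\|\nabla_v W_s\|_1$, a norm of the \emph{evolved} solution, which by Proposition~\ref{prop:regularity} grows like $e^{C|s|}$; pushing that through Gr\"onwall on its own produces $\exp(\exp(C|t|))$ rather than $\exp(C|t|)$, and then feeding that into the $L^2$ energy balance does not self-correct. You flag exactly this issue at the end, which is the right concern, and your instinct that one must instead couple characteristics \`a la Dobrushin is exactly what the paper does. Two corrections of detail: (i) the quantitative stability is in Step~3 of Section~\ref{sec:proof1b}, not Appendix~\ref{app:dobr}, which only establishes well-posedness for signed-measure initial data via a contraction argument and does not yield the rate; (ii) to obtain $\|\wt{W}_{N,t}-W_t\|_2$ one must use both $\kappa_{N,1}$ (to drive the flow-difference and $L^1$ estimates) and $\kappa_{N,2}$ (to control $\|W_N-W_0\|_2$ directly in the final $L^2$ bound), so the conclusion $\|\wt{W}_{N,t}-W_t\|_2\le C(\kappa_{N,1}+\kappa_{N,2})e^{C|t|}$ should be stated in terms of both sequences, as in \eqref{eq:state-new}.

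In short, the overall decomposition and the treatment of the first summand are the paper's; the treatment of the Vlasov stability is a genuine alternative route, but as written it does not close without the Lagrangian/Jacobian machinery that you correctly suspect is needed, and which the paper has already prepared in Section~\ref{sec:proof1b}.
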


{\it Remarks.} 
\begin{enumerate}
\item[1)] Notice that, if $\| W_N - W_0 \|_1 \leq \kappa_{N,1}$ for a sequence $\kappa_{N,1} \to 0$, and if $\| W_N \|_{H^2_4} , \| W_0 \|_{H^2_4} \leq C$ uniformly in $N$, then, automatically, $\| W_N - W_0 \|_2 \leq C \kappa_{N,1}^{1/2}$, i.e. the second condition in (\ref{eq:init-conv}) follows from the first one, if we take $\kappa_{N,2} = \kappa_{N,1}^{1/2}$. However, it is often possible to get a better estimate on $\kappa_{N,2}$, improving the bound (\ref{eq:state-new}) (for instance, in the example discussed in the next remark, we find $\kappa_{N,2} = \kappa_{N,1} = \eps^{1/2}$). 

\item[2)] An interesting example of sequence of initial data satisfying all assumptions of Theorem \ref{thm:new} can be constructed again by means of coherent states. As in (\ref{eq:coherent}), consider the fermionic reduced densities
\[ \omega_N (x;y) = \int dp dr M(r,p) f_{pr} (x) \overline{f}_{pr} (y) \]
with $f_{pr} (x) = \eps^{-3/2} e^{ip \cdot x/\eps} g(x-r)$ and with $M$ a probability density on the phase-space, with $0 \leq M (r,p) \leq 1$ and $\| M \|_1 = 1$ and such that $\| M \|_{H^2_4} < \infty$. For simplicity, we choose $g$ as in (\ref{eq:g-Gauss}) to be a Gaussian function, localized on the length scale $\delta = \delta (N)$, with $\delta (N) \to 0$ as $N \to \infty$. 

The Wigner transform of $\omega_N$, defined as in (\ref{eq:WT}), is given by 
\[ \begin{split} W_N (x,v) &= \frac{\eps^3}{(2\pi)^3} \int dy  \, \omega_N \left( x + \frac{\eps y}{2} ; x- \frac{\eps y}{2} \right) e^{i y \cdot v}  \\
&= \frac{1}{(2\pi)^3 \,  (2\pi \delta^2)^{3/2}} \int dy dr dp \, M(r,p) e^{iy \cdot (v-p)} e^{-\frac{(x-r + \eps y/2)^2}{2\delta^2}} e^{-\frac{(x-r-\eps y/2)^2}{2\delta^2}} \\ &= \frac{2^{3/2}}{(2\pi \eps)^3} \int dr dp \, M(r,p) \, e^{-\frac{(x-r)^2}{2\delta^2}} e^{- \frac{\delta^2 (p-v)^2}{\eps^2}} \end{split} \]
where, in the last step, we evaluated the integral over $y$. We find 
\[ \begin{split} \| W_N - M \|_1 \leq \; &\frac{2^{3/2}}{(2\pi)^{3}} \int dx dv dr dp \, e^{-\frac{r^2}{2}} \, e^{-p^2} \, \left| M \left( x+ \delta r, v + \eps p/ \delta \right) - M(x,v) \right|  \\
\leq \; &\frac{2^{3/2}}{(2\pi)^{3}} \int dx dv dr dp \, e^{-\frac{r^2}{2}} \, e^{-p^2} \\ &\hspace{.5cm} \times  \int_0^1 d\lambda \left[ \delta |r| \left| (\nabla_x M) \left( x+ \lambda \delta r , v + \lambda \eps p / \delta \right) \right|  \right. \\ &\hspace{5cm} \left. + \frac{\eps}{\delta} |p| \left| (\nabla_v M) \left( x+ \lambda \delta r , v + \lambda \eps p / \delta \right)\right| \right] \\ 
\leq \; &C \delta \| \nabla_x M \|_1 + C \frac{\eps}{\delta} \| \nabla_v M \|_1 \\ \leq \; &C \left[ \delta + \frac{\eps}{\delta} \right] \, \| M \|_{H^2_4} 
\end{split}\]
and similarly, 
\[ \| W_N - M \|_2 \leq C \left[ \delta + \frac{\eps}{\delta} \right] \, \| M \|_{H^2_4} \]
To optimize the rate of the convergence $W_N \to M$ (i.e. to make the sequence of initial data as ``classical'' as possible), we choose $\delta = \eps^{1/2}$ (recall that $\eps = N^{-1/3}$). {F}rom Theorem \ref{thm:new}, we conclude then that the distance between the Wigner transform $W_{N,t}$ of the solution of the Hartree equation and the solution $W_t$ of the Vlasov equation with initial data given by the probability density $W_0 = M$ is bounded by 
\[  \| W_{N,t} - W_{t} \|_2 \leq C \eps^{1/2} \exp (C \exp (C |t|)) \]
\end{enumerate}

Although in Theorem \ref{cor:relaxed-main} and in Theorem \ref{thm:new} the assumptions on $W_N$ are weaker than in Theorem~\ref{thm:main}, we still need $W_N \in H^2_{4} (\bR^3 \times \bR^3)$, with a norm bounded uniformly in $N$. As pointed out in the introduction, this assumption is typically satisfied for interesting initial data at positive temperature (like the ones constructed in the remarks after Theorem \ref{thm:new}), but it is not valid for Slater determinants approximating the ground state, which are relevant at zero temperature. 

In the next two theorems, we establish a weaker form of convergence for the solution of the Hartree equation towards 
the solution of the Vlasov equation. We prove convergence after testing against a semiclassical observable (whose kernel varies on the length-scale $\eps$ in the $(x-y)$ direction). The advantage of these two results, as compared with Theorems \ref{thm:main} and \ref{cor:relaxed-main}, is the fact that they require much weaker assumptions on the initial data; in particular, they can be applied to reasonable and physically interesting approximations of the ground state of confined systems (examples of such states are constructed in the remark after Theorem \ref{thm:new2}).

\begin{theorem}\label{thm:main2}
Let $V \in L^1 (\bR^3)$ be so that 
\begin{equation}\label{eq:ass-pot} \int |\widehat{V} (p)| (1+ |p|^3) dp < \infty 
\end{equation}
Let $\omega_N$ be a sequence of reduced densities on $L^2 (\bR^3)$, with $\tr \, \omega_N = N$, $0 \leq \omega_N \leq 1$, such that 
\begin{equation}\label{eq:sc-in} \tr \, |[ x, \omega_N]| \leq C N \eps, \quad \tr \, |[ \eps \nabla, \omega_N]| \leq C N \eps \end{equation}
Denote by $W_N \in L^1 (\bR^3 \times \bR^3)$ the Wigner transform of $\omega_N$. We assume that \[ \| W_N \|_{W^{1,1}} = \sum_{|\beta| \leq 1} \int dx dv |\nabla^\beta W_N (x,v)| \leq C \]
uniformly in $N$.

Let $\omega_{N,t}$ be the solution of the Hartree equation (\ref{eq:hartree1}) with initial data 
$\omega_N$. On the other hand, let $\wt{\omega}_{N,t}$ be the Weyl quantization of the solution $\wt{W}_{N,t}$ of the Vlasov equation (\ref{eq:vlasov1}) with initial data $W_N$. 

Then there exists a constant $C >0$, such that 
\begin{equation} \label{eq:main3} \left| \tr\, e^{ip \cdot x + q \cdot \eps \nabla} \, \left( \omega_{N,t} - \wt{\omega}_{N,t} \right) \right| \leq C N \eps (1+|p|+|q|)^2 e^{C|t|} 
\end{equation}
for all $p,q \in \bR^3$, $t \in \bR$. 

\end{theorem}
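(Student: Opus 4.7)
The plan is to write the Duhamel formula for $D_t := \omega_{N,t}-\wt\omega_{N,t}$ and to test it against the Weyl unitary $A(p,q)=e^{ip\cdot x+q\cdot\eps\nabla}$. Subtracting the Hartree equation for $\omega_{N,t}$ from the Weyl quantization of the Vlasov equation for $\wt\omega_{N,t}$, and using that the next-to-leading term in the Moyal expansion of $(i\eps)^{-1}[V*\wt\rho_t,\wt\omega_{N,t}]$ appears only at order $\eps^2$ and involves third derivatives, one derives
\[ i\eps\partial_t D_t \;=\; [H_t,D_t] \;+\; [V*(\rho_t-\wt\rho_t),\wt\omega_{N,t}] \;+\; \tilde R_t,\qquad H_t := -\eps^2\Delta+V*\rho_t, \]
with explicit residual $\tilde R_t = -\frac{i\eps^3 N}{24}\mathrm{Op}\bigl(\sum_{|\alpha|=3}\frac{3!}{\alpha!}\partial_x^\alpha(V*\wt\rho_t)\partial_v^\alpha\wt W_{N,t}\bigr)+O(N\eps^5)$. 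Letting $U_{t,s}$ be the Hartree unitary propagator and $\mathcal{A}_{t,s}:=U_{t,s}^{*}A(p,q)U_{t,s}$ the Heisenberg-evolved observable, Duhamel and cyclicity give
\[ \tr A(p,q)\,D_t \;=\; \frac{1}{i\eps}\int_0^t \tr\mathcal{A}_{t,s}\bigl\{[V*(\rho_s-\wt\rho_s),\wt\omega_{N,s}]+\tilde R_s\bigr\}\,ds. \]

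First I would propagate trace-norm commutator bounds $\|[x,\omega_{N,t}]\|_1+\|[\eps\nabla,\omega_{N,t}]\|_1\leq CN\eps\,e^{C|t|}$ along the Hartree flow, as in the \cite{BPS,BJPSS} strategy: by the Jacobi identity, the evolutions of $[x,\omega_{N,t}]$ and $[\eps\nabla,\omega_{N,t}]$ couple to one another through $[x,H_t]=2\eps^2\nabla$ and $[\eps\nabla,H_t]=\eps\nabla(V*\rho_t)$, the latter controlled by the pointwise estimate $\|[e^{ikx},\omega]\|_1\leq|k|\,\|[x,\omega]\|_1$ and $\int|\widehat V(k)||k|^2\,dk<\infty$, and a coupled Gronwall closes. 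Analogous bounds for $\wt\omega_{N,t}$ follow either from the exact Moyal identity $[x,\wt\omega_{N,t}]=i\eps N\,\mathrm{Op}(\nabla_v\wt W_{N,t})$ combined with propagation of $\|\wt W_{N,t}\|_{W^{1,1}}$ along the Vlasov flow, or by repeating the same propagation argument on $\wt\omega_{N,t}$ with $\tilde R_t$ treated as a small extra source.

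For the nonlinear term, Fourier decomposition of $V*(\rho_s-\wt\rho_s)$ reduces it to
\[ \frac{1}{i\eps}\int\widehat V(k)\,\widehat{(\rho_s-\wt\rho_s)}(k)\,\tr\mathcal{A}_{t,s}[e^{ikx},\wt\omega_{N,s}]\,\frac{dk}{(2\pi)^3}, \]
which I would estimate by $\|[e^{ikx},\wt\omega_{N,s}]\|_1\leq|k|\,\|[x,\wt\omega_{N,s}]\|_1\leq CN\eps|k|e^{C|s|}$ together with the recursive identity $\widehat{(\rho_s-\wt\rho_s)}(k)=N^{-1}\tr A(-k,0)D_s$. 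The residual $\tilde R_s$ is handled via the exact relation $N\,\mathrm{Op}(\partial_v^\alpha\wt W_{N,s})=(i\eps)^{-|\alpha|}\mathrm{ad}_x^{\alpha}(\wt\omega_{N,s})$, converting $\partial_v^3\wt W_{N,s}$ into three iterated $x$-commutators of $\wt\omega_{N,s}$ (each of trace-norm $O(N\eps)$); combined with the uniform bound $\|\partial_x^3(V*\wt\rho_s)\|_\infty\leq\int|\widehat V(k)||k|^3 dk<\infty$, this yields a total contribution $O(N\eps^2)$ after cancellation of the Duhamel prefactor $(i\eps)^{-1}$. Inserting the ansatz $|\tr A(p,q)D_s|\leq M(t)\,N\eps(1+|p|+|q|)^2$ for $s\leq t$ and integrating closes a Gronwall on $M(t)$, giving $M(t)\leq Ce^{C|t|}$. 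The main obstacle is preserving the quadratic weight $(1+|p|+|q|)^2$: the Weyl commutator $[A(p,q),e^{ikx}]=2i\sin(\eps q\cdot k/2)A(p+k,q)$ produces a factor $|q||k|$ that, once combined with the ansatz weight $(1+|p+k|+|q|)^2\leq(1+|p|+|q|)^2(1+|k|)^2$, threatens a cubic growth in $(|p|+|q|)$. Preserving the quadratic weight requires exploiting the operator identity $q_j A(p,q)=\eps^{-1}[x_j,A(p,q)]$ to transfer one factor of $|q|$ onto a commutator with $x$ applied to $\wt\omega_{N,s}$, thereby gaining an extra $\eps$ and a trace-norm commutator of size $O(N\eps)$ from the first step; all residual time-dependent polynomial factors (from $|q+2p(t-s)|\leq C(1+|t-s|)(1+|p|+|q|)$) must then be absorbed into the exponential $e^{C|t|}$ rather than into the polynomial weight on $(p,q)$.
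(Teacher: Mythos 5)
Your proposal diverges from the paper's in a way that creates genuine gaps. The paper's central device for this theorem is a pair of propagation lemmas (Lemma~\ref{lm:cm1} and Lemma~\ref{lm:cm2}) that bound the commutators of the \emph{Heisenberg-evolved observable} $\mathcal{U}^*(t;s)\,e^{ip\cdot x+q\cdot\eps\nabla}\,\mathcal{U}(t;s)$ with $e^{ik\cdot x}$, $\eps\nabla$, and (twice) $x_i$, uniformly over \emph{all} trace class $\omega$ with $\tr|\omega|\leq1$. These lemmas are proven by Gronwall on the flow of the observable and require no regularity of $\wt\omega_{N,s}$; they are exactly what produces the factor $(|p|+|q|)^2$ (from $[x_i,[x_j,e^{ip\cdot x+q\cdot\eps\nabla}]]=\eps^2q_iq_j\,e^{ip\cdot x+q\cdot\eps\nabla}$ and its propagation) while trading away all smoothness of the state. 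You instead propose to put the commutators on the \emph{state}, using bounds of the type $\|[x,\wt\omega_{N,s}]\|_{\rm tr}\leq CN\eps$. This runs into two serious problems.

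First, the trace-norm bound $\|[x,\wt\omega_{N,s}]\|_{\rm tr}\leq CN\eps\,e^{C|s|}$ for $s>0$ is not available from the hypotheses. The Moyal identity $[x,\wt\omega_{N,t}]=i\eps N\,\mathrm{Op}(\nabla_v\wt W_{N,t})$ is exact, but the implication $\|\mathrm{Op}(f)\|_{\rm tr}\lesssim N\|f\|_{L^1}$ is false in general; controlling the trace norm of a Weyl quantization by an $L^1$ norm of the symbol requires additional derivatives in $L^1$ (of order at least the dimension), far more than $\|W_N\|_{W^{1,1}}\leq C$. The hypothesis \eqref{eq:sc-in} only provides the commutator bound at $t=0$, and propagating it on the Vlasov side (as opposed to the Hartree side, handled in Proposition~\ref{prop:prop-comm}) is not a unitary conjugation and does not obviously preserve trace norms.

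Second, the residual estimate is logically flawed. Writing $N\,\mathrm{Op}(\partial_v^\alpha\wt W_{N,s})=(i\eps)^{-|\alpha|}\mathrm{ad}_x^{\alpha}(\wt\omega_{N,s})$ is fine, but the claim that the three iterated $x$-commutators are ``each of trace-norm $O(N\eps)$'' is a non sequitur: $\|[x,\omega]\|_{\rm tr}\leq CN\eps$ does \emph{not} imply $\|[x,[x,\omega]]\|_{\rm tr}\leq CN\eps^2$, and under the theorem's hypotheses there is no control of $\|\mathrm{ad}_x^3(\wt\omega_{N,s})\|_{\rm tr}$ at all. Moreover, your $\tilde R_t$ is obtained from an informal Moyal expansion with an unquantified $O(N\eps^5)$ remainder, whereas the paper uses the exact kernel identity \eqref{eq:Bt} and Taylor's theorem to represent $B_s$ as a superposition of double $x$-commutators, so that no asymptotic expansion or remainder control is ever needed.

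Finally, your proposal contains no regularization step. The paper proves the result in two stages: first under the extra assumption $\sup_N\|W_N\|_{H^4_4}<\infty$ (needed, among other things, to control $\tr\,|\wt\omega_{N,s}|$, which can exceed $N$ since the Weyl quantization of a positive function need not be a positive operator), and then extends to the stated hypotheses by convolving the initial Wigner transform with a Gaussian $g_k$ with $k=\eps^{-2}$, and separately bounding $\tr\,|\omega_N-\omega_N^k|$, the difference of Hartree evolutions, and the difference of Vlasov evolutions. Without this approximation, the Gronwall cannot close under $\|W_N\|_{W^{1,1}}\leq C$ alone. Your remark about the ``main obstacle'' with the quadratic weight is also somewhat off target: with your crude estimate $|\tr\,\mathcal{A}_{t,s}B|\leq\|\mathcal{A}_{t,s}\|_{\rm op}\|B\|_{\rm tr}$ no $(p,q)$ dependence would appear at all, so the tension you describe between $|q||k|$ and cubic growth never arises on your route. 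The quadratic weight is instead the natural output of the paper's Lemmas~\ref{lm:cm1}--\ref{lm:cm2}, which you would need to prove (or replace) to make the Gronwall close.
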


Notice that the expectation of the observable appearing in (\ref{eq:main3}) can also be expressed in terms of Wigner transforms. In fact, for any fermionic operator $\omega_N$, we find 
\[ \begin{split} 
\tr \, e^{i p \cdot x + q \cdot \eps \nabla} \omega_N &= \int dx\, e^{i/2 \eps p \cdot q}  e^{ip \cdot x} \omega_N \left( x-\eps q ; x \right) \\ &= N \int dx dv\, W_N (x,v) e^{ip \cdot x} e^{i q \cdot v} = N \widehat{W}_{N} (p,q) \end{split} \] 
Hence (\ref{eq:main3}) can be translated into the bound
\[ \Big| \widehat{W}_{N,t} (p,q) - \widehat{\wt{W}}_{N,t} (p,q) \Big| \leq C \eps  (1+|p|+|q|)^2 e^{C|t|} \]
where we recall that $W_{N,t}$ is the Wigner transform of the solution $\omega_{N,t}$ of the Hartree equation while $\wt{W}_{N,t}$ is the solution of the Vlasov equation with initial data $W_N$. 

If the sequence $W_N$ has a limit $W_0$, a probability density on phase-space, then one can also compare the Fourier transform of $W_{N,t}$ with the solution $W_t$ of the Vlasov equation with initial data $W_0$. 
 
\begin{theorem}\label{thm:new2}
Let $V \in L^1 (\bR^3)$ satisfy (\ref{eq:ass-pot}). Let $\omega_N$ be a sequence of reduced densities on $L^2 (\bR^3)$, with $\tr \, \omega_N = N$, $0 \leq \omega_N \leq 1$ and such that 
\[ \tr \, | [ x,\omega_N ]| \leq C N \eps, \quad \tr \, |[\eps \nabla , \omega_N]| \leq C N\eps \]
Denote by $W_N \in L^1 (\bR^3 \times \bR^3)$ the Wigner transform of $\omega_N$. We assume that $\| W_N \|_{W^{1,1}} \leq C$ uniformly in $N$.  

Furthermore, let $W_0 \in W^{1,1} (\bR^3 \times \bR^3)$ be a probability density, such that 
\[ \| W_N - W_0 \|_1 \leq \kappa_N \]
for a sequence $\kappa_N$ with $\kappa_N \to 0$ as $N \to \infty$. 

Let $\omega_{N,t}$ be the solution of the Hartree equation (\ref{eq:hartree1}) with initial data 
$\omega_N$ and let $W_{N,t}$ be the Wigner transform of $\omega_{N,t}$. On the other hand, let $W_t$ denote the solution of the Vlasov equation with initial data $W_0$. Then we have
\[ \sup_{p,q} \frac{1}{(1+|p|+|q|)^2} \left| \widehat{W}_{N,t} (p, q) - \widehat{W}_t (p,q) \right| \leq C \left(\eps + \kappa_N \right) e^{C|t|}  \]
\end{theorem}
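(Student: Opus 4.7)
The strategy is to insert, as intermediary, the solution $\wt W_{N,t}$ of the Vlasov equation with initial data $W_N$ (which is exactly the $\wt W_{N,t}$ of Theorem~\ref{thm:main2}), and to split in Fourier space
\[
\widehat W_{N,t}(p,q) - \widehat W_t(p,q) = \bigl(\widehat W_{N,t}(p,q) - \widehat{\wt W}_{N,t}(p,q)\bigr) + \bigl(\widehat{\wt W}_{N,t}(p,q) - \widehat W_t(p,q)\bigr).
\]
The hypotheses on $\omega_N$ are precisely those of Theorem~\ref{thm:main2}, so using the identity $\tr\, e^{ip\cdot x + q\cdot\eps\nabla}\,\omega_{N,t} = N\widehat W_{N,t}(p,q)$ recorded right after that theorem (and the analogous one for $\wt\omega_{N,t}$), Theorem~\ref{thm:main2} immediately yields
\[
\bigl|\widehat W_{N,t}(p,q) - \widehat{\wt W}_{N,t}(p,q)\bigr| \leq C\,\eps\,(1+|p|+|q|)^2\, e^{C|t|}.
\]
Thus all quantum content is already absorbed in Theorem~\ref{thm:main2}, and what remains is a purely classical stability estimate between two Vlasov solutions with different initial data.

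For the second bracket, I would argue $|\widehat g(p,q)| \leq \|g\|_1$ and then prove the $L^{1}$-stability bound $\|\wt W_{N,t}-W_t\|_1 \leq C\kappa_N\, e^{C|t|}$. Setting $g_t := \wt W_{N,t}-W_t$ and subtracting the two Vlasov equations,
\[
\partial_t g_t + 2v\cdot\nabla_x g_t - \nabla(V*\wt\rho_t)\cdot \nabla_v g_t = \nabla\bigl(V*(\wt\rho_t-\rho_t)\bigr)\cdot \nabla_v W_t.
\]
The left-hand side is transport along the divergence-free Hamiltonian vector field $(2v,-\nabla(V*\wt\rho_t))$, so it preserves the $L^1$ norm; only the source contributes, and its $L^1$ norm is bounded by
\[
\bigl\|\nabla(V*(\wt\rho_t-\rho_t))\bigr\|_\infty\, \|\nabla_v W_t\|_1 \leq C\,\|g_t\|_1\,\|W_t\|_{W^{1,1}},
\]
using $\|\wt\rho_t-\rho_t\|_1 \leq \|g_t\|_1$ and the hypothesis $\int|\widehat V(p)|(1+|p|^3)dp<\infty$, which simultaneously controls $\|\nabla V\|_\infty$ and $\|\nabla^2 V\|_\infty$. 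Applying the same transport technique to $\nabla_x W_t$ and $\nabla_v W_t$ (whose equations are first-order transport equations with sources of size $\|\nabla^2(V*\rho_t)\|_\infty\|\nabla_v W_t\|_1$ and $\|\nabla_x W_t\|_1$ respectively) propagates $\|W_t\|_{W^{1,1}} \leq C e^{C|t|}$, and Gronwall closes the loop with initial data controlled by $\|g_0\|_1 = \|W_N-W_0\|_1 \leq \kappa_N$. Well-posedness of the Vlasov equation for $\wt W_{N,t}$, which is not a probability density, is ensured by the adaptation of Dobrushin's argument in Appendix~\ref{app:dobr}.

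The main obstacle is this Vlasov-to-Vlasov $L^1$-stability: the source in the $g_t$-equation involves $\nabla_v W_t$, which is not conserved by the Vlasov flow, so one has to propagate a first derivative of the full Vlasov solution in parallel and feed the growth bound into Gronwall. The hypothesis on $\widehat V$, yielding boundedness of $\nabla^2 V$, is exactly what allows this propagation to close. Once the two pieces are assembled, dividing by $(1+|p|+|q|)^2$ (the Theorem~\ref{thm:main2} piece already carries this factor, the stability piece is bounded by $\|g_t\|_1$ uniformly in $(p,q)$ and hence a fortiori by a $(1+|p|+|q|)^2$-weighted quantity) produces the claimed inequality.
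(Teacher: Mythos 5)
Your decomposition---inserting the Vlasov solution $\wt W_{N,t}$ with initial data $W_N$ as intermediary, handling $\widehat W_{N,t}-\widehat{\wt W}_{N,t}$ via Theorem~\ref{thm:main2} and the identity $\tr\,e^{ip\cdot x+q\cdot\eps\nabla}\omega_{N,t}=N\widehat W_{N,t}(p,q)$, and reducing the remainder to an $L^1$-stability estimate for Vlasov---is precisely the paper's decomposition. Where you genuinely diverge is in how the classical estimate $\|\wt W_{N,t}-W_t\|_1\leq Ce^{C|t|}\kappa_N$ is proved. The paper simply refers to the \emph{Lagrangian} argument from Step~3 of the proof of Theorem~\ref{cor:relaxed-main}: compare the characteristic flows of the two Vlasov equations, propagate their first derivatives, control the Jacobian of an interpolated flow via a short-time bootstrap, and change variables. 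You instead propose an \emph{Eulerian} argument: write the transport equation satisfied by $g_t=\wt W_{N,t}-W_t$, observe that the homogeneous transport along the divergence-free Hamiltonian field $(2v,-\nabla(V*\wt\rho_t))$ leaves $\|g_t\|_1$ invariant, bound the source by $\|\nabla V\|_\infty\|g_t\|_1\|\nabla_v W_t\|_1$, and close with Gronwall after a parallel transport estimate propagating $\|W_t\|_{W^{1,1}}\leq Ce^{C|t|}$ (using $\|\nabla^2 V\|_\infty<\infty$, as guaranteed by (\ref{eq:ass-pot})). Both routes rely on the same hypotheses and give the same rate; yours is conceptually tighter and avoids the Jacobian bootstrap, at the price that the Eulerian manipulation is formal: in the setting of Appendix~\ref{app:dobr} the solutions are a priori only finite signed measures, so the $\frac{d}{dt}\|g_t\|_1$ computation should be read as an a priori estimate on smooth approximating solutions (or else justified via the Duhamel-along-characteristics representation you were trying to sidestep). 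With that caveat, the proposal is correct.
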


{\it Remark.} A physically interesting example of sequence of initial data satisfying the assumptions of Theorem \ref{thm:new2} can be constructed also here with coherent states. Similarly to (\ref{eq:coherent}), we consider the sequence of fermionic reduced densities
\begin{equation}\label{eq:omegaN} \omega_N (x;y) = \int dr dp \, M(r,p) f_{rp} (x) \bar{f}_{rp} (y) \end{equation}
with a probability density $M \in W^{1,1} (\bR^3 \times \bR^3)$, the coherent states 
\[ f_{r,p} (x) = \eps^{-3/2} e^{-ip \cdot x / \eps} g (x-r) \]
and the Gaussian function $g(x) = (2\pi \delta^2)^{-3/4} e^{-x^2/2\delta^2}$. 
We notice that
\[ \begin{split} [x,\omega_N] (x;y) &= \eps \int dr dp \, (\nabla_p M)(r,p) f_{rp} (x) \bar{f}_{rp} (y) \\
[ \eps \nabla, \omega_N] (x;y) &= \eps \int dr dp \, (\nabla_r M) (r,p) f_{rp} (x) \bar{f}_{rp} (y)  
\end{split} \]
Hence, we obtain
\[ \tr \, |[x,\omega_N]| \leq N \eps \| \nabla_v M \|_1, \quad \tr \, |[\eps \nabla, \omega_N]| \leq N \eps \| \nabla_r M \|_1 \]
Moreover, it is simple to check that the Wigner transform $W_N$ of $\omega_N$ satisfies $\| W_N \|_{W^{1,1}} \leq C$ uniformly in $N$ and (similarly to the remark after Theorem \ref{thm:new}), 
\[ \| W_N -M \|_1 \leq C (\delta+ \eps/ \delta) \| M \|_{W^{1,1}} \]

Choosing $\delta = \eps^{1/2}$, we find $\| W_N - M \|_1 \leq C \eps^{1/2}$. Theorem \ref{thm:new2} implies therefore that the Wigner transform $W_{N,t}$ of the solution of the Hartree equation with the initial data (\ref{eq:omegaN}) is such that  
\[ \sup_{p,q \in \bR^3} \frac{1}{(1+|q| + |p|)^2} \left| \widehat{W}_{N,t} (p,q) - \widehat{W}_t (p,q) \right| \leq C \eps^{1/2} \, e^{C|t|} \]
for all $t \in \bR$. Here $W_t$ denotes the solution of the Vlasov equation with the initial data given by the probability density $W_0 = M$. Notice that the assumption $M \in W^{1,1}$ is also compatible with   
$M$ being an approximate characteristic function; this observation is important at zero temperature, to describe systems at or close to the ground state. 

\bigskip

The rest of the paper is devoted to the proof of our five main theorems, appearing in Sections \ref{sect:proof1}-\ref{sect:proof2}. Appendix \ref{app:prop} contains an important lemma on the propagation of regularity for the solution of the Vlasov equation (\ref{eq:vlasov}), which is used in Sect. \ref{sect:proof1} and Sect. \ref{sec:proof1b}. Appendix \ref{sect:comm}, on the other hand, contains a bound on the propagation of certain semiclassical commutators, which plays a key role in Sect. \ref{sec:proof1b} and in Sect. \ref{sect:proof2}.

\section{Trace norm convergence for regular data}\label{sect:proof1}

Here we prove Theorem \ref{thm:main}. Recall that $\omega_{N,t}$ denotes the solution of the Hartree equation 
\[ i\eps \partial_t \omega_{N,t} = \left[ h_H (t) , \omega_{N,t} \right] \]
with the Hartree Hamiltonian 
\[ h_H (t) = -\eps^2 \Delta + (V * \rho_t) (x) \] and the 
density $\rho_t (x) = N^{-1} \omega_{N,t} (x;x)$. We introduce the two-parameter group of unitary transformations $\cU (t;s)$, generated by $h_H (t)$. In other words, $\cU (t;s)$ solves the equation 
\begin{equation}\label{eq:aux-dyn} i\eps \partial_t \cU (t;s) = h_H (t) \cU (t;s) 
\end{equation}
with $\cU (s;s) = 1$, for all $s \in \bR$. Notice that $\omega_{N,t} = \cU (t;0) \omega_N \cU^* (t;0)$.

On the other hand, $\wt{\omega}_{N,t}$ is the Wigner transform of the solution $\wt{W}_{N,t}$ of the Vlasov equation (\ref{eq:vlasov}). We find that $\wt{\omega}_{N,t}$ satisfies 
\[ i\eps \partial_t \wt{\omega}_{N,t} = \left[ -\eps^2 \Delta , \wt{\omega}_{N,t} \right] + A_t \]
where $A_t$ is the operator with the kernel 
\[ A_t (x;y) = \nabla (V*\wt{\rho}_t) \left( \frac{x+y}{2} \right) \cdot (x-y) \, \wt{\omega}_{N,t} (x;y) \]

We conjugate now the difference $\omega_{N,t} - \wt{\omega}_{N,t}$ with the unitary operator $\cU(t;0)$. Taking the time derivative, we find  
\be
\begin{split}
i\,\e\,\partial_t\,\mathcal{U}^*(t;0)\,&(\o_{N,t}-\wt\o_{N,t})\,\mathcal{U}(t;0) \\ = \; &-\mathcal{U}^*(t;0)\,[h_H (t),\o_{N,t}-\wt\o_{N,t}]\,\mathcal{U} (t;0)\\
& +\mathcal{U}^*(t;0)\,([h_H(t),\o_{N,t}]-[-\e^2\,\Delta,\wt\o_{N,t}] - A_t)\,\mathcal{U}(t;0)\\
= \; &\mathcal{U}^*(t;0)\,([V*\rho_t,\wt\o_{N,t}]-A_t)\,\mathcal{U}(t;0)\\
= \; &\mathcal{U}^*(t;0)\,([V*(\rho_t-\wt\rho_t),\wt\o_{N,t}]+ B_t)\,\mathcal{U}(t;0)
\end{split}
\ee
where $B_t$ denotes the operator with the integral kernel 
\begin{equation}\label{eq:Bt} B_t (x;y) = \left[ (V*\wt{\rho}_t) (x) - (V*\wt{\rho}_t)(y) - \nabla (V*\wt{\rho}_t) \left(\frac{x+y}{2} \right) \cdot (x-y) \right] \wt{\omega}_{N,t} (x;y)
\end{equation}
Integration in time gives (since, at time $t=0$, $\omega_{N,0} = \wt{\omega}_{N,0} = \omega_N$) 
\be\label{eq:omega-omegatilde}
\begin{split}
\mathcal{U}^*(t;0)\,(\o_{N,t}-\wt\o_{N,t})\,\mathcal{U}(t;0) = \;& \frac{1}{i\e}\int_0^t \mathcal{U}^*(t;s)\,[V*(\rho_s-\wt\rho_s),\wt\o_{N,s}]\,\mathcal{U}(t;s)\,ds \\&+\frac{1}{i\e}\int_0^t \mathcal{U}^*(t;s)\,B_s\,\mathcal{U}(t;s)\,ds
\end{split}
\ee
Taking the trace norm, we obtain 
\be\label{eq:trace-norm}
\tr \, |\o_{N,t}-\wt\o_{N,t}|\leq \frac{1}{\e}\int_0^t \tr \, |[V*(\rho_s-\wt\rho_s),\wt\o_{N,s}]|\,ds + \frac{1}{\e}\int_0^t \tr \, |B_s|\,ds \,. 
\ee
We will estimate the two terms in the right-hand side of (\ref{eq:trace-norm}) separately, and conclude by applying Gronwall's lemma.

\medskip

\noindent{\bf Estimate of the first term in (\ref{eq:trace-norm}).} We start by considering the first term on the r.h.s. of (\ref{eq:trace-norm}). To this end, we observe that
\be\label{eq:trace-norm2} 
\begin{split} \tr \,|[V*(\rho_s-\wt\rho_s),\wt\o_{N,s}]| &\leq \int dz |\rho_s (z) - \wt{\rho}_s (z)| \, \tr  \left|[ V(.-z), \wt{\omega}_{N,s}] \right| \\ &\leq \| \rho_s - \wt{\rho}_s \|_1 \sup_z \tr \, |[V(z-.), \wt{\omega}_{N,s}]| . \end{split}
\ee
We start by estimating the last term in the right-hand side of (\ref{eq:trace-norm2}). We have
\be \label{eq:thm1-c1}
\begin{split}
\tr\,|[V(\cdot - z),\wt\o_{N,s}]| &= \tr\,|(1-\e^2\Delta)^{-1}(1+x^2)^{-1}(1+x^2)(1-\e^2\Delta)[V(\cdot - z),\wt\o_{N,s}]|\\
&\leq \|(1-\e^2\Delta)^{-1}(1+x^2)^{-1}\|_{\text{HS}} \, \|(1+x^2)(1-\e^2\Delta)[V(\cdot - z),\wt\o_{N,s}]\|_{\text{HS}}\;.
\end{split}
\ee
An explicit computation shows that 
\[ \| (1-\eps^2 \Delta)^{-1} (1+x^2)^{-1} \|_\text{HS} \leq C \sqrt{N} \]
As for the operator $D := (1+x^2) (1-\eps^2 \Delta) [V(z-.), \wt{\omega}_{N,s}]$, it has the integral kernel 
\[ \begin{split} D(x;y) &= (1+x^2) (1-\eps^2 \Delta_x) (V(x-z) - V(y-z)) \, \wt{\omega}_{N,s} (x;y)  \\ &= N (1+x^2) (1-\eps^2 \Delta_x) (V(x-z) - V(y-z)) \int dv \, \wt{W}_{N,s} \Big( \frac{x+y}{2} , v \Big) e^{i v \cdot \frac{x-y}{\eps}} \end{split} \]
where we used the definition of $\wt{\omega}_{N,s}$ as the Weyl quantization of the solution $W_s$ of the Vlasov equation, with initial data $W_0$. Taking into account the fact that the Laplacian $\Delta_x$ can act on the potential $V(x-z)$, on the function $W_s$ or on the phase $e^{iv \cdot (x-y)/\eps}$, we obtain that 
\be\label{eq:all-terms}
\begin{split}
D(x;y) =\; & N(1+x^2)\,(V(x-z)-V(y-z))\,\int \wt{W}_{N,s}\Big( \frac{x+y}{2} , v \Big)\,e^{i\,v\cdot(x-y)/\e}\,dv\\
& - N\e^2(1+x^2)\,(\Delta V) (x-z)\,\int \wt{W}_{N,s}\Big( \frac{x+y}{2} , v \Big)\,e^{i\,v\cdot(x-y)/\e}\,dv\\
& - \frac{N\e^2}{4} (1+x^2)\,(V(x-z)-V(y-z))\,\int (\Delta_1 \wt{W}_{N,s})\Big( \frac{x+y}{2} , v \Big)\,e^{i\,v\cdot(x-y)/\e}\,dv\\
& + N(1+x^2)\,(V(x-z)-V(y-z))\,\int \wt{W}_{N,s}\Big( \frac{x+y}{2} , v \Big)\,v^2\,e^{i\,v\cdot(x-y)/\e}\,dv\\
& - \frac{N\e^2}{2} (1+x^2)\, (\nabla V)(x-z)\cdot \int (\nabla_1 \wt{W}_{N,s}) \Big( \frac{x+y}{2} , v \Big)\,e^{i\,v\cdot(x-y)/\e}\,dv\\
& - iN\e(1+x^2)\,(\nabla V) (x-z)\cdot \int \wt{W}_{N,s}\Big( \frac{x+y}{2} , v \Big)\,v\,e^{i\,v\cdot(x-y)/\e}\,dv\\
& - \frac{iN\e}{2} (1+x^2)\,(V(x-z)-V(y-z))\,\int (\nabla_1 \wt{W}_{N,s})\Big( \frac{x+y}{2} , v \Big)\,v\,e^{i\,v\cdot(x-y)/\e}\,dv\,.\\
=: \; & \sum_{j=1}^7 D_j (x;y)
\end{split}
\ee

\medskip

We estimate now the Hilbert-Schmidt norm of the different contributions on the r.h.s. of (\ref{eq:all-terms}). To control the term $D_1$, we expand
\[ \begin{split} D_1 (x;y) &= N (1+x^2) (V(x-z) - V(y-z)) \int \wt{W}_{N,s} \Big( \frac{x+y}{2} , v \Big) e^{iv \cdot (x-y)/\e} dv \\ &= N (1+x^2) (\nabla V) (\xi) \cdot (x-y) \int \wt{W}_{N,s}  \Big( \frac{x+y}{2} , v \Big) e^{iv \cdot (x-y)/\e} dv \\ &= iN\eps (1+x^2) (\nabla V) (\xi) \cdot \int (\nabla_2 \wt{W}_{N,s})  \Big( \frac{x+y}{2} , v \Big) e^{iv \cdot (x-y)/\e} dv \end{split} \]
for an appropriate $\xi$ on the segment between $x-z$ and $y-z$. Using the bound
\[ 1+x^2 \leq 1 +2 \left(\frac{x+y}{2}\right)^2 + \frac{\e}{2}^2\left(\frac{x-y}{\e}\right)^2
\]
and the assumption $V \in W^{2,\infty} (\bR^3)$  we get:
\begin{equation}\label{eq:A1-bd} \begin{split} \| D_1 \|_\text{HS}^2 \leq &\; C N^2 \eps^2 \int dx dy \Big[ 1+2\Big(\frac{x+y}{2}\Big)^2 + \frac{\e^2}{2}\Big(\frac{x-y}{\e}\Big)^2 \Big]^2 \Big| \int (\nabla_2 \wt{W}_{N,s})  \Big( \frac{x+y}{2} , v \Big) e^{iv \cdot (x-y)/\e} dv \Big|^2 \\ = &\; C N \eps^2 \int dX dr \left[ 1+X^2 +\e^2 r^2 \right]^2 \Big| \int (\nabla_2 \wt{W}_{N,s}) (X, v) e^{iv \cdot r} dv \Big|^2 \\
\leq &\; CN \eps^2 \int dX dv (1+X^2)^2 |\nabla_2 \wt{W}_{N,s} (X,v)|^2 + CN\eps^6 \int dX dv |\nabla_2^3 \wt{W}_{N,s} (X,v)|^2 \\ \leq &\; CN \eps^2 \| \wt{W}_{N,s} \|_{H^1_{4}}  + C N \eps^6 \| \wt{W}_{N,s} \|_{H^3} \end{split} 
\end{equation}
Similarly, we control the Hilbert-Schmidt norm of the second term on the r.h.s. of (\ref{eq:all-terms}):
\[ \begin{split} 
\| D_2 \|_\text{HS}^2 &\leq C N \eps^4 \int dX dr \left[ 1+ X^2 + \eps^2 r^2 \right]^2 \Big| \int \wt{W}_{N,s} (X,v) e^{iv\cdot r} dv \Big|^2 \\ &\leq C N \eps^4 \| \wt{W}_{N,s} \|_{H^0_{4}}^2 + CN \eps^8 \| \wt{W}_{N,s} \|_{H^2}^2 \end{split} \]
Proceeding analogously to bound the Hilbert-Schmidt norm of the other terms on the r.h.s. of (\ref{eq:all-terms}), we conclude that
\[ \| D \|_{\text{HS}} \leq C \sqrt{N} \left[ \eps \| \wt{W}_{N,s} \|_{H^1_{4}} + \eps^2 \| \wt{W}_{N,s} \|_{H^2_{4}} + \eps^3 \| \wt{W}_{N,s} \|_{H^3_{4}} + \eps^4 \| \wt{W}_{N,s} \|_{H^4_4} \right] \]

Proposition \ref{prop:regularity} allows us to control the weighted Sobolev norms of the solution $\wt{W}_{N,s}$ of the Vlasov equation by their initial values. We obtain
\[\| D \|_{\text{HS}} \leq C e^{C|s|} \sqrt{N} \left[ \eps \| W_N \|_{H^1_{4}} + \eps^2 \| W_N \|_{H^2_{4}} + \eps^3 \| W_N \|_{H^3_{4}} + \eps^4 \| W_N \|_{H^4_4} \right] \]
for a constant $C>0$, depending on $\|W_{N}\|_{H^{2}_{4}}$. Thus, from (\ref{eq:thm1-c1}), we finally find 
\[ \tr\,|[V(\cdot - z),\wt\o_{N,s}]| \leq C e^{C|s|} N \eps \left[ \| W_N \|_{H^1_4} + \eps \| W_N \|_{H^2_4}  + \eps^2 \| W_N \|_{H^3_{4}} + \eps^3 \| W_N \|_{H^4_4} \right] \]
Therefore, from (\ref{eq:trace-norm2}):
\be\label{eq:IplusII}
\begin{split}
\tr \,|[V*(\rho_s-\wt\rho_s),\wt\o_{N,s}]| &\leq \| \rho_{s} - \tilde\rho_{s} \|_{1}\tr\,|[V(\cdot - z),\wt\o_{N,s}]| \\
&\leq C e^{C|s|} N \eps \| \rho_{s} - \tilde\rho_{s} \|_{1} \| W_{N} \|_{H^{1}_{4}}\\
&\quad + C e^{C|s|} N \eps^{2} \| \rho_{s} - \tilde\rho_{s} \|_{1} \| \big[ \| W_N \|_{H^2_4}  + \eps \| W_N \|_{H^3_{4}} + \eps^2 \| W_N \|_{H^4_4} \big]\\
&\equiv \text{I} + \text{II}
\end{split}
\ee

Consider first $\text{I}$. We have
\[ \| \rho_s - \wt{\rho}_s \|_1 = \sup_{J \in L^\infty (\bR^3) : \| J \|_\infty \leq 1} \left| \int J(z) (\rho_s (z) - \wt{\rho}_s (z)) dz \right| \leq N^{-1} \sup_{J : \| J \| \leq 1} \left| \tr \, J (\omega_{N,s} -\wt{\omega}_{N,s}) \right| \]
where on the r.h.s. the supremum is taken over all bounded operator with operator norm lesser or equal than one. We conclude that
\[ \| \rho_s - \wt{\rho}_s \|_1 \leq N^{-1} \, \tr \, |\omega_{N,s} - \wt{\omega}_{N,s}| \]
Therefore,
\be\label{eq:boundI}
\text{I} \leq C e^{C|s|} \eps\, \tr\,|\omega_{N,s} - \tilde\omega_{N,s}| \| W_{N} \|_{H^{1}_{4}}
\ee
To bound $\text{II}$, we write:
\[
\| \rho_{s} - \tilde\rho_{s} \|_{1} \leq \| \rho_{s} \|_{1} + \| \tilde\rho_{s} \|_{1} = N^{-1}\tr\, \omega_{N,s} + \| \tilde\rho_{s} \|_{1} \leq 1 + \| \widetilde W_{N,s} \|_{1}
\]
Using that the Vlasov dynamics preserves the $L^{p}$ norms, we get:
\[
\| \widetilde W_{N,s} \|_{1} = \| W_{N} \|_{1} = \| (1 + x^{2} + v^{2})^{-2} (1 + x^{2} + v^{2})^{2} W_{N}  \|_{1} \leq C\| W_{N} \|_{H^{0}_{4}}
\]
and thus:
\be\label{eq:boundII}
\text{II} \leq C e^{C|s|} N \eps^{2} \| W_{N} \|_{H^{0}_{4}} \big[ \| W_N \|_{H^2_4}  + \eps \| W_N \|_{H^3_{4}} + \eps^2 \| W_N \|_{H^4_4} \big]
\ee
From (\ref{eq:trace-norm2}), (\ref{eq:IplusII}), (\ref{eq:boundI}), (\ref{eq:boundII}), we obtain:
\begin{equation}\label{eq:term1A}
\begin{split}  \frac{1}{\eps} \int_0^t \tr \, \left| \left[ V* (\rho_s - \wt{\rho}_s), \wt{\omega}_{N,s} \right] \right| ds \leq \; &C  \int_0^t e^{C|s|} \, \tr \, |\omega_{N,s} - \wt{\omega}_{N,s}| \, ds \\ &+ C e^{C|t|}  N \eps \,\left[ \|W_N\|_{H^2_4} + \eps \| W_N \|_{H^3_4} + \eps^2 \| W_N \|_{H^4_4} \right] 
\end{split}
\end{equation}
where the constant $C >0$ depends on $\| W_N \|_{H^2_4}$, but not on the higher Sobolev norms of $W_N$. This concludes the estimate of the first term in the right-hand side of (\ref{eq:trace-norm}).

\medskip

\noindent{\bf Estimate for the second term in (\ref{eq:trace-norm})}. To conclude and apply Gronwall's lemma, we need to bound the second term in (\ref{eq:trace-norm}). We find
\begin{equation}\label{eq:trBs} \tr \, |B_s| \leq \| (1-\eps^2\Delta)^{-1} (1+x^2)^{-1} \|_\text{HS} \, \| (1+x^2) (1-\eps^2 \Delta) B_s \|_\text{HS} \leq C \sqrt{N} \| (1+x^2) (1-\eps^2\Delta) B_s \|_\text{HS} 
\end{equation}
Let $U_s := V* \wt{\rho}_s$. The kernel of the operator $\wt{B} := (1-\eps^2 \Delta) B_s$ is given by 
\be\label{eq:all-terms2}
\begin{split}
& \wt{B} (x;y) = N \Big[U_s(x) - U_s(y) - \nabla U_s\Big(\frac{x+y}{2}\Big)\cdot(x-y)\Big]\int \wt{W}_{N,s}\Big(\frac{x+y}{2},v\Big)e^{i\,v\cdot\frac{(x-y)}{\e}}dv   \\
&\quad - N\e^2\Big[\Delta U_s(x) - \frac{1}{4}\Delta \nabla U_s\Big(\frac{x+y}{2}\Big)\cdot(x-y) - \frac{1}{2} \Delta U_s \Big(\frac{x+y}{2}\Big)\Big] \int \wt{W}_{N,s}\Big(\frac{x+y}{2},v\Big)e^{i\,v\cdot\frac{(x-y)}{\e}}dv \\
&\quad - \frac{N\e^2}{4} \Big[U_s(x) - U_s(y) - \nabla U_s\Big(\frac{x+y}{2}\Big)\cdot(x-y)\Big] \int (\Delta_1 \wt{W}_{N,s}) \Big(\frac{x+y}{2},v\Big)e^{i\,v\cdot\frac{(x-y)}{\e}}dv \\
&\quad + N \Big[U_s(x) - U_s(y) - \nabla U_s\Big(\frac{x+y}{2}\Big)\cdot(x-y)\Big] \int \wt{W}_{N,s}\Big(\frac{x+y}{2},v\Big) v^2 e^{i\,v\cdot \frac{(x-y)}{\e}}dv \\
&\quad - \frac{N\e^2}{2} \Big[\nabla U_s(x) - \frac{1}{2} \nabla^2 U_s\Big(\frac{x+y}{2}\Big) (x-y) - \nabla U_s\Big(\frac{x+y}{2}\Big)\Big] \int (\nabla_1 \wt{W}_{N,s}) \Big(\frac{x+y}{2},v\Big) e^{i\,v\cdot \frac{(x-y)}{\e}}dv \\
&\quad - N\e \Big[\nabla U_s(x) - \frac{1}{2} \nabla^2 U_s\Big(\frac{x+y}{2}\Big)(x-y) - \nabla U_s\Big(\frac{x+y}{2}\Big)\Big] \int \wt{W}_{N,s}\Big(\frac{x+y}{2},v\Big) v e^{i\,v\cdot \frac{(x-y)}{\e}}dv \\
&\quad - N\e \Big[U_s(x) - U_s(y) - \nabla U_s\Big(\frac{x+y}{2}\Big)\cdot(x-y) \Big] \int (v\cdot \nabla_1 \wt{W}_{N,s}) \Big(\frac{x+y}{2},v\Big) e^{i\,v\cdot \frac{(x-y)}{\e}}dv\\ &\quad =: \sum_{j=1}^7 \wt{B}_j (x;y)  
\end{split}
\ee
In the contributions $\widetilde B_1, \widetilde B_4, \widetilde B_6, \widetilde B_7$, we need to extract additional factors of $\eps$; the goal is to show that $\| (1 + x^{2}) \wt{B} \|_\text{HS} \leq C \sqrt{N} \eps^2$. To this end, we write  
\[ \begin{split} U_s (x) -& U_s (y) - \nabla U_s \left( \frac{x+y}{2} \right) \cdot (x-y) \\ &= \int_0^1 d\lambda \left[ \nabla U_s \big(\lambda x + (1-\lambda) y\big) - \nabla U_s \big((x+y)/2\big) \right] \cdot (x-y)  \\ &= \sum_{i,j=1}^3\int_0^1 d\lambda \int_0^1 d\mu\, \partial_i \partial_j U_s \Big(\mu (\lambda x + (1-\lambda) y) + (1-\mu) (x+y)/2\Big) (x-y)_i (x-y)_j\Big(\lambda-\frac{1}{2}\Big) \end{split} \]
and we estimate, using the assumption $V \in W^{2,\infty} (\bR^3)$ and integrating by parts, 
\[ |\wt{B}_1 (x;y)| \leq C N \eps^2 \sum_{i,j=1}^3 \left| \int \partial_{v_i} \partial_{v_j} \wt{W}_{N,s} \Big( \frac{x+y}{2} , v \Big)   e^{i v\cdot (x-y)/\eps} \right| \]
Hence, proceeding similarly as we did in (\ref{eq:A1-bd}), we get:
\[ \begin{split} \| (1 + x^{2}) \wt{B}_1 \|^2_{\rm HS} &\leq C N^2 \eps^4 \int dx dy \, (1+x^2)^2 \sum_{i,j=1}^3 \left|  \int \partial_{v_i} \partial_{v_j} \wt{W}_{N,s}  \Big( \frac{x+y}{2} , v \Big)  e^{i v\cdot (x-y)/\eps} \right|^2 \\ &= C N \eps^4 \int dX dr \left[ 1+ X^2 + \eps^2 r^2 \right]^2 \left| \int \partial_{v_i} \partial_{v_j} \wt{W}_{N,s}  \left(X, v \right) e^{i v\cdot r} \right|^2 \\ &\leq C N\eps^4 \| \wt{W}_{N,s} \|_{H^2_{4}}^2 + CN \eps^8 \| \wt{W}_{N,s} \|_{H^4}^2 \end{split}\]
The Hilbert-Schmidt norm of the other terms in the right-hand side of (\ref{eq:all-terms2}) can be estimated in a similar way. To do this, it is useful to notice that:
\[
\| \nabla^3 U \|_\infty = \| \nabla^{2} V*\nabla\wt{\rho}_{s} \|_{\infty} \leq \| \nabla^2 V \|_\infty \| \nabla \wt{\rho}_s \|_1 \leq C e^{C|s|}
\]
where we used that $V \in W^{2,\infty} (\bR^3)$, and that $\| \nabla \wt{\rho}_{s} \|_{1} \leq C\|\widetilde W_{N,s}\|_{H^{1}_{4}}$. The final result is:
\[ \| (1 + x^{2}) \wt{B} \|_\text{HS} \leq C \sqrt{N} \left[ \eps^2 \| \wt{W}_{N,s} \|_{H^2_{4}} + \eps^3  \| \wt{W}_{N,s} \|_{H^3_{4}} + \eps^4 \| \wt{W}_{N,s} \|_{H^4_{4}} + \eps^5 \| \wt{W}_{N,s} \|_{H^5_4}  \right] \]
Therefore, by Proposition \ref{prop:regularity},
\[ \| (1 + x^{2}) \wt{B} \|_\text{HS} \leq C e^{C|s|} \sqrt{N} \eps^2 \left[ \|W_N\|_{H_4^2} + \eps  \| W_N \|_{H^3_4} + \eps^2 \| W_N \|_{H^4_4} + \eps^3 \| W_N \|_{H_4^5}  \right] \]
where the constant $C >0$ depends on $\| W_N \|_{H^2_4}$ but not on the higher Sobolev norms of $W_N$. This gives:
\be\label{eq:estimateB}
\tr \, |B_s| \leq C e^{C|s|} N \eps^2 \left[ \|W_N\|_{H_4^2}  + \eps \| W_N \|_{H^3_{4}} + \eps^2 \| W_N \|_{H^4_4} + \eps^3 \| W_N \|_{H_4^5} \right]
\ee

\medskip

\noindent{\bf Proof of Theorem \ref{thm:main}.} We are now in the position to conclude the proof. Inserting (\ref{eq:term1A}), (\ref{eq:estimateB}) into (\ref{eq:trace-norm}), we get:
\[ \begin{split} \tr \, | \omega_{N,t} - \wt{\omega}_{N,t} | \leq \; &C \int_0^t \tr \,e^{C|s|} \, |\omega_{N,s} - \wt{\omega}_{N,s}|  ds \\ & + C e^{C|t|} N \eps \left[ \|W_N\|_{H_4^2}  + \eps \sup_N \| W_N \|_{H^3_{4}} + \eps^2 \sup_N \| W_N \|_{H^4_4} + \eps^3 \sup_N \| W_N \|_{H_4^5} \right] \end{split} \]
Finally, Gronwall's lemma implies the desired bound
\[ 
\begin{split}
&\tr \, |\omega_{N,t} - \wt{\omega}_{N,t} | \\
&\quad\leq C N\eps \exp (C \exp (C|t|)) \left[ \sup_N \|W_N\|_{H_4^2}  + \eps \sup_N \| W_N \|_{H^3_{4}} + \eps^2 \sup_N \| W_N \|_{H^4_4} + \eps^3 \sup_N \| W_N \|_{H^5_4} \right] 
\end{split}
\]
with $C$ depending only on $\|W_N\|_{H_4^2}$.  This concludes the proof.
\qed 


\section{Hilbert-Schmidt norm convergence}
\label{sec:proof1b}

Here we prove Theorem \ref{cor:relaxed-main} and Theorem \ref{thm:new}. The proof of Theorem \ref{cor:relaxed-main} is based on an approximation argument, together with our previous result Theorem \ref{thm:main}. 

\medskip

\noindent{\bf Regularization of the initial data.} We start by approximating the initial  data $W_N$. For $k > 0$, we define 
\[ g_k (x,v)= \frac{k^3}{(2\pi)^3} e^{-\frac{k}{2} (x^2 + v^2)} \]
and
\[ W_N^k (x,v) = (W_N * g_k) (x,v) = \int dx' dv' g_k (x-x', v-v') W_N (x',v') \]
Then, we have $\| W_N^k \|_{H^5_4} < \infty$ for all $N \in \bN$. In fact, we find \begin{equation}\label{eq:bds-WNk}
\begin{split} 
\| W_N^k \|_{H^j_4} &\leq C\| W_N \|_{H^2_4} \quad \text{if $j \leq 2$ and } \\ \| W_N^k \|_{H^j_4} &\leq C k^{(j-2)/2} \| W_N \|_{H^2_4} \quad \text{for $j =3,4,5$}.
\end{split} \end{equation} Furthermore, we notice that 
\begin{equation}\label{eq:diff-Win}  
\| W_N - W_N^k \|_{H^s_a} \leq \frac{C}{\sqrt{k}} \| W_N \|_{H^{s+1}_a}
\end{equation}
for $s=0,1$ (with the convention $H^0 \equiv L^2$) and for $a\leq 4$. We denote by $\omega_N^k$ the Weyl quantization of $W_N^k$. We observe that 
\begin{equation}\label{eq:om-fermi} 
\begin{split} 
\omega_N^k (x;y) = \; &N \int dv \, W_N^k \Big( \frac{x+y}{2} , v \Big) e^{i v\cdot \frac{x-y}{\eps}} \\ = \; &\frac{Nk^3}{(2\pi)^3} \int dv dx' dv' \, e^{-\frac{k}{2} \left( \frac{x+y}{2} - x' \right)^2} e^{-\frac{k}{2} (v-v')^2} W_N (x',v') e^{iv \cdot \frac{x-y}{\eps}} \\
= \; & \frac{k^{3/2}}{(2\pi)^3} \int dw dx' \, e^{-\frac{k}{2} \left( \frac{x+y}{2} - x' \right)^2} e^{-w^2/2} \omega_N \left(x' + \frac{x-y}{2}, x' - \frac{x-y}{2} \right) e^{iw \cdot \frac{x-y}{\sqrt{k} \eps}} \\
= \; & \frac{1}{(2\pi)^3} \int dw dz \, e^{-z^2/2} e^{- w^2/2} \omega_N \left( x + \frac{z}{\sqrt{k}} , y + \frac{z}{\sqrt{k}} \right) e^{i w \cdot \frac{x-y}{\sqrt{k} \eps}} \\ = \; & \frac{1}{(2\pi)^3} \int dw dz \, e^{-z^2/2} e^{-w^2/2} \left[ e^{iw \cdot \frac{x}{\sqrt{k}\eps}} e^{\frac{z}{\sqrt{k}} \cdot \nabla} \omega_N e^{-\frac{z}{\sqrt{k}} \cdot \nabla} e^{-i w \cdot \frac{x}{\sqrt{k} \eps}} \right] (x;y) \end{split} \end{equation}
Hence $\omega_N^k$, as a convex combination of fermionic reduced densities, is again a fermionic reduced density (i.e. $0 \leq \omega_N^k \leq 1$ and $\tr\, \omega_N^k = N$). {F}rom (\ref{eq:diff-Win}), we find 
\begin{equation}\label{eq:diff-ini} \| \omega_N - \omega_N^k \|_\text{HS} = \sqrt{N} \| W_N - W_N^k \|_2 \leq \sqrt{\frac{N}{k}} \| W_N \|_{H^1} 
\end{equation}

We denote by $\omega_{N,t}$ and $\omega_{N,t}^k$ the solution of the Hartree equation with initial data $\omega_N$ and, respectively, $\omega_{N}^k$. On the other hand, $\wt{\omega}_{N,t}$ and $\wt{\omega}_{N,t}^k$ will denote the Wigner transform of the solutions $\wt{W}_{N,t}$ and $\wt{W}_{N,t}^k$ of the Vlasov equation with initial data $W_N$ and, respectively, $W_N^k$. Notice that, since the Vlasov equation preserves all the $L^{p}$ norms, $\|\wt{\omega}_{N,t} \|_\text{HS} = N^{1/2} \| \wt{W}_{N,t} \|_2 = N^{1/2} \| W_N \|_2$ and, similarly, $\| \wt{\omega}_{N,t}^k \|_\text{HS} = N^{1/2} \| W_N^k \|_2$, for all $t \in \bR$.

We need to compare $\omega_{N,t}$ with $\wt{\omega}_{N,t}$. To this end, we will first compare $\omega_{N,t}^k$ with $\wt{\omega}_{N,t}^k$. Later, we will have to compare $\omega_{N,t}$ with $\omega_{N,t}^k$ and, separately, $\wt{\omega}_{N,t}$ with $\wt{\omega}_{N,t}^k$. 

\medskip

\noindent{\bf Comparison of $\omega_{N,t}^{k}$ with $\wt{\omega}_{N,t}^{k}$.} To begin, we prove that there exists a constant $C>0$ such that 
\be\label{eq:HS-step1} 
\| \omega_{N,t}^k - \wt{\omega}_{N,t}^k \|_\text{HS} \leq C N^{1/2} \eps  \, \exp (C \exp (C|t|)) \big[1+ \sum_{j=1}^3(\eps \sqrt{k})^j \big]
\ee
The constant depends on $\sup_{N}\| W_{N} \|_{H^{2}_{4}}$, but not on the higher Sobolev norms. To show (\ref{eq:HS-step1}), we shall use our previous result, Theorem \ref{thm:main}. In fact, from (\ref{eq:main}), (\ref{eq:bds-WNk}) we find
\begin{equation}\label{eq:tr-bd} \begin{split} \| \omega_{N,t}^k - \wt{\omega}_{N,t}^k \|_\text{tr} \leq \; &C N\eps \, \exp (C \exp (C|t|)) \left( \| W_N^k \|_{H^2_4} + \sum_{\beta=1}^3 \eps^\beta\sup_N \| W_N^k \|_{H^{\beta+2}_4} \right) \\ \leq \; &CN \eps \, \exp (C \exp (C|t|)) \left(1+\sum_{j=1}^3(\eps \sqrt{k})^j \right) \end{split}
\end{equation}
for a constant $C > 0$ depending only on $\sup_N \| W_N \|_{H^2_4}$. We shall use this result to prove an estimate for the Hilbert-Schmidt norm of the difference of the two evolutions. Proceeding as in (\ref{eq:aux-dyn}) -- (\ref{eq:trace-norm}), we have:
\begin{equation}\label{eq:HS-bd} \| \omega_{N,t}^k - \wt{\omega}_{N,t}^k \|_\text{HS} \leq \frac{1}{\eps} \int_0^t ds \, \left\| [ V* (\rho_s^k - \wt{\rho}_s^k) , \wt{\omega}_{N,s}^k ] \right\|_\text{HS}  + \frac{1}{\eps} \int_0^t ds \, \| B^k_s \|_\text{HS} 
\end{equation}
where $B^k_s$ is the operator with the integral kernel
\[ B^k_s (x;y) = \left[ V*\wt{\rho}_s^k (x) - V* \wt{\rho}_s^k (y) - \nabla (V* \wt{\rho}_s^k) \left( \frac{x+y}{2} \right) \cdot (x-y) \right] \wt{\omega}^k_{N,t} (x;y) \]
We shall estimate the two terms in (\ref{eq:HS-bd}) separately. We start with the first. We have:
\[ \begin{split} \left\| [ V* (\rho_s^k - \wt{\rho}_s^k), \wt{\omega}_{N,s}^k ] \right\|_\text{HS} &\leq \int dz |\rho_s^k (z) - \wt{\rho}_s^k (z)| \cdot \left\| [ V(z-.), \wt{\omega}_{N,s}^k ] \right\|_\text{HS} \\ &\leq \| \rho_s^k - \wt{\rho}_s^k \|_1 \int dp \, |\wt{V} (p)| \left\| [ e^{ip \cdot x}, \wt{\omega}_{N,s}^k ] \right\|_\text{HS} \end{split} \]
Using $\| \rho_s^k - \wt{\rho}_s^k \|_1 \leq N^{-1} \| \omega^k_{N,t} - \wt{\omega}_{N,t}^k \|_\text{tr}$, the identity
\[ [e^{ip \cdot x}, \wt{\omega}_{N,s}^k ] = \int_0^1 d\lambda \, e^{i\lambda p \cdot x} \left[ i p \cdot x, \wt{\omega}_{N,s}^k \right] e^{i(1-\lambda) p \cdot x} \]
and the assumption (\ref{eq:pot-hyp}) on the potential, we conclude that
\begin{equation}\label{eq:comm-HS}
\left\| [ V* (\rho_s^k -\wt{\rho}_s^k), \wt{\omega}_{N,s}^k ] \right\|_\text{HS} \leq C N^{-1} \| \omega_{N,s}^k - \wt{\omega}_{N,s}^k \|_\text{tr} \| [x,\wt{\omega}_{N,s}^k] \|_\text{HS} 
\end{equation} 
We shall use the regularity of $W^{k}_{N,t}$ to extract a factor $\e$ from the commutator in (\ref{eq:comm-HS}). We have:
\[ [x,\wt{\omega}_{N,s}^k ] (x;y) = (x-y) \, \int dv\,  \widetilde W^k_{N,s} \Big( \frac{x+y}{2} , v \Big) e^{i v\cdot \frac{x-y}{\eps}}  = \eps \int dv \, \nabla_v \widetilde{W}_{N,s}^k \Big( \frac{x+y}{2} , v \Big) e^{iv \cdot \frac{x-y}{\eps}} \]
and thus, similarly to (\ref{eq:commgrad2}),
\[ \big| [x, \wt{\omega}_{N,s}^k ] \big\|_\text{HS} = \eps N^{1/2} \big\| \nabla_v \wt{W}_{N,s}^k \big\|_2 \leq C e^{C|s|} \eps N^{1/2} \| W_N \|_{H^1} \]
The second inequality follows from the propagation of regularity for solutions of the Vlasov equation, proven in Proposition \ref{prop:regularity}. Inserting the last bound and (\ref{eq:tr-bd}) in (\ref{eq:comm-HS}), we obtain
\begin{equation}\label{eq:sec-2} \left\| [ V * (\rho_s^k - \wt{\rho}_s^k), \wt{\omega}_{N,s}^k ] \right\|_\text{HS} \leq C N^{1/2} \eps^2 \, \exp (C \exp (C|t|)) \, \left(1+\sum_{j=1}^3(\eps \sqrt{k})^j \right)
\end{equation}
which concludes the estimate for the first term in (\ref{eq:HS-bd}). Let us now consider the second term in (\ref{eq:HS-bd}). We have:
\[ \begin{split}
\| B_s \|_\text{HS}^2 &= \int dx dy \left| (V*\wt{\rho}_s^k) (x) - (V*\wt{\rho}_s^k)(y) - \nabla (V* \wt{\rho}_s^k) \left( \frac{x+y}{2} \right) \cdot (x-y) \right|^2  |\wt{\omega}_{N,s}^k (x;y)|^2 \\ &= \int dx dy \, |\wt{\omega}_{N,s}^k (x;y)|^2 |x-y|^2 \, \left| \int_0^1 d\lambda \, \left[ \nabla (V * \wt{\rho}_s^k) (\lambda x + (1-\lambda)y) - \nabla (V*\wt{\rho}_s^k)((x+y)/2) \right] \right|^2 
\\ &\leq \int dx dy \, |\wt{\omega}_{N,s}^k (x;y)|^2 |x-y|^4 \\ &\hspace{1cm} \times \left| \int_0^1 d\lambda \int_0^1 d\mu \, (\lambda -1/2) \nabla^2 (V* \wt{\rho}_s^k) (\mu (\lambda x + (1-\lambda) y) + (1-\mu) (x+y)/2) \right|^2 
\\ &\leq C \int dx dy\, |x-y|^4 |\wt{\omega}_{N,s}^k (x;y)|^2  \end{split} \] 
using the assumption (\ref{eq:pot-hyp}).  Since
\[ (x-y)^2 \wt{\omega}_{N,s}^k (x;y) = -\eps^2 \int dv \, \Delta _v \wt{W}_{N,s}^k \Big( \frac{x+y}{2} , v \Big) e^{iv \cdot \frac{x-y}{\eps}} \]
we find, similarly to (\ref{eq:commgrad2}),
\[ \| B_s \|_\text{HS}^2 \leq C N \eps^4 \| \Delta_v \wt{W}_{N,s}^k \|_2^2 \leq C e^{C |s|} \eps^4 N \|W_N^k \|_{H^2}^2\leq C e^{C |s|} \eps^4 N \]
where we used again Proposition \ref{prop:regularity}. This concludes the estimate of the second term in (\ref{eq:HS-bd}). Therefore, plugging the estimates (\ref{eq:sec-2}), (\ref{eq:commgrad2}) into (\ref{eq:HS-bd}), we have:
\[ \| \omega_{N,t}^k - \wt{\omega}_{N,t}^k \|_\text{HS} \leq C N^{1/2} \eps \, \exp (C \exp (C|t|))\, (1+\sum_{j=1}^3(\eps \sqrt{k})^j) \]
as claimed.

\medskip

\noindent{\bf Comparison of $\omega_{N,t}^{k}$ with $\omega_{N,t}$.} The next step is to compare the Hartree dynamics of the regularized initial data with the Hartree dynamics of the original data.
Our goal is to show that: 
\be\label{eq:HS-step2} 
\| \omega_{N,t} - \omega_{N,t}^k \|_\text{HS} \leq C e^{C|t|} N^{1/2} \left( \eps + \frac{1}{\sqrt{k}} \right)
\ee
for a suitable constant $C>0$, dependent on $\sup_{N}\|W_{N}\|_{H^{2}_{4}}$ but not on the higher Sobolev norms. 

Let $\cU (t;s)$ be the unitary group generated by $h_H (t) = -\eps^2 \Delta + V*\rho_t$, with $\rho_t (x) = N^{-1} \omega_{N,t} (x;x)$. From 
\[ i\eps \partial_t \cU^* (t;0)\, \omega_{N,t}^k \, \cU (t;0) = -\cU^* (t;0) \left[ V* (\rho_t - \rho_t^k), \omega_{N,t}^k \right] \cU (t;0) \]
we have:
\[ \begin{split} \omega_{N,t} - \omega_{N,t}^k =\; &  \cU (t;0) \left( \omega_N - \cU^* (t;0) \omega_{N,t}^k \cU (t;0) \right) \cU^* (t;0) \\ = \; & \cU (t;0) (\omega_N - \omega_N^k) \cU^* (t;0) + \frac{1}{i\eps} \int_0^t ds \, \cU (t;s) \left[ V* (\rho_s - \rho_s^k) , \omega_{N,s}^k \right] \cU^* (t;s) \end{split} \]
Hence
\be\label{eq:eipxomega-omegak} 
\begin{split} \| \omega_{N,t} - \omega_{N,t}^k \|_\text{HS} \leq \; &\| \omega_N - \omega_N^k \|_\text{HS} + \frac{1}{N\eps} \int_0^t ds \int dp \, |\widehat{V} (p)| \, \left| \tr \, e^{-ip \cdot x} (\omega_{N,s} - \omega_{N,s}^k) \right| \, \| [e^{ip \cdot x} , \omega_{N,s}^k ] \|_\text{HS}  
\end{split}
\ee
We start by estimating the commutator in the right-hand side. We have 
\[  [e^{ip \cdot x} , \omega_{N,s}^k ]  = \int_0^1 e^{i\lambda p \cdot x} [ip \cdot x, \omega_{N,s}^k] e^{i(1-\lambda) p \cdot x} \]
By Proposition \ref{prop:prop-comm}, it follows that:
\[ \| [e^{i p \cdot x} , \omega_{N,s}^k ] \|_\text{HS} \leq |p| \| [x, \omega_{N,s}^k ] \|_\text{HS} \leq C|p|e^{C|s|} \left( \| [x, \omega_{N}^k] \|_\text{HS} + \| [\eps \nabla , \omega_{N}^k ] \|_\text{HS} \right) \]
Since 
\[ \begin{split} \| [x, \omega_N^k ] \|_\text{HS} &= \eps N^{1/2} \| \nabla_v W^k_0 \|_2 \leq \eps N^{1/2} \| W_0 \|_{H^1} 
\\ \| [\eps \nabla, \omega_N^k ] \|_\text{HS} &= \eps N^{1/2} \| \nabla_x W_0^k \|_2 \leq \eps N^{1/2} \| W_0 \|_{H^1} \end{split} \]
we conclude that 
\be\label{eq:commest} 
\| [e^{i p \cdot x} , \omega_{N,s}^k ] \|_\text{HS} \leq C N^{1/2} \eps |p| e^{C|s|}
\ee
Then, we are left with estimating the trace on the right-hand side of (\ref{eq:eipxomega-omegak}). To do this, we shall use the following lemma.

\begin{lemma} \label{lem:eipx}
Under the same assumptions of Theorem \ref{cor:relaxed-main}, there exists a constant $C>0$, only depending on $\sup_{N}\| W_{N} \|_{H^{2}_{4}}$ but not on the higher Sobolev norms, such that
\begin{equation}\label{eq:supptr0} 
\sup_{p \in \bR^3} \frac{1}{1+|p|} \left| \tr \, e^{ip \cdot x} (\omega_{N,t} - \omega_{N,t}^k) \right| \leq C e^{C|t|}N \left(\frac{1}{\sqrt{k}} + \eps\right)
\end{equation}
\end{lemma}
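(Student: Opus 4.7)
The plan is a Duhamel-plus-Gronwall argument for $F_t(p):=\tr\,e^{ip\cdot x}(\omega_{N,t}-\omega_{N,t}^k)=N\bigl(\widehat{\rho_t}(p)-\widehat{\rho^k_t}(p)\bigr)$, closing an integral inequality for $\Phi(t):=\sup_p|F_t(p)|/(1+|p|)$. First, I conjugate $A_t:=\omega_{N,t}-\omega_{N,t}^k$ by the Hartree propagator $\cU(t;s)$ generated by $h_H(t)=-\e^2\Delta+V*\rho_t$; since the difference satisfies $i\e\partial_t A_t=[h_H(t),A_t]+[V*\delta\rho_t,\omega^k_{N,t}]$ with $\delta\rho_t=\rho_t-\rho^k_t$, Duhamel yields
\[F_t(p)=\tr\,O_{t,0}(p)(\omega_N-\omega_N^k)+\tfrac{1}{i\e}\int_0^t\tr\,O_{t,s}(p)[V*\delta\rho_s,\omega^k_{N,s}]\,ds,\]
where $O_{t,s}(p):=\cU^*(t;s)\,e^{ip\cdot x}\,\cU(t;s)$.

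For the source term I split $\tr O_{t,0}(p)(\omega_N-\omega_N^k)=\tr e^{ip\cdot x}(\omega_N-\omega_N^k)+\tr(O_{t,0}(p)-e^{ip\cdot x})(\omega_N-\omega_N^k)$. The first summand equals $N\widehat{\delta\rho_0}(p)$; integrating out $v$ from $g_k$ shows $\rho^k_0=\rho_0*\tilde g_k$ with $\tilde g_k$ Gaussian of scale $1/\sqrt k$ and Fourier transform $e^{-|p|^2/(2k)}$, giving $|\widehat{\delta\rho_0}(p)|=|\widehat{\rho_0}(p)|\bigl(1-e^{-|p|^2/(2k)}\bigr)\le\min(1,|p|^2/(2k))$, whence $|\tr e^{ip\cdot x}(\omega_N-\omega_N^k)|/(1+|p|)\le CN/\sqrt k$ after a case split on $|p|\lessgtr\sqrt{2k}$. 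For the second summand I use the Duhamel representation $O_{t,0}(p)-e^{ip\cdot x}=(i\e)^{-1}\int_0^t\cU^*(s;0)[e^{ip\cdot x},-\e^2\Delta]\cU(s;0)\,ds$ together with the explicit commutator $[e^{ip\cdot x},-\e^2\Delta]=-\e^2|p|^2e^{ip\cdot x}+2i\e\,p\cdot e^{ip\cdot x}(\e\nabla)$: both pieces reduce to Hilbert-Schmidt pairings against $\omega_N-\omega_N^k$, bounded via \eqref{eq:diff-ini} and the propagation of the commutators $[x,\cdot]$ and $[\e\nabla,\cdot]$ provided by Proposition \ref{prop:prop-comm}, producing a contribution $\le CN(\e+1/\sqrt k)e^{C|t|}(1+|p|)$.

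For the Duhamel integral I Fourier-expand $V*\delta\rho_s(x)=(2\pi)^{-3}\int\hat V(q)e^{iq\cdot x}\widehat{\delta\rho_s}(q)\,dq$ and use $\widehat{\delta\rho_s}(q)=F_s(-q)/N$ to rewrite it as $\frac{1}{i\e(2\pi)^3N}\int_0^t\!\int\hat V(q)F_s(-q)K_{t,s}(p,q)\,dq\,ds$ with kernel $K_{t,s}(p,q):=\tr O_{t,s}(p)[e^{iq\cdot x},\omega^k_{N,s}]$. Writing $[e^{iq\cdot x},\omega^k_{N,s}]=i\int_0^1 e^{i\lambda q\cdot x}[q\cdot x,\omega^k_{N,s}]e^{i(1-\lambda)q\cdot x}d\lambda$ and using the identity $[x,\omega^k_{N,s}]=i\e\,\mathrm{Op}(\nabla_v W^k_{N,s})$ in conjunction with the weighted-resolvent factorization $\|A\|_{\rm tr}\le\|(1-\e^2\Delta)^{-1}(1+x^2)^{-1}\|_{\rm HS}\|(1+x^2)(1-\e^2\Delta)A\|_{\rm HS}$ from Section \ref{sect:proof1} and the regularity propagation of Proposition \ref{prop:regularity}, I obtain $|K_{t,s}(p,q)|\le CN\e\,|q|(1+|q|)^\alpha e^{C|s|}$ uniformly in $p$ for some fixed $\alpha\in\bN$; the assumption \eqref{eq:pot-hyp} then makes the $q$-integral convergent.

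Combining these estimates yields
\[\Phi(t)\le CNe^{C|t|}(\e+1/\sqrt k)+C\int_0^t e^{C|s|}\Phi(s)\,ds,\]
and Gronwall's lemma delivers the claim. The main obstacle is the uniform-in-$p$ kernel estimate: $O_{t,s}(p)$ is a unitary conjugate and hence not Hilbert-Schmidt, so pairing it with $[e^{iq\cdot x},\omega^k_{N,s}]$ by HS-HS duality is unavailable; instead the semiclassical smallness of $[e^{iq\cdot x},\omega^k_{N,s}]\sim\e|q|\sqrt N$ must be upgraded to a trace-norm bound via the weighted resolvent, and the $|p|$-growth of the corrections $O_{t,s}(p)-e^{ip\cdot x}$ must be kept to a single factor $1+|p|$ by exploiting that $e^{ip\cdot x}$ commutes with the multiplication operator $V*\rho_s$, so that only the kinetic part $-\e^2\Delta$ contributes to the commutator.
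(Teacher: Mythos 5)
Your handling of the Duhamel integral term is a legitimate alternative to the paper's: you bound $\tr\,|[e^{iq\cdot x},\omega^k_{N,s}]|$ in trace norm directly (weighted resolvent plus the Weyl-quantization structure and Propositions \ref{prop:prop-comm}, \ref{prop:regularity}), so only the unit operator norm of $O_{t,s}(p)$ enters and the kernel estimate is $p$-uniform; the paper instead moves the commutator onto $O_{t,s}(p)$ by cyclicity and invokes Lemma \ref{lm:cm1}, paying a factor $|p|$ that the weight $1/(1+|p|)$ absorbs. Your route is fine in spirit but needs care to keep the constant uniform in $k$ (the trace-norm bound on $[x,\omega^k_N]$ via the weighted resolvent brings in $\|W_N^k\|_{H^j_4}$ for $j\ge3$, which grow like $k^{(j-2)/2}$).

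The genuine gap is in the source term $\tr\,O_{t,0}(p)(\omega_N-\omega_N^k)$. You split off $e^{ip\cdot x}$ and Duhamel-expand $O_{t,0}(p)-e^{ip\cdot x}$; the kinetic commutator $[e^{ip\cdot x},-\e^2\Delta]=-\e^2|p|^2e^{ip\cdot x}+2i\e\, p\cdot e^{ip\cdot x}(\e\nabla)$, after the Duhamel prefactor $1/\e$, produces a contribution of size $\e|p|^2\int_0^t|\tr\,O_{s,0}(p)(\omega_N-\omega_N^k)|\,ds$. Nothing tames the $|p|^2$: the trace is only crudely bounded by $\tr|\omega_N-\omega_N^k|$ (which is $O(N)$, not small, under the $H^2_4$ hypothesis of Theorem \ref{cor:relaxed-main} — the trace-norm smallness \eqref{eq:tr-nr} is available only under the stronger assumption \eqref{eq:sc-in}), and dividing by $1+|p|$ still leaves $\e|p|$, unbounded in $p$. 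Restricting to $|p|\lesssim1/\e$ and using the trivial $2N/(1+|p|)\le 2N\e$ bound for large $|p|$ does not rescue the argument either, since the resulting Gronwall kernel $\e|p|^2\lesssim1/\e$ destroys the time dependence. Your closing remark claims the $|p|$-growth of $O_{t,s}(p)-e^{ip\cdot x}$ is ``kept to a single factor $1+|p|$'' because only the kinetic part contributes — but the kinetic part itself carries $|p|^2$, so the claim does not hold. The paper avoids this entirely: it never separates $e^{ip\cdot x}$ from $\cU^*(t;0)e^{ip\cdot x}\cU(t;0)$, keeping it as a norm-one observable, and instead decomposes $\omega_N-\omega_N^k$ via the momentum cutoff $\chi_{\lessgtr}(-\e^2\Delta/R)$ and the weight $(1+x^2)^{-1}(1+x^2)$, estimating the Weyl quantizations $D_1,D_2,D_3$ in Hilbert--Schmidt norm, controlling the high-frequency remainder by kinetic energy, and optimizing $R=\e^{-1}$ to obtain the genuinely $p$-uniform bound $CN(1/\sqrt{k}+\e)$.
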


Plugging (\ref{eq:commest}), (\ref{eq:supptr0}) into (\ref{eq:eipxomega-omegak}), and using the bound (\ref{eq:diff-ini}) on the difference of the initial data, we get 
\[ \| \omega_{N,t} - \omega_{N,t}^k \|_\text{HS} \leq C e^{C|t|} N^{1/2} \left( \frac{1}{\sqrt{k}} + \eps \right) \]
which concludes the proof of (\ref{eq:HS-step2}). Thus, we are left with the proof of Lemma \ref{lem:eipx}.

\begin{proof}[Proof of Lemma \ref{lem:eipx}] Consider, for an arbitrary $p \in \bR^3$,
\[ \tr\, e^{ip\cdot x} (\omega_{N,t} - \omega_{N,t}^k) = \tr \, \cU^*(t;0) e^{ip \cdot x} \cU (t;0) (\omega_N - \cU^* (t;0) \, \omega_{N,t}^k \, \cU (t;0)) \]
where, as in (\ref{eq:aux-dyn}), $\cU (t;0)$ denotes the unitary group generated by $h_H (t) = -\eps^2 \Delta + V*\rho_t$, with $\rho_t (x) = N^{-1} \omega_{N,t} (x;x)$. {F}rom 
\[ i\eps \partial_t \cU^* (t;0)\, \omega_{N,t}^k \, \cU (t;0) = -\cU^* (t;0) \left[ V* (\rho_t - \rho_t^k), \omega_{N,t}^k \right] \cU (t;0) \]
we find
\begin{equation}\label{eq:stab-expo}
\begin{split} \tr\, e^{ip\cdot x} & (\omega_{N,t}- \omega_{N,t}^k) \\ = \; &\tr \, \cU^* (t;0) e^{ip \cdot x} \cU (t;0) (\omega_N - \omega_N^k) \\ &-\frac{1}{i\eps} \int_0^t \tr \, \cU^* (t;s) e^{ip \cdot x} \cU (t;s) \, \left[ V* (\rho_s - \rho_s^k) , \omega_{N,s}^k \right]\,ds \\
= \;& \tr \, \cU^* (t;0) e^{ip \cdot x} \cU (t;0) (\omega_N - \omega_N^k) \\ &+\frac{1}{i\eps} \int_0^t \int d\wt{p} \, \widehat{V} (\wt{p}) \left( \widehat{\rho}_s (\wt{p}) - \widehat{\rho}_s^k (\wt{p}) \right) \tr \, \cU^* (t;s) e^{ip \cdot x} \cU (t;s) [ e^{i \wt{p} \cdot x}, \omega_{N,s}^k ] 
\end{split} \end{equation}
Since
\[ \widehat{\rho}_s (\wt{p}) - \widehat{\rho}_s^k (\wt{p}) = \frac{1}{N}  \tr \, e^{i\wt{p} \cdot x} (\omega_{N,s}- \omega_{N,s}^k) \]
we conclude that
\[ \begin{split} \Big| \tr \, e^{ip \cdot x} (\omega_{N,t} - &\omega_{N,t}^k) \Big| \\ \leq \; & \left| \tr  \, \cU^*(t;0) e^{ip \cdot x} \cU (t;0) (\omega_N - \omega_N^k) \right| \\ &+ \frac{1}{N\eps} \int_0^t ds \int d\wt{p} \, |\widehat{V} (\wt{p})| \, |\tr \, e^{i\wt{p} \cdot x} (\omega_{N,s}- \omega_{N,s}^k)| \, \left| \tr \, [ \cU^* (t;s) e^{ip\cdot x} \cU (t;s), e^{i\wt{p} \cdot x} ] \omega_{N,s}^k \right|\end{split} \]
and therefore, using the assumption (\ref{eq:pot-hyp}), that 
\begin{equation}\label{eq:supptr} \begin{split}
\sup_{p \in \bR^3} \frac{1}{1+|p|}  \, \Big| \tr \,  e^{ip\cdot x} &(\omega_{N,t} - \omega_{N,t}^k) \Big| \\ \leq \; & \sup_{p\in \bR^3} \frac{1}{1+|p|}\left| \tr \, \cU^* (t;0) e^{ip \cdot x} \cU (t;0) (\omega_N - \omega_N^k) \right| \\ &+ \frac{C}{N\eps} \int_0^t ds \, \sup_{p,\wt{p} \in \bR^3} \frac{1}{(1+|p|)(1+|\wt{p}|)} \, \left| \tr \, \left[ \cU^* (t;s)e^{ip \cdot x} \cU (t;s), e^{i\wt{p} \cdot x} \right] \omega_{N,s}^k \right| 
\\ &\hspace{3cm} \times 
\, \sup_{\wt{p} \in \bR^3} \frac{1}{1+|\wt{p}|} \left| \tr \, e^{i \wt{p}\cdot x} (\omega_{N,s} - \omega_{N,s}^k) \right|  
\end{split} 
\end{equation}
To bound the second term on the r.h.s. of \eqref{eq:supptr} we shall use the following lemma, whose proof is deferred to Sect. \ref{sect:lemmas}.

\begin{lemma}\label{lm:cm1}
Assume that (\ref{eq:ass-pot}) holds true.
Let $\cU (t;s)$ be the unitary evolution generated by the Hartree Hamiltonian $h (t) = -\eps^2 \Delta + (V*\rho_t)$. There exists a constant $C > 0$ such that 
\[ \sup_{\omega,r} \frac{1}{|r|} \left| \tr \, \left[ e^{ir \cdot x}, \cU^* (t;s) e^{ix \cdot p + \eps \nabla \cdot q}  \cU (t;s) \right] \omega \right| \leq \eps (|p|+|q|) e^{C|t-s|} \]
for all $p,q \in \bR^3$. Here, the supremum is taken over $r \in \bR^3$ and over all trace class operators $\omega$ on $L^2 (\bR^3)$ with $\tr \, |\omega| \leq 1$. 
\end{lemma}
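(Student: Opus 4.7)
\textbf{Proof proposal for Lemma \ref{lm:cm1}.} The plan is to reduce the trace estimate to an operator-norm bound on the commutator $[x, A(t;s)]$, where $A(t;s) := \cU^*(t;s)\,W_{p,q}\,\cU(t;s)$ and $W_{p,q} := e^{ix\cdot p + q\cdot\eps\nabla}$, and then to run a Duhamel--Gronwall argument that exploits the Hamiltonian structure of the Hartree generator. The reduction uses the identity
\[
[e^{ir\cdot x}, B] \;=\; i\int_0^1 e^{i\lambda r\cdot x}\,[r\cdot x, B]\,e^{i(1-\lambda)r\cdot x}\,d\lambda,
\]
together with cyclicity of the trace and $\sup_{\tr|\omega|\le 1}|\tr C\omega| = \|C\|_{op}$, which gives
\[
\sup_{\omega, r}\frac{1}{|r|}\,|\tr[e^{ir\cdot x}, A(t;s)]\omega| \;\le\; \|[x, A(t;s)]\|_{op}.
\]
Thus it suffices to show $\|[x, A(t;s)]\|_{op} \le \eps(|p|+|q|)e^{C|t-s|}$.

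Next I would derive a transport-type equation for $A(t;s)$. From $i\eps\partial_t\cU(t;s) = h(t)\cU(t;s)$ one has $i\eps\,\partial_t A(t;s) = -\cU^*(t;s)[h(t), W_{p,q}]\cU(t;s)$; splitting $h(t) = -\eps^2\Delta + V*\rho_t$, the key ingredient is the exact identity
\[
[-\eps^2\Delta,\ W_{p,q}]\ =\ -2i\eps\,p\cdot\nabla_q W_{p,q},
\]
which can be verified either directly from the Baker--Campbell--Hausdorff factorization $W_{p,q} = e^{i\eps p\cdot q/2}e^{ip\cdot x}e^{q\cdot\eps\nabla}$ or from the Weyl/Moyal calculus (the Moyal bracket of the quadratic symbol $v^2$ with $e^{i(p\cdot x + q\cdot v)}$ reduces to the Poisson bracket). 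Since $\nabla_q$ commutes with the unitaries $\cU(t;s)$, this yields
\[
\partial_t A(t;s) \;-\; 2p\cdot\nabla_q A(t;s) \;=\; R(t;p,q),\qquad R(t;p,q) := \tfrac{i}{\eps}\,\cU^*(t;s)[V*\rho_t, W_{p,q}]\cU(t;s).
\]
For $R$, Fourier-decomposing $V*\rho_t = \int\widehat V(\tilde p)\widehat\rho_t(\tilde p)e^{i\tilde p\cdot x}d\tilde p$, using $[e^{i\tilde p\cdot x}, W_{p,q}] = (1-e^{i\eps\tilde p\cdot q})\,e^{i\tilde p\cdot x}W_{p,q}$ and $|1-e^{i\eps\tilde p\cdot q}|\le \eps|\tilde p||q|$, together with $\|\widehat\rho_t\|_\infty\le 1$ and the assumption (\ref{eq:ass-pot}), I get $\|R(t;p,q)\|_{op}\le C|q|$.

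Integrating the transport equation along the free characteristics $q(\tau) = q + 2p(t-\tau)$ gives
\[
A(t;s;p,q) \;=\; W_{p,\,q+2p(t-s)}\ +\ \int_s^t R(\tau;\,p,\,q+2p(t-\tau))\,d\tau.
\]
Taking $[x,\cdot]$ on both sides and using the direct computation $[x, W_{p,q'}] = -\eps q'\,W_{p,q'}$ (which follows from $[x, e^{q'\cdot\eps\nabla}] = -\eps q'\,e^{q'\cdot\eps\nabla}$), the leading term is bounded in operator norm by $\eps\,|q+2p(t-s)|\le \eps(|p|+|q|)(1+2|t-s|)$. For the integral contribution I would rewrite $[x, \cU^*(\tau;s)B\cU(\tau;s)] = \cU^*(\tau;s)\,[\cU(\tau;s)x\,\cU^*(\tau;s), B]\,\cU(\tau;s)$ with $B = [V*\rho_\tau, W_{p,q'}]$, and invoke the semiclassical commutator-propagation bound of Appendix~\ref{sect:comm} to control the Heisenberg-evolved position operator in terms of $\|[x,\cdot]\|_{op}$ and $\|[\eps\nabla,\cdot]\|_{op}$ of $B$, each of which is order $\eps|q'|$ by a computation analogous to the one just used. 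Gronwall's lemma then yields the claimed bound, the polynomial factor $(1+|t-s|)$ being absorbed into $e^{C|t-s|}$.

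The main obstacle is the unboundedness of $[-\eps^2\Delta, W_{p,q}]$ as an operator, which blocks any naive operator-norm estimate on $\partial_t A(t;s)$. The resolution is the kinetic identity above, which converts the unbounded kinetic commutator into a first-order derivative in the external parameter $q$, allowing the kinetic contribution to be absorbed as a classical transport along the free Vlasov characteristics. The remaining nontrivial step is the control of $\cU(\tau;s)x\,\cU^*(\tau;s)$ provided by Appendix~\ref{sect:comm}; this is what feeds the Gronwall loop and produces the exponential in $|t-s|$.
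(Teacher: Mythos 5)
Your reduction to $\|[x,A(t;s)]\|_{op}$ (via the Duhamel identity for $[e^{ir\cdot x},\cdot]$, cyclicity, and trace/operator-norm duality), the kinetic identity $[-\eps^2\Delta, W_{p,q}]=-2i\eps\,p\cdot\nabla_q W_{p,q}$, and the resulting transport of $A$ along the free characteristics are all correct and take a genuinely different route from the paper, which instead introduces an auxiliary unitary $\wt\cU(t;s)$ generated by $e^{ir\cdot x}h(t)e^{-ir\cdot x}$ and runs a coupled Duhamel--Gronwall loop directly on the two families $\sup_{\omega,r}\frac{1}{|r|}|\tr[e^{ir\cdot x},\cU^*W\cU]\,\omega|$ and $\sup_{\omega,r}|\tr[\tfrac{r}{|r|}\cdot\eps\nabla,\cU^*W\cU]\,\omega|$.

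The gap is in the closing estimate, and it is a genuine one. You claim that $\|[x,B]\|_{op}$ and $\|[\eps\nabla,B]\|_{op}$, with $B=[V*\rho_\tau,W_{p,q'}]$, are ``each of order $\eps|q'|$''. They are not: since $x$ commutes with $V*\rho_\tau$, Jacobi gives $[x,B]=[V*\rho_\tau,[x,W_{p,q'}]]=-\eps q'\,[V*\rho_\tau,W_{p,q'}]$, so $\|[x,B]\|_{op}\le C\eps^2|q'|^2$, and similarly $\|[\eps\nabla,B]\|_{op}\le C\eps^2|q'|(1+|p|)$; each carries one extra power of $\eps|q'|$ beyond what you used. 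After dividing by $\eps$ to pass from $B$ to $R$ and integrating along $q'(\tau)=q+2p(t-\tau)$, the integral contribution is of size $\eps\,(1+|p|+|q|)^2\,\mathrm{poly}(|t-s|)\,e^{C|t-s|}$, which is quadratic in $(|p|,|q|)$ rather than linear. This is not a cosmetic difference: in the downstream application in (\ref{eq:Lbd})--(\ref{eq:supptr}) the lemma's bound is divided only by $(1+|p|)(1+|\tilde p|)$, so a quadratic $|p|^2$ growth would make those suprema infinite. There is also a formal problem with ``Gronwall then yields the bound'': in your Duhamel formula the inhomogeneity $R(\tau;p,q')$ is not expressed in terms of $[x,A]$, so no recursion closes as written. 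One can repair this by noticing the exact identity $[e^{ik\cdot x},W_{p,q'}]=-2i\sin(\eps k\cdot q'/2)\,W_{p+k,q'}$, which rewrites $R(\tau;p,q')$ as a $k$-superposition of $A(\tau;s;p+k,q')$, but the kernel $\tfrac{2}{\eps}|\sin(\eps k\cdot q'/2)|\le|k|\,|q'|$ carries precisely the extra factor $|q'|$ that spoils a linear closure. In short, the kinetic identity disposes of the unbounded commutator $[-\eps^2\Delta,W_{p,q}]$ very cleanly, but the potential part re-injects a factor $|q'|$ through $[\,\cdot\,,W_{p,q'}]$ that your argument does not tame; the paper's coupled propagation of the $|r|^{-1}$-normalized $e^{ir\cdot x}$-commutator and the $\tfrac{r}{|r|}\cdot\eps\nabla$-commutator is designed precisely so that the Fourier expansion of $\nabla(V*\rho_s)$ costs only $\int|k|^2|\widehat V(k)|\,dk$, with no residual dependence on $p,q$.
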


It follows from Lemma \ref{lm:cm1} and from $\tr\, |\o^{k}_{N,s}| = N$ that:
\begin{equation}\label{eq:Lbd} \left| \tr\,\left[ \cU^* (t;s) e^{ip \cdot x} \cU (t;s) , e^{i\wt{p} \cdot x} \right]  \omega^k_{N,s} \right| \leq C N \eps |p| |\wt{p}| \, e^{C|t-s|}   
\end{equation}
To bound the first term on the r.h.s. of (\ref{eq:supptr}), we proceed as follows. We choose a function $\chi_< \in C^\infty (\bR^3)$, with $\chi_< (x) = 1$ for $|x| \leq 1$ and $\chi_< (x) = 0$ for $|x| \geq 2$. We set $\chi_> = 1- \chi_<$. For an arbitrary $R \geq 1$, we decompose
\begin{equation} \label{eq:I+II+III} \begin{split} \tr \, \cU^* (t;0) &e^{ip \cdot x} \cU (t;0) (\omega_N - \omega_N^k) \\ = \; &\tr \, \cU^* (t;0) e^{ip \cdot x} \cU (t;0) \chi_< (-\eps^2 \Delta / R) (\omega_N - \omega_N^k) \\ & + \tr \, \cU^* (t;0) e^{ip \cdot x} \cU (t;0) \chi_> (-\eps^2 \Delta / R) (\omega_N - \omega_N^k) \chi_< (-\eps^2 \Delta / R) \\ &+ \tr \, \cU^* (t;0) e^{ip \cdot x} \cU (t;0) \chi_> (-\eps^2 \Delta /R) (\omega_N - \omega_N^k) \chi_> (-\eps^2 \Delta /R) \\ = \; &\text{I} + \text{II} + \text{III} \end{split} 
\end{equation}
To estimate the last term, we observe that 
\begin{equation}\label{eq:III} \begin{split} 
|\text{III}| \leq \; &\tr \, \chi^2_> (-\eps^2 \Delta /R) \omega_N  + \tr \, \chi^2_> (-\eps^2 \Delta /R) \omega^k_N \\ \leq \; &\frac{1}{R} \, \left[ \tr \, (-\eps^2 \Delta) \omega_N + \tr \, (-\eps^2 \Delta) \omega_N^k \right] \\
= \; &\frac{N}{R} \, \left[ \int dxdv \, v^2 W_N (x,v) + \int dxdv \, v^2 W_N^k (x,v) \right] \leq \frac{CN}{R} \end{split} 
\end{equation}
from the assumption $\sup_N \| W_N \|_{H^2_4} < \infty$, and using that $\chi_>(-\eps^2 \Delta /R)\leq (-\eps^2 \Delta /R)$. Next, let us consider the first term on the r.h.s. of (\ref{eq:I+II+III}). We write
\[ \text{I} = \tr \, \cU ^*(t;0) e^{ip \cdot x} \cU (t;0)\chi_< (-\eps^2 \Delta / R) (1+x^2)^{-1} (1+x^2) (\omega_N - \omega_N^k) \]
and we decompose 
\[ \begin{split} \big[ (1+x^2) (\omega_N - \omega_N^k) \big] (x;y) = \; &N (1+x^2) \int dv \left[ W_N \Big( \frac{x+y}{2} , v \Big) - W_N^k \Big( \frac{x+y}{2} , v \Big) \right]  e^{i v \cdot \frac{(x-y)}{\eps}} \\ = \; & D_1 (x;y) + D_2 (x;y) + D_3 (x;y) \end{split} \]
where
\[ D_1 (x;y) = N  \left[ 1+ \left( \frac{x+y}{2} \right)^2 \right] \int dv \left[ W_N \Big( \frac{x+y}{2} , v \Big) - W_N^k \Big( \frac{x+y}{2} , v \Big) \right] e^{iv \cdot \frac{(x-y)}{\eps}} \]
is the Weyl quantization of the function $(1+x^2) (W_N (x,v) - W_N^k (x,v))$ defined on phase-space, while
\[ \begin{split} D_2 (x;y) &= \frac{N \eps^2}{4} \left( \frac{x-y}{\eps} \right)^2 \int dv \left[ W_N \Big( \frac{x+y}{2} , v \Big) - W_N^k \Big( \frac{x+y}{2} , v \Big) \right] e^{iv \cdot \frac{(x-y)}{\eps}} \\
&= \frac{N \eps^2}{4} \int dv \left[ \Delta_v W_N \Big( \frac{x+y}{2} , v \Big) - \Delta_v W_N^k \Big( \frac{x+y}{2} , v \Big) \right] e^{iv \cdot \frac{(x-y)}{\eps}} \end{split} \]
is the Weyl quantization of $(\eps^2/4) (\Delta_v W_N (x,v) - \Delta_v W_N^k (x,v))$ and 
\[ \begin{split} D_3(x;y) &= N \eps \frac{x+y}{2}\cdot \frac{x-y}{\eps}  \int dv \left[ W_N \Big( \frac{x+y}{2} , v \Big) - W_N^k \Big( \frac{x+y}{2} , v \Big) \right] e^{iv \cdot \frac{(x-y)}{\eps}} \\
&= N \eps \frac{x+y}{2}\cdot \int dv \left[ \nabla_v W_N \Big( \frac{x+y}{2} , v \Big) - \nabla_v W_N^k \Big( \frac{x+y}{2} , v \Big) \right] e^{iv \cdot \frac{(x-y)}{\eps}} \end{split} \]
is the Weyl quantization of $\eps x \cdot ( \nabla_v W_N (x,v) - \nabla_v W_N^k (x,v))$. 
We bound the contributions of the three terms $D_1, D_2, D_3$ separately. We begin with
\[\begin{split}  \Big| \tr \, \cU^* (t;0) e^{ip \cdot x} &\cU (t;0)\chi_< (-\eps^2 \Delta / R) (1+x^2)^{-1} D_1 \Big| \\ &= \Big| \tr \, \cU^* (t;0) e^{ip \cdot x} \cU (t;0)\chi_< (-\eps^2 \Delta / R) (1+x^2)^{-1} (1-\eps^2 \Delta)^{-1} (1-\eps^2\Delta) D_1 \Big| \\ &\leq \| (1+x^2)^{-1} (1-\eps^2 \Delta)^{-1} \|_\text{HS} \| (1-\eps^2 \Delta) D_1 \|_\text{HS} \\ &\leq C \sqrt{N} \| (1-\eps^2 \Delta) D_1 \|_\text{HS}  \end{split} \]
where we used that $0\leq \chi_{<}(-\eps^{2}\Delta/R)\leq 1$. We have
\[ \begin{split} \big[ (1-\eps^2\Delta) &D_1 \big] (x;y) \\ &= \; N (1-\eps^2\Delta_x) \left[ 1+ \left( \frac{x+y}{2} \right)^2 \right] \int dv \left[ W_0 \Big( \frac{x+y}{2} , v \Big) - W_0^k \Big( \frac{x+y}{2} , v \Big) \right] e^{iv \cdot \frac{(x-y)}{\eps}} \end{split} \]
It is not difficult to see that:
\[ \begin{split} \| (1-\eps^2\Delta) D_1 \|_\text{HS} \leq \; &C \sqrt{N} \| (1+x^2)(1+v^2) (W_N - W_N^k) \|_2 \\ &+ C \sqrt{N} \eps \| W_N - W_N^k \|_{H^1_1} + C \sqrt{N} \eps^2 \| W_N - W_N^k \|_{H^2_2} \\ \leq \; &C \sqrt{N} \left( \frac{1}{\sqrt{k}} + \eps \right) \end{split} \]
Therefore
\begin{equation}\label{eq:D1} \Big| \tr \, \cU^* (t;0) e^{ip \cdot x} \cU (t;0)\chi_< (-\eps^2 \Delta / R) (1+x^2)^{-1} D_1 \Big| \leq C N \left( \frac{1}{\sqrt{k}} + \eps \right) 
\end{equation}
The contribution of $D_2$, on the other hand, can be controlled by 
\[\begin{split} \Big| \tr \, \cU^* (t;0) e^{ip \cdot x} &\cU (t;0)\chi_< (-\eps^2 \Delta / R) (1+x^2)^{-1} D_2 \Big| \\ 
\leq \; & \| \chi_< (-\eps^2 \Delta /R) (1+x^2)^{-1} \|_\text{HS} \| D_2 \|_\text{HS} \leq C \eps^2 \sqrt{N} \| W_0 \|_{H^2} \| \chi_< (-\eps^2 \Delta /R) (1+x^2)^{-1} \|_\text{HS}  \end{split} \]
where
\[ \begin{split} \| \chi_< (-\eps^2 \Delta /R)& (1+x^2)^{-1} \|_\text{HS}^2 \\ &= \tr \, (1+x^2)^{-1} (1-\eps^2 \Delta)^{-1} \chi^2_< (-\eps^2 \Delta / R) (1-\eps^2 \Delta)^{-1} (1+x^2)^{-1} \\
&= \tr \, (1+x^2)^{-1} (1-\eps^2 \Delta)^{-1} (1-\eps^2 \Delta)^2 \chi^2_< (-\eps^2 \Delta / R) (1-\eps^2 \Delta)^{-1} (1+x^2)^{-1} \\ &
\leq C R^2 \| (1+x^2)^{-1} (1-\eps^2 \Delta)^{-1} \|^2_\text{HS} \\ &\leq C R^2 N \end{split} \]
Hence, we conclude that 
\begin{equation}\label{eq:D2} \Big| \tr \, \cU^* (t;0) e^{ip \cdot x} \cU (t;0)\chi_< (-\eps^2 \Delta / R) (1+x^2)^{-1} D_2 \Big| \leq C N R \, \eps^2  
\end{equation}
We proceed similarly to bound the contribution of the term $D_3$. We find
\[ \begin{split} \Big| \tr \, \cU^* (t;0) e^{ip \cdot x} \cU (t;0)\chi_< (-\eps^2 \Delta / R) (1+x^2)^{-1} D_3 \Big| &\leq  \| \chi_< (-\eps^2 \Delta /R) (1+x^2)^{-1} \|_\text{HS} \| D_3 \|_\text{HS} \\ & \leq C N R \, \eps \|W_N - W_N^k\|_{H^1_1} \\
&\leq \frac{C N R \, \eps}{\sqrt{k}} \end{split} \]
where in the last step we used (\ref{eq:diff-Win}). The last equation, combined with (\ref{eq:D1}), (\ref{eq:D2}) implies that 
\[ |\text{I}| \leq C N \left( \frac{1}{\sqrt{k}} + \eps + R \eps^2 + \frac{R \eps}{\sqrt{k}} \right)  \]
Analogously, one can show that the same estimate holds for the term $\text{II}$ on the r.h.s. of (\ref{eq:I+II+III}) as well (in this case, we introduce the identity $(1+x^2) (1+x^2)^{-1}$ on the right of the difference $\omega_N - \omega_N^k$ and we use the cyclicity of the trace). With (\ref{eq:III}), we conclude that
\[  \left| \tr \, \cU^* (t;0) e^{ip \cdot x} \cU (t;0) (\omega_N - \omega_N^k) \right| \leq CN \left( \frac{1}{\sqrt{k}} + \eps + R \eps^2 + \frac{R \eps}{\sqrt{k}} + \frac{1}{R} \right)  \]
Choosing $R = \eps^{-1}$, we obtain
\[ \left| \tr \, \cU^* (t;0) e^{ip \cdot x} \cU (t;0) (\omega_N - \omega_N^k) \right| \leq CN \left( \frac{1}{\sqrt{k}} + \eps \right) \]

Inserting this bound and (\ref{eq:Lbd}) in 
(\ref{eq:supptr}) and applying Gronwall's lemma, we obtain
\begin{equation} \sup_{p \in \bR^3} \frac{1}{1+|p|} \left| \tr \, e^{ip \cdot x} (\omega_{N,t} - \omega_{N,t}^k) \right| \leq C e^{C|t|}N \left(\frac{1}{\sqrt{k}} + \eps\right)
\end{equation}
which concludes the proof of (\ref{eq:supptr0}).
\end{proof}

\medskip

\noindent{\bf Comparison of $\tilde\omega^{k}_{N,t}$ with $\tilde\omega_{N,t}$.} We now compare the Vlasov evolution of the regularized initial data with the Vlasov evolution of the original data. We claim that there exists a constant $C > 0$, depending on $\sup_{N}\| W_{N} \|_{H^{2}_{4}}$ but not on the higher Sobolev norms, such that: 
\be\label{eq:diffvlasov}
\| \tilde\omega_{N,t} - \tilde\omega_{N,t}^{k} \|_{\rm HS} = N^{1/2}\big\| \wt{W}_{N,t} - \wt{W}_{N,t}^k \big\|_2 \leq N^{1/2}\frac{Ce^{C|t|}}{\sqrt{k}}
\ee
To prove this, let \begin{equation}\label{eq:start-L1} \wt{\rho}_t (x) = \int dv \, \wt{W}_{N,t} (x,v) \qquad \text{and } \quad  
\wt{\rho}_t^k (x) = \int dv \, \wt{W}_{N,t}^k (x,v)\end{equation}
be the densities associated with $\wt{W}_{N,t}$ and $\wt{W}_{N,t}^k$. For $t \in \bR$, we denote by $(X_t (x,v), V_t (x,v))$ and by $(X_t^k (x,v), V_t^k (x,v))$ the flows satisfying the differential equations 
\begin{equation}\label{eq:Xtxv1}
\left\{ \begin{array}{ll} \dot{X}_t (x,v) &= 2 V_t (x,v) \\ \dot{V}_t (x,v) &= -\nabla (V*\wt{\rho}_t) (X_t (x,v)) \end{array} \right. 
\end{equation}
and
\begin{equation}\label{eq:Xtxv2} \left\{ \begin{array}{ll} \dot{X}^k_t (x,v) &= 2 V^k_t (x,v) \\ \dot{V}^k_t (x,v) &= -\nabla (V*\wt{\rho}^k_t) (X^k_t (x,v)) \end{array} \right. 
\end{equation}
with initial data given by, respectively, $X_{0}(x,v) = X_{0}^{k}(x,v) = x$, $V_{0}(x,v) = V_{0}^{k}(x,v) = v$. We compare the two flows $(X_t, V_t)$ and $(X_t^k, V_t^k)$. We have
\[ \begin{split} \frac{d}{dt} (X_t - X_t^k) (x,v) &= 2(V_t - V_t^k) (x,v) \\
\frac{d}{dt} (V_t - V_t^k) (x,v) &= -\nabla (V*\wt{\rho}_t) (X_t (x,v)) + \nabla (V* \wt{\rho}_t^k) (X_t^k (x,v)) \end{split} \]
and therefore
\[ \begin{split} \left| \frac{d}{dt} (X_t - X_t^k) (x,v) \right| &\leq 2\left| V_t (x,v) - V_t^k (x,v) \right| \\
\left| \frac{d}{dt} (V_t - V_t^k) (x,v) \right| &\leq C \| \wt{\rho}_t - \wt{\rho}_t^k \|_1 + C \left| X_t (x,v) - X_t^k (x,v) \right|
\end{split} \]
where we used the assumption (\ref{eq:pot-hyp}). Gronwall's lemma implies that
\begin{equation}\label{eq:XtXtk} 
\begin{split} 
\left| X_t (x,v) - X_t^k (x,v) \right| + \left| V_t (x,v) - V_t^k (x,v) \right| &\leq C e^{C|t|} \int_0^t ds \, \| \wt{\rho}_s - \wt{\rho}_s^k \|_1 \\
& \leq C e^{C|t|} \int_0^t ds \, \| W_s - W_s^k \|_1
\end{split} 
\end{equation}
We will also need to control the difference between derivatives of the flows $(X_t (x,v), V_t (x,v))$ and $(X^k_t (x,v), V_t^k (x,v))$. Integrating the flow equations (\ref{eq:Xtxv1}), (\ref{eq:Xtxv2}), we have
\begin{equation}\label{eq:nabXt} \begin{split} 
\nabla_x X_t (x,v) &= 1  + 2\int_0^t \nabla_x V_s (x,v) ds \\
\nabla_x V_t (x,v) &= -\int_0^t \nabla^2 (V*\wt{\rho}_s) (X_s (x,v)) \cdot \nabla_x X_s (x,v)\,ds \end{split}
\end{equation}
which implies that 
\[ \begin{split} |\nabla_x X_t (x,v)| &\leq 1 + 2\int_0^t ds\, |\nabla_x V_s (x,v)|  \\
|\nabla_x V_t (x,v)| &\leq C \int_0^t ds \, |\nabla_x X_s (x,v)| \end{split} \]
and hence, by Gronwall's lemma, that 
\begin{equation}\label{eq:nab-bd} |\nabla_x X_t (x,v)| + |\nabla_x V_t (x,v)| \leq e^{C|t|}
 \end{equation}
Analogously, we also find
\begin{equation}\label{eq:nab-bd2} |\nabla_v X_t (x,v)| + |\nabla_v V_t (x,v)| \leq e^{C|t|} 
\end{equation}
and 
\[
\begin{split}
|\nabla_x X_t^k (x,v)| + |\nabla_x V_t^k (x,v)| &\leq e^{C|t|} \\
|\nabla_v X_t^k (x,v)| + |\nabla_x V_t^k (x,v)| &\leq e^{C|t|}
\end{split}
\]
Moreover, from (\ref{eq:nabXt}), we obtain 
\[ \begin{split} \Big| \nabla_x X_t (x,v) - &\nabla_x X_t^k (x,v) \Big|  \leq \; 2\int_0^t ds \, \left| \nabla_x V_s (x,v) - \nabla_x V_s^k (x,v) \right|\end{split} \]
and thus
\[ \begin{split}
\Big| \nabla_x V_t (x,v) - &\nabla_x V_t^k (x,v) \Big| \\ \leq \; &\int_0^t ds \, \left| 
\nabla^2 (V*\wt{\rho}_s) (X_s (x,v)) \cdot \nabla_x X_s (x,v) - \nabla^2 (V*\wt{\rho}_s^k)(X_s^k (x,v)) \cdot \nabla_x X_s^k (x,v) \right| \\ \leq\; &
C \int_0^t ds \, \| \wt{\rho}_s - \wt{\rho}_s^k \|_1 + 
C\int_0^t ds \, |X_s (x,v) - X_s^k (x,v)| 
|\nabla_x X_s (x,v)| \\ &+ C \int_0^t ds \, |\nabla_x X_s (x,v) - \nabla_x X_s^k (x,v)| 
\end{split} \]
To get the second inequality, we used that 
\be\label{eq:L1toH14}
\| \nabla^{3} V*\tilde \rho_{s} \|_{\infty} \leq \| \nabla^{2}V \|_{\infty}\| \nabla \widetilde{W}_{N} \|_{1}\leq Ce^{C|s|}\| W_{N} \|_{H^{1}_{4}}
\ee
Using (\ref{eq:nab-bd}) and (\ref{eq:XtXtk}), and applying Gronwall's lemma, we conclude that 
\begin{equation}\label{eq:nabx-bd} \begin{split} 
\left| \nabla_x X_t (x,v) - \nabla_x X_t^k (x,v) \right| + &\left| \nabla_x V_t (x,v) - \nabla_x V_t^k (x,v) \right| \\
\leq\; &
Ce^{C|t|} \int_0^t ds \, \| \wt{\rho}_s - \wt{\rho}_s^k \|_1 + 
C e^{C|t|} \int_0^t ds \int_0^s dr \, \|\wt{\rho}_r - \wt{\rho}_r^k \|_1 
\end{split} 
\end{equation}
Similarly, we can also show that 
\begin{equation}\label{eq:nabv-bd}
\begin{split} 
\left| \nabla_v X_t (x,v) - \nabla_v X_t^k (x,v) \right| + &\left| \nabla_v V_t (x,v) - \nabla_v V_t^k (x,v) \right| \\ \leq\; &
C e^{C|t|} \int_0^t ds \, \| \wt{\rho}_s - \wt{\rho}_s^k \|_1 + 
C e^{C|t|} \int_0^t ds \, \int_0^s dr \, \|\wt{\rho}_r - \wt{\rho}_r^k \|_1 
\end{split} 
\end{equation}

Next, we control the $L^1$ norm of the difference $\wt{W}_{N,t} - \wt{W}_{N,t}^k$. To this end, we write
\[ \begin{split} \| \wt{W}_{N,t} - \wt{W}_{N,t}^k \|_1 =\; &\int dx dv \big| \wt{W}_{N,t} (x,v) - \wt{W}_{N,t}^k (x,v) \big| \\ =\; & \int dx dv \big| W_N (X_{-t} (x,v), V_{-t} (x,v)) - W_N^k (X^k_{-t} (x,v), V^k_{-t} (x,v)) \big| 
\\ \leq \; & \int dx dv \big| W_N (X_{-t} (x,v), V_{-t} (x,v)) - W^k_N (X_{-t} (x,v), V_{-t} (x,v)) \big| \\ &+ \int dx dv \big| W_N^k (X_{-t} (x,v) , V_{-t} (x,v)) - W_N^k (X_{-t}^k (x,v), V_{-t}^k (x,v)) \big|
\end{split} 
\]
Using that the Vlasov dynamics preserves the volume in phase-space, we get:
\[
\begin{split}
& \| \wt{W}_{N,t} - \wt{W}_{N,t}^k \|_1 \leq  \| W_N - W_N^k \|_1 \\
&\quad + \int dx dv \left| \int_0^1 d\lambda \, \frac{d}{d\lambda} W_N^k \left( \lambda (X_{-t} (x,v), V_{-t}(x,v)) + (1-\lambda) (X_{-t}^k (x,v), V_{-t}^k (x,v)) \right) \right| \\ &\quad \leq \| W_N - W_N^k \|_1 + \int d\lambda dx dv \Big[ \big| (\nabla_x W_N^k) \left( \tilde{x} (x,v,\lambda), \tilde{v} (x,v,\lambda)\right) \big|  \big|X_{-t} (x,v) - X_{-t}^k (x,v) \big| \\
&\qquad + \big| (\nabla_v W_N^k) \left( \tilde{x} (x,v,\lambda), \tilde{v} (x,v,\lambda)\right)\big|  \big|V_{-t} (x,v) - V_{-t}^k (x,v)\big|\Big]
\end{split} \]
where we introduced the notation 
\[
\tilde{x} (x,v,\lambda) := \lambda X_{-t} (x,v) + (1-\lambda) X^k_{-t} (x,v)\;,\quad \tilde{v} (x,v,\lambda) := \lambda V_{-t} (x,v) + (1-\lambda) V_{-t}^k (x,v)
\]
From (\ref{eq:XtXtk}), we obtain 
\begin{equation}\label{eq:W-Wk1} 
\begin{split} 
\| \wt{W}_{N,t} - &\wt{W}_{N,t}^k \|_1  \leq \| W_N - W_N^k \|_1  + C\int_0^t ds\, e^{C|s|} \| \wt{W}_{N,s} - \wt{W}_{N,s}^k \|_1 \\ &\times \Big[ \int d\lambda dx dv\, \big| (\nabla_x W^k_N) \left( \tilde{x} (x,v,\lambda), \tilde{v} (x,v,\lambda)\right) \big| + \big| (\nabla_v W^k_N) \left( \tilde{x} (x,v,\lambda), \tilde{v} (x,v,\lambda)\right) \big| \Big] 
\end{split} 
\end{equation}
We observe that
\begin{equation}\label{eq:Jintro} \int dx dv \, \big| (\nabla_x W_N^k) \left( \tilde{x} (x,v,\lambda), \tilde{v} (x,v,\lambda) \right) \big| = \int \big |(\nabla_x W_N^k) (\tilde{x}, \tilde{v})\big| \frac{1}{|J(\tilde{x}, \tilde{v})|} \, d\tilde{x} d\tilde{v} 
\end{equation}
with the Jacobian
\[ J = \det \left[ \lambda \left( \begin{array}{ll} \nabla_x X_{-t} & \nabla_x V_{-t} \\ \nabla_v X_{-t} & \nabla_v V_{-t} \end{array} \right) + (1-\lambda) \left( \begin{array}{ll} \nabla_x X^k_{-t} & \nabla_x V^k_{-t} \\ \nabla_v X^k_{-t} & \nabla_v V^k_{-t} \end{array} \right) \right] \]

To estimate the determinant $J(\tilde{x}, \tilde{v})$ in (\ref{eq:Jintro}), we proceed as follows. For a fixed constant $C >0$ (that later will be chosen large enough), let us define $t^{*}>0$ such that:
\begin{equation}\label{eq:t*} \frac{C^3 \, e^{2C t^*}}{\sqrt{k}} = 1/2 
\end{equation}
We claim that, for all $|t| < t^*$, 
\begin{equation}\label{eq:clai} \| \wt{W}_{N,t} - \wt{W}_{N,t}^k \|_1 \leq C \frac{e^{C|t|}}{\sqrt{k}} 
\end{equation}
We prove (\ref{eq:clai}) for $t > 0$ (the case of $t < 0$ can be handled similarly, of course). We set 
\be\label{eq:deft0}
t_0 = \inf \Big\{ t > 0 : \| \wt{W}_{N,t} - \wt{W}_{N,t}^k \|_1 >  \frac{Ce^{C|t|}}{\sqrt{k}} \Big\} 
\ee
and we proceed by contradiction, assuming that $t_0 < t^*$. At time $t=0$, we have:
\begin{equation}\label{eq:W0k} \begin{split} 
\| W_N - W_N^k \|_1 \leq \; &\int dx dv dx' dv' g_k (x-x', v-v') |W_N (x,v) - W_N (x',v')| \\ = \; &\frac{1}{(2\pi)^3} \int dx dv dr ds \, e^{-(r^2 + s^2)/2} \left| W_N \left( x+ \frac{r}{\sqrt{k}} , v + \frac{s}{\sqrt{k}} \right) - W_N (x,v) \right| \\ \leq \; &\frac{1}{(2\pi)^{3}}\int dx dv dr ds \int_0^1 d\lambda \, e^{-(r^2 + s^2)/2} \\ &\quad \times  \left[ \frac{|r|}{\sqrt{k}} \left| \nabla_x W_N \left( x + \lambda \frac{r}{\sqrt{k}} , v + \lambda \frac{s}{\sqrt{k}} \right) \right| +  \frac{|s|}{\sqrt{k}} \left| \nabla_v W_N \left( x + \lambda \frac{r}{\sqrt{k}} , v + \lambda \frac{s}{\sqrt{k}} \right) \right| \right] \\ \leq \; & \frac{8\pi}{(2\pi)^{3}}\frac{1}{\sqrt{k}} (\| \nabla_x W_N \|_1 + \| \nabla_v W_N \|_1) \leq \frac{C}{\sqrt{k}} \| W_N \|_{H^1_4} \leq \frac{\widetilde C}{\sqrt{k}}
\end{split} 
\end{equation}
where in the last line we estimated the $L^{1}$-norms by proceeding as in (\ref{eq:L1toH14}). Since, moreover, $t\to \wt{W}_{N,t}$ and $t\to \wt{W}_{N,t}^k$ are continuous in the $L^1$-topology, by choosing $C = 2\tilde C$ in Eq. (\ref{eq:deft0}), we conclude that $t_0 > 0$. The continuity property is a standard fact (see e.g. \cite{Dobr79}).

By definition, for $0 \leq t \leq t_0$, we have (\ref{eq:clai}) and therefore, from (\ref{eq:nabx-bd}) and (\ref{eq:nabv-bd}),  
\[ \big| \nabla_x X_{-t} (x,v) - \nabla_x X^k_{-t} (x,v) \big| + \big| \nabla_x V_{-t} (x,v) - \nabla_x V_{-t}^k (x,v) \big| \leq C^2 \frac{e^{2C|t|}}{\sqrt{k}} \]
and 
\[ \big| \nabla_v X_{-t} (x,v) - \nabla_v X^k_{-t} (x,v) \big| + \big| \nabla_v V_{-t} (x,v) - \nabla_v V_{-t}^k (x,v) \big| \leq C^2 \frac{e^{2C|t|}}{\sqrt{k}} \]
Writing 
\[ J(\wt{x}, \wt{v}) = \det \left[\lambda \left(\begin{array}{ll} \nabla_x X_{-t} & \nabla_x V_{-t} \\ \nabla_v X_{-t} & \nabla_v V_{-t} \end{array} \right) + (1-\lambda) \left(\begin{array}{ll} \nabla_x X^k_{-t} - \nabla_x X_t & \nabla_x V^k_{-t} - \nabla_x V_{-t} \\ \nabla_v X^k_{-t} - \nabla_v X_{-t} & \nabla_v V^k_{-t} - \nabla_v V_{-t} \end{array} \right) \right] \]
and using that 
\[ \left| \det \left( \begin{array}{ll} \nabla_x X_{-t}& \nabla_x V_{-t} \\ \nabla_v X_{-t} & \nabla_v V_{-t} \end{array} \right) \right| = 1 \] 
we conclude that
\[ \left| |J(\wt{x} , \wt{v})| - 1 \right| \leq C^3 \frac{e^{3C|t|}}{\sqrt{k}} \]
if the constant $C > 0$ is large enough. {F}rom (\ref{eq:t*}), and from the assumption $t_0 < t^*$, we conclude that
\[ |J(\wt{x}, \wt{v})| > 1/2 \]
for all $0 \leq t \leq t_0$. Eq. (\ref{eq:Jintro}) implies:
\[ \int dx dv \, \big| (\nabla_x W_N^k) \left( \tilde{x} (x,v,\lambda) , \tilde{v} (x,v,\lambda)\right) \big| \leq 2 \int d\tilde{x} d\tilde{v}\, \big|\nabla_x W_N^k (\tilde{x}, \tilde{v}) \big| \leq C \| W_N^k \|_{H^1_{4}} \leq C \|W_N \|_{H^1_{4}} \]
for all $0 < t \leq t_0$. Similarly, we obtain 
\[ \int dx dv \, \big|  (\nabla_v W_N^k) \left(\tilde{x} (x,v,\lambda) , \tilde{v} (x,v,\lambda) \right) \big| \leq C \| W_N \|_{H^1_{4}} \]
Plugging the last two bounds in the r.h.s. of (\ref{eq:W-Wk1}), we find that 
\[ \| \wt{W}_{N,t} - \wt{W}_{N,t}^k \|_1 \leq \| W_N - W_N^k \|_1 + C \int_0^t ds\, e^{C|s|} \| \wt{W}_{N,s} - \wt{W}_{N,s}^k \|_1 \]
for all $0 \leq t \leq t_0$. Eq. (\ref{eq:W0k}) and Gronwall's lemma imply that, if the constant $C > 0$ is sufficiently large, 
\[ \| \wt{W}_{N,t} - \wt{W}_{N,t}^k \|_1 \leq C \frac{e^{C|t|}}{\sqrt{k}} \]
for all $0 \leq t \leq t_0$, in contradiction with the definition of $t_0$. This shows that $t_0 > t^*$. Repeating the same argument for $t<0$, we obtain that
\begin{equation}\label{eq:concl-L1} \| \wt{W}_{N,t} - \wt{W}_{N,t}^k \|_1 \leq C \frac{e^{C|t|}}{\sqrt{k}} \end{equation}
for all $|t| < t^*$. {F}rom (\ref{eq:XtXtk}), we also find that  
\begin{equation}\label{eq:XtXkt-2} |X_t (x,v) - X^k_t (x,v)| + |V_t (x,v) - V_t^k (x,v)| \leq C^2 \frac{e^{2C|t|}}{\sqrt{k}} 
\end{equation}
for all $|t| < t^*$. Moreover, Eqs. (\ref{eq:nabx-bd}) and (\ref{eq:nabv-bd}) imply that
\begin{equation}\label{eq:J2} |J(\wt{x}, \wt{v})| \geq 1/2 
\end{equation}
for all $|t| < t^*$ and for all $\wt{x}, \wt{v} \in \bR^3$. 

Finally, we control the difference $\wt{W}_{N,t} - \wt{W}_{N,t}^k$ in the $L^2$-norm. To this end, we observe that
\[ 
\begin{split} 
\| \wt{W}_{N,t} - \wt{W}_{N,t}^k \|_2^2 = \; & \int dx dv \, \big| W_N (X_{-t} (x,v), V_{-t} (x,v)) - W_N^k (X_{-t}^k (x,v), V_{-t}^k (x,v)) \big|^2 \\
\leq \; &2 \int dx dv \, \big| W_N (X_{-t} (x,v), V_{-t} (x,v)) - W_N^k (X_{-t} (x,v) V_{-t} (x,v)) \big|^2 \\
&+2 \int dx dv \, \big| W_N^k (X_{-t} (x,v), V_{-t} (x,v)) - W_N^k (X^k_{-t} (x,v), V^k_{-t} (x,v)) \big|^2
\end{split}
\]
Using that the Vlasov dynamics preserves the phase-space volume, we get, for all $|t|<t^{*}$:
\[
\begin{split}
\| \wt{W}_{N,t} - \wt{W}_{N,t}^k \|_2^2 \leq \; &2 \| W_N - W_N^k \|_2^2 \\ &+ 2 \int_0^1 d\lambda \int dx dv \, \Big\{ \big| (\nabla_x W_N^k) \left(\tilde{x} (x,v,\lambda) , \tilde{v} (x,v,\lambda) \right) \big|^2 \big|X_{-t} (x,v) - X_{-t}^k (x,v)\big|^2  \\ &\hspace{3cm} + \big| (\nabla_v W_N^k) \left(\tilde{x} (x,v,\lambda) , \tilde{v} (x,v,\lambda)\right) \big|^2 \big|V_{-t} (x,v) - V_{-t}^k (x,v)\big|^2 \Big\}\\
\leq \; & 2 \| W_N - W_N^k \|_2^2 \\ &+2\frac{C^4 e^{4C|t|}}{k} \int_0^1 d\lambda \int d\tilde x d\tilde v \, \big[ \big| (\nabla_x W_N^k) (\tilde{x}, \tilde{v}) \big|^2 + \big| (\nabla_v W_N^k) (\tilde{x}, \tilde{v}) \big|^2 \big] \frac{1}{|J (\tilde{x}, \tilde{v})|}  \\
\leq \; &10\frac{C^4 e^{4C|t|}}{k} \| W_N \|_{H^1}^2
\end{split} 
\]
To get the first inequality we used the estimate (\ref{eq:XtXkt-2}), while to get the last one we used (\ref{eq:J2}). By definition of $t^*$, we conclude that, after an appropriate change of the constant $C > 0$, 
\[ \|\wt{W}_{N,t} - \wt{W}_{N,t}^k \|_2 \leq \frac{C e^{C|t|}}{\sqrt{k}} \]
for all $t \in \bR$ (recall that the bounds $\| \wt{W}_{N,t} \|_2 ,\, \| \wt{W}_{N,t}^k \|_2 \leq C$ are trivial, since the Vlasov equation preserves the $L^{p}$ norms). This concludes the proof of (\ref{eq:diffvlasov}).

\medskip

\noindent{\bf Proof of Theorem \ref{cor:relaxed-main}.} We have, using (\ref{eq:HS-step1}), (\ref{eq:HS-step2}), (\ref{eq:diffvlasov}):
\be
\begin{split}
\| \omega_{N,t} - \wt{\omega}_{N,t} \|_{\text{HS}} &\leq \|  \omega_{N,t} - \omega_{N,t}^{k} \|_{\text{HS}} + \| \omega_{N,t}^{k} - \wt{\omega}_{N,t}^{k} \|_{\text{HS}} + \| \wt{\omega}_{N,t}^{k} - \wt{\omega}_{N,t} \|_{\text{HS}} \\
&\leq CN^{1/2} \left(\eps + \frac{1}{\sqrt{k}} \right) \exp (C \exp (C|t|)) \, \left(1+ \sum_{j=1}^3(\eps \sqrt{k})^j\right)
\end{split}
\ee
for a constant $C>0$ that depends on $\sup_{N} \| W_{N} \|_{H^{2}_{4}}$ but not on the higher Sobolev norms. Choosing $k = \eps^{-2}$, we conclude that 
\[ \|  \omega_{N,t} - \wt{\omega}_{N,t} \|_\text{HS} \leq C N^{1/2} \eps \, \exp (C \exp (C|t|)) \]
as claimed. This concludes the proof of Theorem \ref{thm:HS}. \qed

\medskip

\noindent{\bf Proof of Theorem \ref{thm:new}.} Let $\wt{W}_{N,t}$ be the solution of the Vlasov equation with initial data $W_N$. We estimate
\begin{equation}\label{eq:proofnew} \| W_{N,t} - W_t \|_2 \leq \| W_{N,t} - \wt{W}_{N,t} \|_2 + \| \wt{W}_{N,t} - W_t \|_2 \end{equation}
The first term can be bounded by Theorem \ref{cor:relaxed-main}. In particular, (\ref{eq:WWNt}) implies that 
\begin{equation}\label{eq:proofnew-2} \| W_{N,t} - \wt{W}_{N,t} \|_2 \leq C \eps \exp (C \exp (C|t|)) 
\end{equation} 
As for the second term on the r.h.s. of (\ref{eq:proofnew}), we have to compare two solutions of the Vlasov equation, with slightly different initial data. But this is exactly what we did in Step 3 of the proof of Theorem \ref{cor:relaxed-main}. The only ingredients that we used there were a bound for the $L^1$ and for $L^2$ norm of the difference of the initial data. Now, by assumption we have $\| W_{0} - W_{N} \|_1 \leq \kappa_{N,1}$, $\| W_0 - W_N \|_2 \leq \kappa_{N,2}$ and $\| W_0 \|_{H^2_4} \leq C$. 
Therefore, the arguments used in Step 3 of Section \ref{sec:proof1b} imply that 
\[ \| \wt{W}_{N,t} - W_t \|_2 \leq C (\kappa_{N,1} + \kappa_{N,2}) e^{C|t|} \]
Together with (\ref{eq:proofnew-2}), we conclude that
\[ \| W_{N,t} - W_t \|_2 \leq C \eps \exp (C \exp (C|t|)) +  C (\kappa_{N,1} + \kappa_{N,2}) \exp (C |t|) \, . \]

\qed


\section{Convergence for the expectation of semiclassical observables} 
\label{sect:proof2}
\setcounter{equation}{0}

Here we prove Theorem \ref{thm:main2} and Theorem \ref{thm:new2}. To show Theorem \ref{thm:main2}, we make first the additional assumption that the Wigner transforms $W_N$ of the fermionic operators $\omega_{N}$ are so that $\sup_N\| W_N \|_{H^4_4} < \infty$; later, we will relax this assumption with an approximation argument.

\medskip

\noindent{\bf Case $\sup_{N}\|W_{N}\|_{H^{4}_{4}}<\infty$.} We use the expression \eqref{eq:omega-omegatilde} for the difference $\omega_{N,t}- \wt{\omega}_{N,t}$ to write 
\be
\begin{split}\label{eq:trace}
\tr\, e^{ip \cdot x + q \cdot \eps \nabla} \, (\o_{N,t}-\wt\o_{N,t})= \; & \frac{1}{\e}\int_0^t \tr\, e^{ip \cdot x + q \cdot \eps \nabla} \,\mathcal{U}(t;s) [ V*(\rho_s - \wt\rho_s), \wt\o_{N,s} ]\,\mathcal{U}^*(t;s)  ds\\
&+ \frac{1}{\e}\int_0^t \tr\, e^{ip \cdot x + q \cdot \eps \nabla} \,\mathcal{U}(t;s) B_s \, \mathcal{U}^*(t;s)  ds\,
\end{split}
\ee
with $B_s$ as defined in (\ref{eq:Bt}). We start by considering the first term on the r.h.s. of \eqref{eq:trace}. We have 
\be\label{eq:secondterm}
\begin{split}
\tr\, e^{ip\cdot x + q \cdot \eps \nabla} \,&\mathcal{U}(t;s) [ V*(\rho_s - \wt\rho_s), \wt\o_{N,s} ]\,\mathcal{U}^* (t;s)   \\ = \; &\int dz \, (\rho_s (z) - \wt{\rho}_s (z)) \tr \, e^{ip\cdot x + q \cdot \eps \nabla} \,\mathcal{U}(t;s) [ V(x-z), \wt\o_{N,s} ]\,\mathcal{U}^*(t;s) \\
= \; &\int dk\,\widehat{V} (k) \int dz \, e^{-ik \cdot z} (\rho_s (z) - \wt{\rho}_s (z)) \tr \, e^{ip\cdot x + q \cdot \eps \nabla} \,\mathcal{U}(t;s) [ e^{ik \cdot x}, \wt\o_{N,s} ]\,\mathcal{U}^* (t;s) \\
= \; &\frac{1}{N} \int dk\,\widehat{V} (k)  \, \tr \, e^{-ik \cdot z} (\o_{N,s}  - \wt{\o}_{N,s}) \, \tr \, e^{ip\cdot x + q \cdot \eps \nabla} \,\mathcal{U}(t;s) [ e^{ik \cdot x}, \wt\o_{N,s} ]\,\mathcal{U}^* (t;s) 
\end{split} 
\ee
Hence
\begin{equation}\label{eq:su2} \begin{split} 
\Big| \tr \, e^{ip\cdot x + q \cdot \eps \nabla} \,&\mathcal{U}(t;s) [ V*(\rho_s - \wt\rho_s), \wt\o_{N,s} ]\,\mathcal{U} (t;s)  ds\, \Big| \\ \leq \; &\frac{1}{N}  \int dk \, |\widehat{V} (k)| \left| \tr \, e^{ik \cdot x} (\o_{N,s} - \wt{\o}_{N,s})\right| \, \left| \tr \, e^{ip\cdot x + q \cdot \eps \nabla} \,\mathcal{U}(t;s) [ e^{ik \cdot x}, \wt\o_{N,s} ]\,\mathcal{U}^* (t;s) \right|  \\
\leq \;&  \frac{C \, \tr \, |\wt{\omega}_{N,s}|}{N} \sup_{k \in \bR^3} \frac{1}{(1+|k|)^2} \left| \tr \, e^{ik \cdot x} (\o_{N,s} - \wt{\o}_{N,s})\right| \\ &\hspace{3cm} \times  \sup_{\o,k} \frac{1}{|k|} \left| \tr \,[e^{ik \cdot x}, \mathcal{U}^* (t;s)e^{ip\cdot x + q \cdot \eps \nabla} \,\mathcal{U}(t;s)] \, \o \,\right|
\end{split} 
\end{equation}
where we used the assumption (\ref{eq:ass-pot}) and where the supremum is taken over all $k \in \bR^3$ and all $\o$ with $\tr \, |\o| \leq 1$.
{F}rom Lemma \ref{lm:cm1}, we obtain 
\begin{equation}\label{eq:gronw1}  
\begin{split}
\Big| \tr \, e^{ip\cdot x + q \cdot \eps \nabla} \mathcal{U}(t;s) & [ V*(\rho_s - \wt\rho_s), \wt\o_{N,s} ]\,\mathcal{U} (t;s)  ds\, \Big|\\ & \leq \frac{C \, \tr \, |\wt{\omega}_{N,s}|}{N} \, \eps (|p|+|q|) e^{C|t-s|} \sup_{k} \frac{1}{(1+|k|)^2} \left| \tr \, e^{ik \cdot x} (\omega_{N,s} - \wt{\omega}_{N,s} ) \right| 
\end{split} 
\end{equation}

Consider now the second term on the r.h.s. of (\ref{eq:trace}). 
By the cyclicity of the trace, we find  
\be\label{eq:trace-third}
\tr\, e^{ip \cdot x + q \cdot \eps \nabla} \,\mathcal{U}(t;s) \,B_s\,\mathcal{U}^*(t;s)= \tr\, \mathcal{U}^*(t;s)\,e^{ip \cdot x + q \cdot \eps \nabla} \,\mathcal{U}(t;s)\,B_s \,.
\ee
We recall that the kernel of the operator $B_s$ is 
$$
B_s(x;y)= \left[ (V*\wt{\rho}_s)(x) - (V*\wt{\rho}_s)(y) - \nabla(V*\wt{\rho}_s)\left(\frac{x+y}{2}\right)\cdot(x-y) \right] \wt{\omega}_{N,s} (x;y) \,,
$$
Expanding the parenthesis with the potentials in Fourier integrals, we obtain
\[ \Big[  (V*\wt{\rho}_s)(x) - (V*\wt{\rho}_s)(y) - \nabla(V*\wt{\rho}_s)\left(\frac{x+y}{2}\right)\cdot(x-y) \Big] = \int dk \, \widehat{U} (k) \left( e^{i k \cdot x} - e^{ik \cdot y} - e^{i k \cdot \frac{(x+y)}{2}} ik \cdot (x-y) \right) \]
with $U = V* \wt{\rho}_s$. We write
\[ e^{ik \cdot x} - e^{i k \cdot y} = \int_0^1 d\lambda \frac{d}{d\lambda} e^{ik \cdot (\lambda x + (1-\lambda) y)} = \int_0^1 d\lambda e^{ik \cdot (\lambda x + (1-\lambda) y)} ik \cdot (x-y) \]
and hence
\[  \begin{split} e^{ik \cdot x} - e^{i k \cdot y} - e^{i k \cdot \frac{x+y}{2}} ik \cdot (x-y) &= \int_0^1 d\lambda \left[ e^{ik \cdot (\lambda x + (1-\lambda) y)} - e^{ik \cdot \frac{x+y}{2}} \right] ik \cdot (x-y) \\ &= \int_0^1 d\lambda \int_0^1 d\mu \, e^{ik \cdot [\mu (\lambda x + (1-\lambda) y) + (1-\mu)(x+y)/2]} (\lambda -1/2) [k \cdot (x-y)]^2 \end{split} \]
This implies that
\[ B_s = \sum_{i,j =1}^3 \int_0^1 d\lambda \, (\lambda-1/2) \int_0^1 d\mu \int dk \,\widehat{U} (k) k_i k_j \left[ x_i, \left[ x_j, e^{i(\mu\lambda +(1-\mu)/2) k \cdot x} \wt{\omega}_{N,s} e^{i(\mu(1-\lambda)+(1-\mu)/2) k \cdot x} \right] \right]  \]
Therefore, we can bound the absolute value of the second term on the r.h.s. of (\ref{eq:trace}) by
\begin{equation}\label{eq:su3} \begin{split} 
\Big| \tr\, e^{ip \cdot x + q \cdot \eps \nabla} \,&\mathcal{U}(t;s) \,B_s\,\mathcal{U}^*(t;s) \Big| \\ \leq \; &\sum_{i,j=1}^3 \int_0^1 d\lambda |\lambda-1/2| \int_0^1 d\mu \int dk |\widehat{U} (k)| |k|^2 \\ &\hspace{1cm} \times \left| \tr \, \cU^* (t;s)  e^{ip \cdot x + q \cdot \eps \nabla} \mathcal{U}(t;s) [x_i, [x_j, e^{i(\mu\lambda + (1-\mu)/2) k \cdot x}  \wt{\omega}_{N,s}e^{i(\mu(1-\lambda)+(1-\mu)/2) k \cdot x} ]] \right| \\
= \; &\sum_{i,j=1}^3 \int_0^1 d\lambda |\lambda-1/2| \int_0^1 d\mu \int dk |\widehat{U} (k)| |k|^2 \\ &\hspace{1cm} \times \left| \tr \, [x_i, [x_j, \cU^* (t;s)  e^{ip \cdot x + q \cdot \eps \nabla} \mathcal{U}(t;s) ]] \, e^{i(\mu\lambda + (1-\mu)/2) k \cdot x}  \wt{\omega}_{N,s}e^{i(\mu(1-\lambda)+(1-\mu)/2) k \cdot x} \right| \\
\leq \; & C \tr |\wt{\omega}_{N,s}| \, \int dk\, |\widehat{U} (k)| |k|^2 \sup_{\o,i,j} \left| \tr \, [x_i, [x_j, \cU^* (t;s) e^{ip \cdot x + q \cdot \eps \nabla} \cU (t;s)]] \o \right| \,. 
\end{split} 
\end{equation}
The supremum on the r.h.s. is taken over all indices $i,j \in \{1,2,3 \}$ and all trace class operators $\omega$ with $\tr \, |\omega| \leq 1$. This term is controlled thanks to the next lemma, whose proof is deferred to the end of the section.

\begin{lemma}\label{lm:cm2}
Under the same assumptions of Theorem \ref{thm:main2}, there exists $C > 0$ such that 
 \be\label{eq:cm2}
 \sup_{i,j,\o} \left| \tr \, [x_i \,[x_j\,,\,\mathcal{U}^*(t;s)\,e^{ip\cdot x + q \cdot \eps \nabla} \,\mathcal{U}(t;s)\,]]\,\o \right|  \leq C \eps^2 (|p| + |q|)^2 e^{C|t-s|}
 \ee
where the supremum is taken over all $i,j \in \{1,2,3 \}$ and over all trace class operators on $L^2 (\bR^3)$ with $\tr \, |\omega| \leq 1$. 
\end{lemma}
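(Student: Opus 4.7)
The strategy parallels the proof of Lemma \ref{lm:cm1}. Since $\sup_{\omega:\,\tr|\omega|\leq 1}|\tr(A\omega)| = \|A\|_{\text{op}}$, the claim reduces to showing
\[
\bigl\|[x_i,[x_j,\mathcal{O}(t;s)]]\bigr\|_{\text{op}} \leq C\eps^2(|p|+|q|)^2 e^{C|t-s|},
\]
with $\mathcal{O}(t;s) := \mathcal{U}^*(t;s)\,E\,\mathcal{U}(t;s)$ and $E := e^{ip\cdot x + q\cdot\eps\nabla}$. Passing to the Heisenberg picture via $X_k(t;s) := \mathcal{U}(t;s)\,x_k\,\mathcal{U}^*(t;s)$, unitarity of $\mathcal{U}(t;s)$ gives $\|[x_i,[x_j,\mathcal{O}(t;s)]]\|_{\text{op}} = \|[X_i(t;s),[X_j(t;s),E]]\|_{\text{op}}$. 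At $t=s$, the fact that $[ip\cdot x,q\cdot\eps\nabla] = -i\eps\,p\cdot q$ is central, together with $(e^{q\cdot\eps\nabla}f)(x) = f(x+\eps q)$, yields via Baker--Campbell--Hausdorff $[x_j,E] = -\eps q_j E$ and hence $[x_i,[x_j,E]] = \eps^2 q_iq_j E$, of operator norm $\eps^2|q_iq_j|$, already of the required order.

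To propagate in time I would set up a closed system for the four double commutators
\[
C^{XX}_{ij}(\tau) := [X_i(\tau;s),[X_j(\tau;s),E]], \quad C^{XP}_{ij}, \quad C^{PX}_{ij}, \quad C^{PP}_{ij},
\]
where $P_k(\tau;s) := \mathcal{U}(\tau;s)(-i\eps\partial_k)\mathcal{U}^*(\tau;s)$. The Heisenberg equations $\dot X_k = 2P_k$ and $\dot P_k = -(\nabla_k(V*\rho_\tau))(X(\tau;s))$ yield, for example,
\[
\partial_\tau C^{XX}_{ij}(\tau) = 2 C^{PX}_{ij}(\tau) + 2 C^{XP}_{ij}(\tau),
\]
with analogous formulas for $C^{XP}, C^{PX}, C^{PP}$ whose right-hand sides contain, besides the four $C$'s themselves, forcing terms of the form $[X_i(\tau;s),[(\nabla_j(V*\rho_\tau))(X(\tau;s)),E]]$. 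Fourier-expanding $V*\rho_\tau = \int \widehat V(k)\widehat\rho_\tau(-k)e^{ik\cdot x}dk$ and using assumption (\ref{eq:ass-pot}) rewrite these forcing terms as absolutely convergent integrals in $k$ (with weight $|k|^2$) of double commutators $[X_i(\tau;s),[e^{ik\cdot X(\tau;s)},E]]$.

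The main obstacle is to control these auxiliary double commutators uniformly, which I expect to require a two-level bootstrap. First, one extends Lemma \ref{lm:cm1} to the bound
\[
\bigl\|[e^{ik\cdot X(\tau;s)},E]\bigr\|_{\text{op}} \leq C\eps|k|(|p|+|q|)\,e^{C|\tau-s|},
\]
and an analogous estimate with $P_k$ in place of $X_k$, by the same differential-equation-plus-Gr\"onwall scheme as in the proof of Lemma \ref{lm:cm1}, together with the Lipschitz bound $\|\nabla^2(V*\rho_\tau)\|_\infty \leq C$ inherited from the propagation of the $W^{1,1}$-regularity of the Wigner transform along the Hartree flow. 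Second, one iterates this scheme to the double-commutator level: a direct BCH computation gives $[e^{ik\cdot x},E] = (e^{-i\eps k\cdot q}-1)\,e^{ik\cdot x}E$ of norm $O(\eps|k||q|)$ at $\tau=s$, and an extra commutation with $X_i(\tau;s)$ produces an additional factor $\eps(|p|+|q|+|k|)$ by exactly the same mechanism. The delicate point is that each commutation with $X_k(\tau;s)$ or $P_k(\tau;s)$ must produce \emph{exactly} one extra factor of $\eps$, without accumulating more than linear growth in $(|p|+|q|+|k|)$; this is ensured by the Weyl--Heisenberg algebra of $E$ and by the fact that the Heisenberg operators $X_k(\tau;s), P_k(\tau;s)$ differ from $x_k$ and $-i\eps\partial_k$ only by contributions controlled along the flow. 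Feeding these bounds back into the coupled system $(C^{XX}, C^{XP}, C^{PX}, C^{PP})$ and applying Gr\"onwall then delivers the announced estimate $C\eps^2(|p|+|q|)^2 e^{C|t-s|}$.
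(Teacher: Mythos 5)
Your high-level plan — Fourier-expanding the self-consistent potential, setting up a coupled Gronwall system for the double commutators of $\mathcal{U}^*(t;s)\,e^{ip\cdot x+q\cdot\eps\nabla}\,\mathcal{U}(t;s)$ with $x$ and $\eps\nabla$, and invoking Lemma \ref{lm:cm1} along the way — is the same as the paper's. (The paper stays with the trace-class duality and conjugates $\omega$ by $\mathcal{U}(s;0)$ rather than passing to Heisenberg operators, and runs the Duhamel expansion in the variable $s$; your Heisenberg equations $\dot X_k = 2P_k$, $\dot P_k=-\nabla(V*\rho_\tau)(X)$ are not correct as stated for $X_k(\tau;s)=\mathcal{U}(\tau;s)x_k\mathcal{U}^*(\tau;s)$ — they hold for $\mathcal{U}^*(t;\tau)\,x_k\,\mathcal{U}(t;\tau)$, with the opposite conjugation and time slot — but this is a fixable bookkeeping point.)

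The genuine gap is the closure of the Gronwall system, which you leave as a ``two-level bootstrap''. The paper needs no auxiliary $k$-dependent double-commutator estimate. After the Fourier expansion, the forcing terms $[x_\ell,[e^{ik\cdot x},\mathcal{O}]]$, $[\eps\nabla_\ell,[e^{ik\cdot x},\mathcal{O}]]$ and transposes are handled by the elementary identity
\[
[e^{ik\cdot x},A]=\int_0^1 d\lambda\, e^{i\lambda k\cdot x}\,[ik\cdot x,A]\,e^{i(1-\lambda)k\cdot x}\,,
\]
which converts $[e^{ik\cdot x},\cdot]$ into $[x_m,\cdot]$ at the cost of exactly one $|k|$ and (since the $x$'s commute with $e^{i\mu k\cdot x}$) feeds back into the \emph{same} three double-commutator quantities being Gronwalled. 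The system then closes with only the $|k|^2$-moment of $\widehat V$; the $|k|^3$-moment in (\ref{eq:ass-pot}) enters exactly once, in the $[\eps\nabla,[\eps\nabla,\cdot]]$ equation, to absorb the scalar term $[[\eps\nabla_{x_i},e^{ik\cdot x}],\cdot]=i\eps k_i\,[e^{ik\cdot x},\cdot]$ via Lemma \ref{lm:cm1}.

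By contrast, your intermediate bound $\|[X_i,[e^{ik\cdot X},E]]\|\lesssim \eps^2|k|\,|q|\,(|p|+|q|+|k|)\,e^{C|t-s|}$ carries an extra $|k|$ beyond what the closed system produces; combined with the ``weight $|k|^2$'' you state, this would demand $\int|\widehat V(k)|\,|k|^4\,dk<\infty$, exceeding the hypothesis. And even dropping the extra $|k|$, the posited bound for $[X_i,[e^{ik\cdot X},E]]$ does not follow ``by exactly the same mechanism'' as Lemma \ref{lm:cm1}: it is itself a double-commutator estimate whose proof would reopen the very closure problem you are trying to solve. Recognizing that the commutator-exponential identity above is the closure mechanism — not an additional bootstrap — is the essential missing step.
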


\medskip

Using $|\widehat{U} (k)| \leq |\widehat{V} (k)|$, the assumption (\ref{eq:ass-pot}) and (\ref{eq:cm2}), we conclude that
\begin{equation}\label{eq:gronw2} \begin{split} 
\Big| \tr\, e^{ip \cdot x + q \cdot \eps \nabla} \,&\mathcal{U}(t;s) \,B_s\,\mathcal{U}^*(t;s) \Big| \leq C  \tr \, |\wt{\omega}_{N,s}| \, (|p|+|q|)^2 \eps^2 e^{C|t-s|} 
\end{split}
\end{equation}

Inserting (\ref{eq:gronw1}) and (\ref{eq:gronw2}) on the r.h.s. of (\ref{eq:trace}), we obtain
\begin{equation}\label{eq:grpq} \begin{split} 
\sup_{p,q \in \bR^3} \frac{1}{(|p|+|q|+1)^2} &\left| \tr\, e^{ip \cdot x + q \cdot \eps \nabla} \, (\o_{N,t}-\wt\o_{N,t}) \right| \\ \leq \; &C \int_0^t ds \, \frac{\tr \, |\wt{\omega}_{N,s}|}{N} \, e^{C|t-s|} \sup_{k} \frac{1}{(1+|k|)^2} \left| \tr \, e^{ik \cdot x} (\omega_{N,s} - \wt{\omega}_{N,s}) \right| \\
&+ C \int_0^t ds \, \tr \, |\wt{\omega}_{N,s}|  \eps e^{C|t-s|} \\
\leq \; &C \int_0^t ds \,\frac{\tr \, |\wt{\omega}_{N,s}|}{N} \, e^{C|t-s|} \sup_{p,q} \frac{1}{(1+|p|+|q|)^2} \left| \tr \, e^{ip \cdot x+q \cdot \eps \nabla} (\omega_{N,s} - \wt{\omega}_{N,s}) \right| \\
&+ C \int_0^t ds \, \tr \, |\wt{\omega}_{N,s}| \,  \eps \, e^{C|t-s|}
\end{split} \end{equation}

Now, we estimate the trace norm of $\wt{\omega}_{N,s}$ (here we need the additional regularity of the Wigner transforms of the initial data assumed at the beginning of the proof). We have
\begin{equation}\label{eq:bd-trn} \begin{split} \tr \, |\wt{\omega}_{N,s}| & = \tr \, \left| (1-\eps^2\Delta)^{-1} (1+x^2)^{-1} (1+x^2) (1-\eps^2 \Delta) \wt{\omega}_{N,s} \right| \\
& \leq \, \| (1-\eps^2\Delta)^{-1} (1+x^2)^{-1} \|_\text{HS} \|  (1+x^2) (1-\eps^2 \Delta) \wt{\omega}_{N,s} \|_\text{HS} \\ &\leq C \sqrt{N} \| (1+x^2) (1-\eps^2 \Delta) \wt{\omega}_{N,s} \|_\text{HS} \end{split} \end{equation}
The operator $K = (1+x^2) (1-\eps^2 \Delta) \, \wt{\omega}_{N,s}$ has the integral kernel 
\[ \begin{split} K(x;y) =\; & N (1+x^2) (1-\eps^2 \Delta_x)  \int dv \, \wt{W}_{N,s} \Big( \frac{x+y}{2} , v \Big) e^{-i v \cdot \frac{x-y}{\eps}} \\ 
= \; &N (1+x^2) \int dv \, \wt{W}_{N,s} \Big( \frac{x+y}{2} , v \Big) e^{-i v \cdot \frac{x-y}{\eps}} \\& + N (1+x^2)\int dv \, v^2 \wt{W}_{N,s}  \Big( \frac{x+y}{2} , v \Big) e^{-i v \cdot \frac{x-y}{\eps}} \\
&-\eps^2 N (1+x^2)\int dv \, (\Delta_v \wt{W}_{N,s}) \Big( \frac{x+y}{2} , v \Big) e^{-i v \cdot \frac{x-y}{\eps}} \\ &+ i \eps N (1+x^2) \int dv  \, v \cdot \nabla_v \wt{W}_{N,s} \Big( \frac{x+y}{2} , v \Big) e^{-i v \cdot \frac{x-y}{\eps}} 
\end{split} \]
Writing \[ (1+x^2) = 1+ \left(\frac{x+y}{2}\right)^2 + \left( \frac{x-y}{\eps} \right)^2 \]
we conclude that
\[  \| K \|_\text{HS} \leq C \sqrt{N} \sum_{j=0}^4 \eps^j \| \wt{W}_{N,s} \|_{H^j_4} \]
The propagation of regularity of the Vlasov equation from Proposition \ref{prop:regularity} gives us
\[ \| K \|_\text{HS} \leq C \sqrt{N} e^{C|s|} \sum_{j=0}^4 \eps^j \| W_N \|_{H^j_4} \]
and thus, with (\ref{eq:bd-trn}),  
\[ \tr \, |\wt{\omega}_{N,s}| \leq C N e^{C|s|}\sum_{j=0}^4 \eps^j \| W_N \|_{H^j_4} \]

Inserting in (\ref{eq:grpq}) and applying Gronwall's inequality, we find 
\begin{equation}\label{eq:fin-reg} 
\begin{split}  \sup_{p,q \in \bR^3} \frac{1}{(1+|p|+|q|)^2} & \, \left| \tr \, e^{ip\cdot x + q \cdot \eps \nabla} (\omega_{N,t} - \wt{\omega}_{N,t}) \right| \\ &\leq C \Big[ \sum_{j=0}^4 \eps^j \| W_N \|_{H^j_4} \Big]  N \eps \exp \Big( C \Big[ \sum_{j=0}^4 \eps^j \| W_N \|_{H^j_4} \Big] \exp (C |t|) \Big) \end{split} 
\end{equation}

This completes the proof of Theorem \ref{thm:main2}, under the additional assumption that $\| W_N \|_{H^4_4}$ is bounded. 

\medskip

\noindent{\bf Proof of Theorem \ref{thm:main2}.} We have to relax the condition $\sup\|W_{N}\|_{H^{4}_{4}}<\infty$. To this end, we proceed as follows. We set
\[ W_N^k (x,v) = (W_N * g_k) (x,v) = \int dx' dv' g_k (x-x', v-v') W_N (x',v') \]
with
\[ g_k (x,v) = \frac{k^3}{(2\pi)^3} e^{-\frac{k}{2} (x^2 + v^2)} \]
and we denote by $\omega_N^k$ the Weyl quantization of $W_N^k$. We recall from (\ref{eq:om-fermi}), that 
\begin{equation}\label{eq:om-fermi2} \omega_N^k (x;y) = \frac{1}{(2\pi)^3} \int dw dz \, e^{-z^2/2} e^{-w^2/2} \left[ e^{iw \cdot \frac{x}{\sqrt{k}\eps}} e^{\frac{z}{\sqrt{k}} \cdot \nabla} \omega_N e^{-\frac{z}{\sqrt{k}} \cdot \nabla} e^{-i w \cdot \frac{x}{\sqrt{k} \eps}} \right] (x;y)  \end{equation}
is a fermionic reduced density with $0 \leq \omega_N^k \leq 1$ and $\tr\, \omega_N^k =N$. In fact, (\ref{eq:om-fermi2}), together with the assumption (\ref{eq:sc-in}), also implies that 
\begin{equation}\label{eq:tr-nr} \tr \big| \omega_N - \omega_N^k \big| \leq C \frac{N}{\sqrt{k}} \end{equation}

To see this, we write:
\[
\tr\big| \omega_{N} - \omega_{N}^{k} \big| \leq \frac{1}{(2\pi)^{3}}\int dwdz\, e^{-z^{2}/2}e^{-w^{2}/2} \,\tr \Big| e^{iw\cdot \frac{x}{\sqrt{k}\eps}} e^{\frac{z}{\sqrt{k}}\cdot\nabla} \omega_{N} e^{-\frac{z}{\sqrt{k}}\cdot \nabla}e^{-iw\cdot \frac{x}{\sqrt{k}\eps}} - \omega_{N} \Big|\;,
\]
where
\[
\begin{split}
\tr \Big| e^{iw\cdot \frac{x}{\sqrt{k}\eps}} e^{\frac{z}{\sqrt{k}}\cdot\nabla} \omega_{N} e^{-\frac{z}{\sqrt{k}}\cdot \nabla}e^{-iw\cdot \frac{x}{\sqrt{k}\eps}} - \omega_{N} \Big| &\leq \tr\big| \big[\omega_{N}, e^{-\frac{z}{\sqrt{k}}\cdot \nabla}\big] \big| + \tr\big| e^{iw\cdot \frac{x}{\sqrt{k}\e}}\omega_{N}e^{-iw\cdot \frac{x}{\sqrt{k}\e}} - \omega_{N} \big| \\
&\leq \tr\big| \big[\omega_{N}, e^{-\frac{z}{\sqrt{k}}\cdot \nabla}\big] \big| + \tr\big| \big[\omega_{N}, e^{-iw\cdot \frac{x}{\sqrt{k}\e}}\big] \big| \\
&\leq \frac{|z|}{\sqrt{k}} \tr\big| [\omega_{N}, \nabla] \big| + \frac{|w|}{\sqrt{k}\eps}\tr\big| [\omega_{N}, x] \big|\;;
\end{split}
\]
this estimate together with the assumptions (\ref{eq:sc-in}) implies that:
\[
\tr\big| \omega_{N} - \omega_{N}^{k} \big| \leq \frac{C N}{\sqrt{k}}\frac{1}{(2\pi)^{3}} \int dwdz\, e^{-z^{2}/2}e^{-w^{2}/2} \big[ |z| + |w| \big]\;,
\]
which proves Eq. (\ref{eq:tr-nr}). 

\medskip

We have $\| W_N^k \|_{H_4^j} \leq C k^{j/2}$, for all $j=1,\dots , 4$. Choosing $k = \eps^{-2}$, (\ref{eq:fin-reg}) implies that
\[ \sup_{p,q \in \bR^3} \frac{1}{(1+|p|+|q|)^2} \, \left| \tr \, e^{ip\cdot x + q \cdot \eps \nabla} (\omega^k_{N,t} - \wt{\omega}_{N,t}) \right| \leq C N \eps \exp ( C \exp (C |t|)) \]

On the other hand, proceeding as we did between (\ref{eq:stab-expo}) and (\ref{eq:supptr}) (replacing the observable $e^{ip\cdot x}$ with $e^{ip\cdot x + q \cdot \eps \nabla}$), we obtain
\[ \sup_{p,q} \frac{1}{1+|p|+|q|} \left| \tr \, e^{ip\cdot x + q \cdot \eps \nabla} (\omega_{N,t} - \omega_{N,t}^k)\right| \leq C  \exp (C|t|) \,  \tr |\omega_N - \omega_N^k| \]
With (\ref{eq:tr-nr}), we conclude that (again with the choice $k = \eps^{-2}$) 
\[ \sup_{p,q} \frac{1}{1+|p|+|q|} \left| \tr \, e^{ip\cdot x + q \cdot \eps \nabla} (\omega_{N,t} - \omega_{N,t}^k)\right| \leq C N \eps \exp (C|t|) \] 

Finally, we observe that 
\[ \tr \, e^{ip\cdot x + q \cdot \eps \nabla} (\wt{\omega}_{N,t} - \wt{\omega}_{N,t}^k) = N \left( \widehat{W}_{N,t} (p,q) - \widehat{W}_{N,t}^k (p,q) \right) \]
and therefore we estimate 
\[ \left| \tr \, e^{ip\cdot x + q \cdot \eps \nabla} (\wt{\omega}_{N,t} - \wt{\omega}_{N,t}^k) \right| \leq C N \| \wt{W}_{N,t} - \wt{W}_{N,t}^k \|_1 \]
The $L^1$-stability of the Vlasov equation with respect to perturbation of the initial data has been already established in the proof of Theorem \ref{cor:relaxed-main}. Following the arguments between (\ref{eq:start-L1}) and (\ref{eq:concl-L1}) (using the assumption on the $W^{1,1}$ Sobolev norm of the sequence $W_N$), we obtain 
\[ \| \wt{W}_{N,t} - \wt{W}_{N,t}^k \|_1 \leq C e^{C|t|} \| W_N - W_N^k \|_1 \]
Using again the uniform bound $\| W_N \|_{W^{1,1}} \leq C$, and the choice $k = \eps^{-2}$, we find
\[ \| \wt{W}_{N,t} - \wt{W}_{N,t}^k \|_1 \leq C \eps e^{C|t|} \]
We conclude that 
\[ \sup_{p,q} \frac{1}{(1+|p|+|q|)^2} \left| \tr \, e^{ip\cdot x+ q \cdot \eps \nabla} (\omega_{N,t} - \wt{\omega}_{N,t}) \right| \leq C N \eps \exp (C \exp (C|t|)) \]
for any sequence of initial densities $\omega_N$ satisfying (\ref{eq:sc-in}) and whose Wigner transforms $W_N$ satisfy $\| W_N \|_{W^{1,1}} \leq C$ uniformly in $N$. 
\qed 

\bigskip

\noindent {\bf Proof of Theorem \ref{thm:new2}.} We write 
\[ \left| \widehat{W}_{N,t} (p,q) - \widehat{W}_t (p,q)\right|  \leq \left|  \widehat{W}_{N,t} (p,q) - \widehat{\wt{W}}_{N,t} (p,q)\right| + \left| \widehat{\wt{W}}_{N,t} (p,q) - \widehat{W}_t (p,q)\right| \]
where $\wt{W}_{N,t}$ denotes the solution of the Vlasov equation with initial data $W_N$. {F}rom Theorem \ref{thm:main2}, we know that 
\[  
\left|  \widehat{W}_{N,t} (p,q) - \widehat{\wt{W}}_{N,t} (p,q)\right| \leq C \eps (1+|p| + |q|)^2 e^{C|t|} \]
To conclude the proof of the theorem, we need to compare the solutions $\wt{W}_{N,t}$ and $W_t$ of the Vlasov equation, using the fact that the two initial data are close in $L^1$. As in the approximation argument used in the proof of Theorem \ref{thm:main2}, we make use of the $L^1$-stability of the solution of the Vlasov equation, established in Step 3 of the proof of Theorem \ref{cor:relaxed-main}. Following the arguments between (\ref{eq:start-L1}) and (\ref{eq:concl-L1}), we obtain 
\[ \| \wt{W}_{N,t} - W_t \|_1 \leq C e^{C|t|} \| W_N - W_0 \|_1  \]
where the constant $C > 0$ depends only on $\| W_0 \|_{W^{1,1}}$. This implies that 
\[ \| \widehat{\wt{W}}_{N,t} - \widehat{W}_t \|_\infty \leq \| \wt{W}_{N,t} - W_t \|_1 \leq C \kappa_N e^{C|t|} \]
Hence,
\[ \left| \widehat{W}_{N,t} (p,q) - \widehat{W}_t (p,q)\right| \leq C (1+|p|+|q|)^2 (\eps + \kappa_N) e^{C|t|} \]
which concludes the proof Theorem \ref{thm:new2}. \qed

\section{Proof of auxiliary lemmas}\label{sect:lemmas}

In this section we show Lemma \ref{lm:cm1} and Lemma \ref{lm:cm2}.
\begin{proof}[Proof of Lemma \ref{lm:cm1}]
We define the unitary evolution $\wt{\mathcal{U}}(t;s)$ satisfying 
\begin{equation}
\begin{split}
i\e\partial_t\,\wt{\mathcal{U}}(t;s) &= e^{ir\cdot x}\,h(t)\,e^{-ir\cdot x}\,\wt{\mathcal{U}}(t;s)\\
&=(h(t)+2i\e^2 r\cdot\nabla + r^2\e^2)\,\wt{\mathcal{U}}(t;s)
\end{split}
\end{equation}
We observe that
\begin{equation}\label{eq:sup1} \begin{split} \sup_{\omega} \left| \tr \, \left[ e^{ir \cdot x}, \cU^* (t;s) e^{ix \cdot p + \eps \nabla \cdot q}  \cU (t;s) \right] \omega \right| &= \sup_{\omega} \left| \tr \, \left[ e^{ir \cdot x}, \cU^* (t;s) e^{ix \cdot p + \eps \nabla \cdot q}  \cU (t;s) \right] \cU (s;0) \omega \wt{\cU}^* (s;0) \right|\\ &= \sup_{\omega} \left| \tr \, \wt{\cU}^* (s;0) \left[ e^{ir \cdot x}, \cU^* (t;s) e^{ix \cdot p + \eps \nabla \cdot q}  \cU (t;s) \right] \cU (s;0) \omega \right| \end{split}
\end{equation}
where the supremum is taken over all trace class operators $\omega$ on $L^2 (\bR^3)$ with $\tr \, |\omega| \leq 1$ and where we used the fact that $\tr \, |\cU (s;0) \omega \wt{\cU}^* (s;0)| \leq \tr \, |\omega|$. For a fixed $\omega$ and for fixed $t \in \bR$, we compute now the time-derivative of
\[  \begin{split} i \eps \partial_s \tr \, &\wt{\cU}^* (s;0) \left[ e^{ir \cdot x}, \cU^* (t;s) e^{ix \cdot p + \eps \nabla \cdot q}  \cU (t;s) \right] \cU (s;0) \omega  \\ =\; & - \tr\,\wt{\mathcal{U}}^* (s;0) \, [ h(s),[e^{ir\cdot x} , \mathcal{U}^* (t;s)\, e^{ix \cdot p + \eps \nabla \cdot q} \,\mathcal{U}(t;s) ] ]\,\mathcal{U} (s;0)\o \\
& - 2\e^2\,\tr\,\wt{\mathcal{U}}^* (s;0)\,ir\cdot\nabla\,[e^{ir\cdot x} , \mathcal{U}^* (t;s)\, e^{ix \cdot p + \eps \nabla \cdot q} \,\mathcal{U}(t;s) ]\,\mathcal{U} (s;0)\o\\
& - \e^2\,r^2\,\tr\,\wt{\mathcal{U}}^* (s;0)\,[e^{ir\cdot x} , \mathcal{U}^* (t;s)\, e^{ix \cdot p + \eps \nabla \cdot q} \,\mathcal{U}(t;s) ] \,\mathcal{U} (s;0)\o \\
& + \tr\,\wt{\mathcal{U}}^* (s;0)\,[e^{ir\cdot x} ,\,[h(s) , \mathcal{U}^* (t;s)\, e^{ix \cdot p + \eps \nabla \cdot q} \,\mathcal{U}(t;s) ] ]\,
\mathcal{U} (s;0)\o \end{split}\]
Using the properties of commutators, we find 
\be\label{eq:gronw-step}
\begin{split}
i \eps \partial_s \tr \, &\wt{\cU}^* (s;0) \left[ e^{ir \cdot x}, \cU^* (t;s) e^{ix \cdot p + \eps \nabla \cdot q}  \cU (t;s) \right] \cU (s;0) \omega  \\ = \; & - 2\e^2\,\tr\,\wt{\mathcal{U}}^* (s;0)\,ir\cdot\nabla\,[e^{ir\cdot x} , \mathcal{U}^* (t;s)\,e^{ix \cdot p + \eps \nabla \cdot q} \,\mathcal{U}(t;s) ]\,\mathcal{U} (s;0)\o \\
& - \e^2\,r^2\,\tr\,\wt{\mathcal{U}}^* (s;0)\,[e^{ir\cdot x} , \mathcal{U}^*(t;s)\,e^{ix \cdot p + \eps \nabla \cdot q} \,\mathcal{U}(t;s) ]\,\mathcal{U} (s;0) \o\\
&+ \tr\, \wt{\mathcal{U}}^* (s;0)\,[\mathcal{U}^* (t;s)\,e^{ix \cdot p + \eps \nabla \cdot q} \,\mathcal{U}(t;s) , [ h(s) , e^{ir\cdot x} ] ]\,\mathcal{U} (s;0) \o
\end{split}
\ee
We have 
$$[h(s)\,,\,e^{ir\cdot x}]=(-2i\e^2r\cdot\nabla - \e^2r^2)\,e^{ir\cdot x}\,.$$
Inserting this expression in \eqref{eq:gronw-step}, we get
\be\label{eq:gronw-step2}
\begin{split}
i \eps \partial_s \tr \, \wt{\cU}^* (s;0) &\left[ e^{ir \cdot x}, \cU^* (t;s) e^{ix \cdot p + \eps \nabla \cdot q}  \cU (t;s) \right] \cU (s;0) \omega  \\ = \; & 2\e \tr \, \wt{\mathcal{U}}^* (s;0)\,[\mathcal{U}^* (t;s)\,e^{ix \cdot p + \eps \nabla \cdot q}  \,\mathcal{U}(t;s) , ir\cdot\eps\nabla ]\,e^{ir\cdot x}\,\mathcal{U} (s;0) \o
\end{split}
\ee
Integrating this equation from time $s$ to time $t$, we find
\[ \begin{split} 
\tr \, \wt{\cU}^* (s;0) &\left[ e^{ir \cdot x}, \cU^* (t;s) e^{ix \cdot p + \eps \nabla \cdot q}  \cU (t;s) \right] \cU (s;0) \omega \\ = \;&\tr \, \wt{\cU}^* (t;0) \left[ e^{ir \cdot x}, e^{ix \cdot p + \eps \nabla \cdot q} \right] \cU (t;0) \omega  \\ &+ 2i \int_s^t d\tau \, \tr \, \wt{\mathcal{U}}^* (\tau ;0)\,[\mathcal{U}^* (t;\tau) \,e^{ix \cdot p + \eps \nabla \cdot q}  \,\mathcal{U}(t;\tau) , ir\cdot\eps\nabla ]\,e^{ir\cdot x}\,\mathcal{U} (\tau;0) \o \end{split} \]
which implies that 
\[ \begin{split} 
\Big| \tr \, \wt{\cU}^* (s;0) & \left[ e^{ir \cdot x}, \cU^* (t;s) e^{ix \cdot p + \eps \nabla \cdot q}  \cU (t;s) \right] \cU (s;0) \omega  \Big| \\ &\leq \left| \tr \, [e^{ir \cdot x}, e^{ix\cdot p + \eps \nabla \cdot q}] \cU (t;0) \omega \wt{\cU}^* (t;0) \right| \\ &+ 2 \int_s^t d\tau \, \left| \tr \, \left[ \cU^* (t;\tau) e^{ix \cdot p + \eps \nabla \cdot q} \cU (t;\tau), i r \cdot \eps \nabla \right] e^{ir \cdot x} \cU (\tau;0) \omega\,\wt{\cU}^* (\tau;0) \right| \end{split} \]
Since
 \[ [e^{ir \cdot x}, e^{ix\cdot p + \eps \nabla \cdot q}] =  (e^{-i\eps r\cdot q/2}-e^{i\eps r\cdot q/2}) e^{ix\cdot (p+r) + \eps \nabla \cdot q} \]
we conclude that, for any trace class operator $\omega$ on $L^2 (\bR^3)$, with $\tr \, |\omega| \leq 1$, we have
\[ \begin{split} 
\frac{1}{|r|} \Big| \tr \, &\left[ e^{ir \cdot x}, \cU^* (t;s) e^{ix \cdot p + \eps \nabla \cdot q}  \cU (t;s) \right] \cU (s;0)\, \omega \, \wt{\cU}^* (s;0) \Big| \\ &\leq \eps |q| + 2 \int_s^t d\tau \, \sup_\omega \left| \tr \, \left[ \cU^* (t;\tau) e^{ix \cdot p + \eps \nabla \cdot q} \cU (t;\tau), i \frac{r}{|r|} \cdot \eps \nabla \right] \omega \right| \end{split} \]
where, on the r.h.s., the supremum is taken over all trace class $\omega$ with $\tr \, |\omega| \leq 1$. {F}rom (\ref{eq:sup1}), we obtain 
\begin{equation}\label{eq:gron1} \begin{split} \sup_{\o ,r} \, \Big| \tr \, &\left[ e^{ir \cdot x}, \cU^* (t;s) e^{ix \cdot p + \eps \nabla \cdot q}  \cU (t;s) \right] \omega \Big|
\\ &\leq \eps |q| + 2 \int_s^t d\tau \, \sup_{\o,r} \left| \tr \, \left[  i \frac{r}{|r|} \cdot \eps \nabla, \cU^* (t;\tau) e^{ix \cdot p + \eps \nabla \cdot q} \cU (t;\tau) \right] \omega \right| 
\end{split} 
\end{equation}
Next, we bound the supremum on the r.h.s. of the last equation. To this end, we observe that 
\begin{equation}\label{eq:supk2} \begin{split} \sup_\omega &\left|  \tr \, \left[ i \frac{r}{|r|} \cdot \eps \nabla, \cU^* (t;s) e^{ix \cdot p + \eps \nabla \cdot q} \cU (t;s) \right] \omega \right| \\ & \hspace{2cm} = \sup_\omega \left|  \tr \, \left[ \cU^* (t;s) e^{ix \cdot p + \eps \nabla \cdot q} \cU (t;s), i \frac{r}{|r|} \cdot \eps \nabla \right] \cU (s;0) \omega \cU^* (s;0) \right| \\ & \hspace{2cm} = \sup_\omega \left|  \tr \, \cU^* (s;0) \, \left[ \cU^* (t;s) e^{ix \cdot p + \eps \nabla \cdot q} \cU (t;s), i \frac{r}{|r|} \cdot \eps \nabla \right] \cU (s;0) \omega \right|
\end{split} 
\end{equation}
We compute
\[ \begin{split} 
i\eps \partial_s  \tr \, \cU^* (s;0) \, &\left[ \cU^* (t;s) e^{ix \cdot p + \eps \nabla \cdot q} \cU (t;s), i r \cdot \eps \nabla \right] \cU (s;0) \omega \\ = \; & -\tr\,\mathcal{U}^* (s;0)\,[h(s)\,,\,[\mathcal{U}^* (t;s)\,e^{ix \cdot p + \eps \nabla \cdot q} \,\mathcal{U}(t;s)\, ,\,i\e r\cdot\nabla]]\,\mathcal{U} (s;0) \o \\
&+\tr\,\mathcal{U}^* (s;0)\,[[h(s)\,,\,\mathcal{U}^*(t;s)\,e^{ix \cdot p + \eps \nabla \cdot q}\,\mathcal{U}(t;s)]\,,\,i\e r\cdot\nabla]\,\mathcal{U} (s;0) \o
\end{split} \]
The Jacobi identity implies that
\be
\begin{split}
i\e\partial_s 
\tr \, \cU^* (s;0) \, &\left[ \cU^* (t;s) e^{ix \cdot p + \eps \nabla \cdot q} \cU (t;s), i r \cdot \eps \nabla \right] \cU (s;0) \omega \\ = \; &-\tr\, \mathcal{U}^* (s;0)\,[\mathcal{U}^*(t;s)\, e^{ix \cdot p + \eps \nabla \cdot q} \,\mathcal{U}(t;s)\,,\,[h(s)\,,\,i\e r\cdot\nabla]]\,\mathcal{U} (s;0) \o\,.
\end{split}
\ee
We have
\[ \begin{split} 
[h(s), i r \cdot \eps \nabla ] &= i\eps r \cdot \nabla (V*\rho_s) (x) = i\eps r \cdot \int dk k \widehat{V} (k) \widehat{\rho}_s (k) e^{ik \cdot x} \end{split} \]
Hence
\[ \begin{split} 
i\e\partial_s
\tr \, \cU^* (s;0) \, &\left[ \cU^* (t;s) e^{ix \cdot p + \eps \nabla \cdot q} \cU (t;s), i r \cdot \eps \nabla \right] \cU (s;0) \omega \\ = \; &-i \eps \cdot \int dk r \cdot k \widehat{V} (k) \widehat{\rho}_t (k)\, \tr\, \mathcal{U}^* (s;0)\,[\mathcal{U}^*(t;s)\, e^{ix \cdot p + \eps \nabla \cdot q} \,\mathcal{U}(t;s)\,,\,e^{ik \cdot x}]\,\mathcal{U} (s;0) \o\,.
\end{split} \]
Integrating from time $s$ to time $t$, we find
\[ \begin{split} 
\tr \, \cU^* (s;0) \, &\left[ \cU^* (t;s) e^{ix \cdot p + \eps \nabla \cdot q} \cU (t;s), i r \cdot \eps \nabla \right] \cU (s;0) \omega \\ =\; & \tr \, \cU^* (t;0) \left[ e^{ix \cdot p + \eps \nabla \cdot q}, i r \cdot \eps \nabla \right] \cU (t;0) \omega \\ &+ i \int_s^t d\tau \int dk r \cdot k \widehat{V} (k) \widehat{\rho}_\tau (k)\, \tr\, \mathcal{U}^* (\tau;0)\,[\mathcal{U}^*(t;\tau)\, e^{ix \cdot p + \eps \nabla \cdot q} \,\mathcal{U}(t;\tau)\,,\,e^{ik \cdot x}]\,\mathcal{U} (\tau;0) \o\, \end{split} \]
Since
\[ \left[e^{ix \cdot p + \eps \nabla \cdot q}, i r \cdot \eps \nabla \right] = \eps r \cdot p e^{ix \cdot p + \eps \nabla \cdot q} \] 
we conclude that, for any trace class operator $\omega$ with $\tr \, |\omega| \leq 1$, 
\[  \begin{split} 
\Big| \tr \, \cU^* (s;0) \, &\left[ \cU^* (t;s) e^{ix \cdot p + \eps \nabla \cdot q} \cU (t;s), i \frac{r}{|r|} \cdot \eps \nabla \right] \cU (s;0) \omega \Big| \\ =\; &\eps |p| + \int_s^t d\tau \, \sup_{\omega,k} \frac{1}{|k|} \left| \tr\, \mathcal{U}^* (\tau;0)\,[\mathcal{U}^*(t;\tau)\, e^{ix \cdot p + \eps \nabla \cdot q} \,\mathcal{U}(t;\tau)\,,\,e^{ik \cdot x}]\,\mathcal{U}(\tau;0) \o \right| \, \int dk |k|^2 |\widehat{V} (k)| \end{split} \]
{F}rom (\ref{eq:supk2}), we find 
\[\begin{split}  \sup_{\o, r} \Big| \tr \, &\left[  i \frac{r}{|r|} \cdot \eps \nabla , \cU^* (t;s) e^{ix \cdot p + \eps \nabla \cdot q} \cU (t;s)\right] \omega \Big| \\ &\leq \eps |p| + C \int_s^t d\tau \, \sup_{\o,r} \frac{1}{|r|} \left| \tr\, \mathcal{U}^* (\tau;0)\,[\mathcal{U}^*(t;\tau)\, e^{ix \cdot p + \eps \nabla \cdot q} \,\mathcal{U}(t;\tau)\,,\,e^{ik \cdot x}]\,\mathcal{U} (\tau;0) \o \right| \end{split} \]
Combining this bound with (\ref{eq:gron1}) and applying Gronwall, we obtain
\[ \begin{split} \sup_{\o ,r} \, \Big| \tr \, \left[ e^{ir \cdot x}, \cU^* (t;s) e^{ix \cdot p + \eps \nabla \cdot q}  \cU (t;s) \right] \omega \Big| &+ \sup_{\o, r} \Big| \tr \, \left[  i \frac{r}{|r|} \cdot \eps \nabla , \cU^* (t;s) e^{ix \cdot p + \eps \nabla \cdot q} \cU (t;s)\right] \omega \Big| \\ &\hspace{6cm} \leq \eps N (|p|+|q|) e^{C|t-s|}\end{split}  \]
\end{proof}

\medskip

\begin{proof}[Proof of Lemma \ref{lm:cm2}] We observe, first of all, that
\begin{equation}\label{eq:obs} \begin{split} \sup_\o \Big| \tr \, [x_i \,[x_j \,,\,&\mathcal{U}^*(t;s)\,e^{ip\cdot x + q \cdot \eps \nabla} \,\mathcal{U}(t;s)\,] ]\,\o \Big| \\ & = \sup_\o \left| \tr \, [x_i \,[x_j \,,\,\mathcal{U}^*(t;s)\,e^{ip\cdot x + q \cdot \eps \nabla} \,\mathcal{U}(t;s)\,] ] \, \cU (s;0)\o \cU^* (s;0) \right| \\ &= \sup_\o \left| \tr \, \cU^* (s;0) \, [x_i \,[x_j \,,\,\mathcal{U}^*(t;s)\,e^{ip\cdot x + q \cdot \eps \nabla} \,\mathcal{U}(t;s)\,] ] \, \cU (s;0)\o  \right| \end{split} \end{equation}
We consider now the derivative
\[ \begin{split} i\eps \partial_s \tr \, &\cU^* (s;0) \, [x_i \,[x_j \,,\,\mathcal{U}^*(t;s)\,e^{ip\cdot x + q \cdot \eps \nabla} \,\mathcal{U}(t;s)\,] ] \, \cU (s;0)\o \\ =\; & - \tr \,\cU^* (s;0) \,[ h(s), [x_i \,[x_j \,,\,\mathcal{U}^*(t;s)\,e^{ip\cdot x + q \cdot \eps \nabla} \,\mathcal{U}(t;s)\,] ]] \, \cU (s;0)\o \\ &+ \tr \, \cU^* (s;0) \, [x_i \,[x_j \,,\,[h(s), \mathcal{U}^*(t;s)\,e^{ip\cdot x + q \cdot \eps \nabla} \,\mathcal{U}(t;s)\,] ]] \, \cU (s;0)\o  \end{split} \]
The Jacobi identity implies that
\[ \begin{split} i\eps \partial_s \tr \, &\cU^* (s;0) \, [x_i \,[x_j \,,\,\mathcal{U}^*(t;s)\,e^{ip\cdot x + q \cdot \eps \nabla} \,\mathcal{U}(t;s)\,] ] \, \cU (s;0)\o \\ =\; & \tr \,\cU^* (s;0) \, [x_i \, [ [x_j , h(s)] , \mathcal{U}^*(t;s)\,e^{ip\cdot x + q \cdot \eps \nabla} \,\mathcal{U}(t;s)\,] ] 
 \, \cU (s;0)\o \\ &+ \tr \, \cU^* (s;0) \, [ [x_i, h(s)], \,[x_j ,\mathcal{U}^*(t;s)\,e^{ip\cdot x + q \cdot \eps \nabla} \,\mathcal{U}(t;s)\, ] ]  \, \cU (s;0)\o  \end{split} \]
Since $[x_j, h(s)]= \e^2\nabla_{x_j}$ (and since $[\nabla_{x_j} , x_i] = \delta_{ij}$ is a number), we conclude that
\begin{equation}\label{eq:gro1} \begin{split}
i\eps \partial_s \tr \, &\cU^* (s;0) \, [x_i , \,[x_j \,,\,\mathcal{U}^*(t;s)\,e^{ip\cdot x + q \cdot \eps \nabla} \,\mathcal{U}(t;s)\,] ] \, \cU (s;0)\o \\ =\; & \eps \tr \,\cU^* (s;0) \, [\eps \nabla_{x_j} , \, [ x_i , \mathcal{U}^*(t;s)\,e^{ip\cdot x + q \cdot \eps \nabla} \,\mathcal{U}(t;s)\,] ] 
 \, \cU (s;0)\o \\ &+ \eps \tr \, \cU^* (s;0) \, [ \eps \nabla_{x_i} , \,[x_j ,\mathcal{U}^*(t;s)\,e^{ip\cdot x + q \cdot \eps \nabla} \,\mathcal{U}(t;s)\, ] ]  \, \cU (s;0)\o  \end{split} 
\end{equation}
Integrating over time, we find
\[ \begin{split} \tr \, \cU^* (s;0) [x_i , [ x_j ,\, &\cU^* (t;s) e^{ip \cdot x + q \cdot \eps \nabla} \cU (t;s) ]] \cU(s;0) \o \\ =  \; &\tr \, \cU^* (t;0) [x_i , [ x_j , e^{ip \cdot x + q \cdot \eps \nabla} ]] \cU(t;0) \o \\ &+ i \int_s^t d\tau \tr \, \cU^* (\tau;0) [\eps \nabla_{x_j} , [ x_i , \cU^* (t;\tau) e^{ip \cdot x + q \cdot \eps \nabla} \cU (t;\tau) ]] \cU(\tau;0) \o \\ &+  i \int_s^t d\tau \tr \, \cU^* (\tau;0) [\eps \nabla_{x_i} , [ x_j , \cU^* (t;\tau) e^{ip \cdot x + q \cdot \eps \nabla} \cU (t;\tau) ]] \cU(\tau;0) \o
 \end{split} \]
Since
\[ [x_i , [ x_j, e^{i p\cdot x+ q \cdot \eps \nabla}]] = \eps^2 q_i q_j e^{i p\cdot x + q \cdot \eps \nabla} \] 
we find
\[ \begin{split} &\left|  \tr \, \cU^* (s;0) [x_i , [ x_j , \cU^* (t;s) e^{ip \cdot x + q \cdot \eps \nabla} \cU (t;s) ]] \cU(s;0) \o \right| \\ & \hspace{2cm} \leq \eps^2 |q|^2 + \int_s^t d\tau \, \sup_{\o, i,j} \left| \tr \,  [\eps \nabla_{x_j} , [ x_i , \cU^* (t;\tau) e^{ip \cdot x + q \cdot \eps \nabla} \cU (t;\tau) ]] \o \right| \end{split} \]
for all trace class $\o$ with $\tr \, |\o| \leq 1$. {F}rom (\ref{eq:obs}), we obtain 
\[ \begin{split} \sup_{\o ,i,j} &\left|  \tr \, [x_i , [ x_j , \cU^* (t;s) e^{ip \cdot x + q \cdot \eps \nabla} \cU (t;s) ]] \o \right| \\ &\hspace{2cm} \leq \eps^2 |q|^2 + \int_s^t d\tau \, \sup_{\o, i,j} \left| \tr \,  [\eps \nabla_{x_j} , [ x_i , \cU^* (t;\tau) e^{ip \cdot x + q \cdot \eps \nabla} \cU (t;\tau) ]] \o \right| \end{split} \]
where the suprema are taken over all trace class $\omega$ on $L^2 (\bR^3)$ with $\tr \, |\omega| \leq 1$.  
 
Next, we look for an estimate for 
\[ \begin{split} \sup_{\o,i,j} & \left| \tr \,  [\eps \nabla_{x_j} , [ x_i , \cU^* (t;s) e^{ip \cdot x + q \cdot \eps \nabla} \cU (t;s) ]] \o \right| \\ &\hspace{2cm} = \sup_{\o,i,j}  \left| \tr \, \cU^* (s;0) [\eps \nabla_{x_j} , [ x_i , \cU^* (t;s) e^{ip \cdot x + q \cdot \eps \nabla} \cU (t;s) ]] \cU(s;0) \o \right| \end{split} \]
To this end, we compute the derivative 
\[\begin{split} i\eps \partial_s \tr \, \cU^* (s;0) \, [ \eps \nabla_{x_i} , \,[x_j ,&\mathcal{U}^*(t;s)\,e^{ip\cdot x + q \cdot \eps \nabla} \,\mathcal{U}(t;s)\, ] ]  \, \cU (s;0)\o \\ = \; &- \tr \, \cU^* (s;0) \,[h(s), [ \eps \nabla_{x_i} , \,[x_j , \mathcal{U}^*(t;s)\,e^{ip\cdot x + q \cdot \eps \nabla} \,\mathcal{U}(t;s)\, ] ] ] \, \cU (s;0)\o  \\ &+\tr \, \cU^* (s;0) \, [ \eps \nabla_{x_i} , \,[x_j , [h(s), \mathcal{U}^*(t;s)\,e^{ip\cdot x + q \cdot \eps \nabla} \,\mathcal{U}(t;s)\, ] ] ] \, \cU (s;0)\o \\
= \; & \tr \, \cU^* (s;0) \,[ [\eps \nabla_{x_i}, h(s)] , \,[x_j , \mathcal{U}^*(t;s)\,e^{ip\cdot x + q \cdot \eps \nabla} \,\mathcal{U}(t;s)\, ] ]  \, \cU (s;0)\o  \\ &+\tr \, \cU^* (s;0) \, [ \eps \nabla_{x_i} , \,[ [ x_j , h(s)] , \mathcal{U}^*(t;s)\,e^{ip\cdot x + q \cdot \eps \nabla} \,\mathcal{U}(t;s)\, ] ] \, \cU (s;0)\o \end{split} \]
With \begin{equation}\label{eq:commnab} [\eps \nabla_{x_i} , h(s)] = \eps \nabla_{x_i} (V*\rho_s) (x) = \eps \int dk k_i \widehat{V} (k) \widehat{\rho}_s (k) e^{ik \cdot x} 
\end{equation} we obtain 
\[ \begin{split} 
i\eps \partial_s \tr \, \cU^* (s;0) \, [ \eps \nabla_{x_i} , \,[x_j ,&\mathcal{U}^*(t;s)\,e^{ip\cdot x + q \cdot \eps \nabla} \,\mathcal{U}(t;s)\, ] ]  \, \cU (s;0)\o  \\ 
= \; &\eps \int dk k_i \widehat{V} (k) \widehat{\rho}_s (k) \tr \, \cU^* (s;0) \,[ e^{ik \cdot x} , \,[x_j , \mathcal{U}^*(t;s)\,e^{ip\cdot x + q \cdot \eps \nabla} \,\mathcal{U}(t;s)\, ] ]  \, \cU (s;0)\o  \\ &+\eps \tr \, \cU^* (s;0) \, [ \eps \nabla_{x_i} , \,[ \eps \nabla_{x_i} , \mathcal{U}^*(t;s)\,e^{ip\cdot x + q \cdot \eps \nabla} \,\mathcal{U}(t;s)\, ] ] \, \cU (s;0)\o \end{split} \]
Using the identity
\begin{equation}\label{eq:eikxcom} [e^{ik \cdot x} , A] = e^{ik \cdot x} A - A e^{i k \cdot x} = \int_0^1 d\lambda \, \frac{d}{d\lambda} \, e^{i\lambda k \cdot x} A e^{i(1-\lambda) k \cdot x}  = \int_0^1 d\lambda e^{i\lambda k \cdot x} [ i k \cdot x, A] e^{i (1-\lambda) k \cdot x} 
\end{equation}
we conclude that 
\[ \begin{split} 
i\eps \partial_s &\tr \, \cU^* (s;0) \, [ \eps \nabla_{x_i} , \,[x_j ,\mathcal{U}^*(t;s)\,e^{ip\cdot x + q \cdot \eps \nabla} \,\mathcal{U}(t;s)\, ] ]  \, \cU (s;0)\o  \\ 
= \; & \eps \int_0^1 d\lambda \int dk k_i k_\ell \widehat{V} (k) \widehat{\rho}_s (k) \tr \, \cU^* (s;0) \, e^{i\lambda k \cdot x} [ x_\ell , \,[x_j , \mathcal{U}^*(t;s)\,e^{ip\cdot x + q \cdot \eps \nabla} \,\mathcal{U}(t;s)\, ] ]  \,e^{i(1-\lambda) k \cdot x} \cU (s;0)\o  \\ &+\eps \tr \, \cU^* (s;0) \, [ \eps \nabla_{x_i} , \,[ \eps \nabla_{x_i} , \mathcal{U}^*(t;s)\,e^{ip\cdot x + q \cdot \eps \nabla} \,\mathcal{U}(t;s)\, ] ] \, \cU (s;0)\o \end{split} \]
and hence, after integrating over time,
\[ \begin{split} 
\Big| \tr \, \cU^* (s;0) \, &[ \eps \nabla_{x_i} , \,[x_j ,\mathcal{U}^*(t;s)\,e^{ip\cdot x + q \cdot \eps \nabla} \,\mathcal{U}(t;s)\, ] ]  \, \cU (s;0)\o \Big| \\ 
\leq \; & \Big| \tr \, \cU^* (t;0) \, [ \eps \nabla_{x_i} , \,[x_j ,\,e^{ip\cdot x + q \cdot \eps \nabla} ] ]  \, \cU (t;0)\o \Big|
\\ &+ \int_s^t d\tau \, \int_0^1 d\lambda \int dk |k|^2 |\widehat{V} (k)|\\ &\hspace{2cm} \times \left| \tr \, \cU^* (\tau;0) \, e^{i\lambda k \cdot x} [ x_\ell , \,[x_j , \mathcal{U}^*(t;\tau)\,e^{ip\cdot x + q \cdot \eps \nabla} \,\mathcal{U}(t;\tau)\, ] ]  \,e^{i(1-\lambda) k \cdot x} \cU (\tau;0)\o \right| 
\\ &+\int_s^t d\tau \, \Big|\tr \, \cU^* (\tau;0) \, [ \eps \nabla_{x_i} , \,[ \eps \nabla_{x_j} , \mathcal{U}^*(t;\tau)\,e^{ip\cdot x + q \cdot \eps \nabla} \,\mathcal{U}(t;\tau)\, ] ] \, \cU (\tau;0) \o \Big| \end{split} \]
Since \[ [\eps \nabla_{x_i}, [x_j, e^{ip\cdot x+ q \eps \cdot \nabla}]] = -i\eps^2 p_i q_j e^{ip\cdot x+ q \eps \cdot \nabla} \]
this implies that
\begin{equation}\label{eq:gro2} \begin{split}  \sup_{\o, i,j} \Big| \tr \,  &[ \eps \nabla_{x_i} , \,[x_j ,\mathcal{U}^*(t;s)\,e^{ip\cdot x + q \cdot \eps \nabla} \,\mathcal{U}(t;s)\, ] ] \o \Big| \\ 
\leq \; &\eps^2 |p||q| + C \int_s^t d\tau \,\sup_{\o,i,j} \left| \tr \, [ x_i , \,[x_j , \mathcal{U}^*(t;\tau)\,e^{ip\cdot x + q \cdot \eps \nabla} \,\mathcal{U}(t;\tau)\, ] ] \o \right| 
\\ &+\int_s^t d\tau \, \sup_{\o,i,j} \Big|\tr \, [ \eps \nabla_{x_i} , \,[ \eps \nabla_{x_j} , \mathcal{U}^*(t;\tau)\,e^{ip\cdot x + q \cdot \eps \nabla} \,\mathcal{U}(t;\tau)\, ] ] \, \o \Big| \end{split} 
\end{equation}

Finally, we need an estimate for 
\[ \begin{split} 
\sup_{\o,i,j} \Big|\tr \, &[ \eps \nabla_{x_i} , \,[ \eps \nabla_{x_j} , \mathcal{U}^*(t;s)\,e^{ip\cdot x + q \cdot \eps \nabla} \,\mathcal{U}(t;s)\, ] ] \, \o \Big| \\ &= \sup_{\o,i,j} \Big|\tr \, \cU^* (s,0) [ \eps \nabla_{x_i} , \,[ \eps \nabla_{x_j} , \mathcal{U}^*(t;s)\,e^{ip\cdot x + q \cdot \eps \nabla} \,\mathcal{U}(t;s)\, ] ] \cU (s;0) \, \o \Big| \end{split} \]
Hence, we compute the derivative
\[\begin{split}  i\eps \partial_s \tr \, \cU^* (s;0) &[ \eps \nabla_{x_i} , \,[ \eps \nabla_{x_j} , \mathcal{U}^*(t;s)\,e^{ip\cdot x + q \cdot \eps \nabla} \,\mathcal{U}(t;s)\, ] ] \cU (s;0) \, \o \\ = \; &- \tr \, \cU^* (s;0) [ h(s), [ \eps \nabla_{x_i} , \,[ \eps \nabla_{x_j} , \mathcal{U}^*(t;s)\,e^{ip\cdot x + q \cdot \eps \nabla} \,\mathcal{U}(t;s)\, ] ] ] \cU (s;0) \, \o \\ &+ \tr \, \cU^* (s;0) [ \eps \nabla_{x_i} , \,[ \eps \nabla_{x_j} , [h(s), \mathcal{U}^*(t;s)\,e^{ip\cdot x + q \cdot \eps \nabla} \,\mathcal{U}(t;s)\, ] ] ] \cU (s;0) \, \o \\
= \; &\tr \, \cU^* (s;0) [ [ \eps \nabla_{x_i}, h(s)], [\eps \nabla_{x_j}, \mathcal{U}^*(t;s)\,e^{ip\cdot x + q \cdot \eps \nabla} \,\mathcal{U}(t;s)]] \cU(s;0) \omega \\ &+ \tr \, \cU^* (s;0) [ \eps \nabla_{x_i}, [[\eps \nabla_{x_j}, h(s)], \mathcal{U}^*(t;s)\,e^{ip\cdot x + q \cdot \eps \nabla} \,\mathcal{U}(t;s)]] \cU(s;0) \omega \end{split} \]
{F}rom (\ref{eq:commnab}), we find 
\begin{equation}\label{eq:2termsgro3} \begin{split} 
 i\eps \partial_s \tr \, \cU^* (s;0) &[ \eps \nabla_{x_i} , \,[ \eps \nabla_{x_j} , \mathcal{U}^*(t;s)\,e^{ip\cdot x + q \cdot \eps \nabla} \,\mathcal{U}(t;s)\, ] ] \cU (s;0) \, \o \\
 = \; &\eps \int dk k_i \widehat{V} (k) \widehat{\rho}_s (k) \, \tr \, \cU^* (s;0) [ e^{ik \cdot x}, [\eps \nabla_{x_j}, \mathcal{U}^*(t;s)\,e^{ip\cdot x + q \cdot \eps \nabla} \,\mathcal{U}(t;s)]] \cU(s;0) \omega \\ &+ \eps \int dk k_j \widehat{V} (k) \widehat{\rho}_s (k) \, \tr \, \cU^* (s;0) [ \eps \nabla_{x_i}, [e^{ik \cdot x} , \mathcal{U}^*(t;s)\,e^{ip\cdot x + q \cdot \eps \nabla} \,\mathcal{U}(t;s)]] \cU(s;0) \omega \end{split}
\end{equation}
In the first term on the r.h.s. of the last equation we use (\ref{eq:eikxcom}). In the second term, on the other hand, we notice that
\[ \begin{split} 
\tr \, \cU^* (s;0) [ \eps \nabla_{x_i}, &[e^{ik \cdot x} , \mathcal{U}^*(t;s)\,e^{ip\cdot x + q \cdot \eps \nabla} \,\mathcal{U}(t;s)]] \cU(s;0) \omega \\ = \; &\tr \, \cU^* (s;0) [ e^{ik \cdot x} , [\eps \nabla_{x_i},  \mathcal{U}^*(t;s)\,e^{ip\cdot x + q \cdot \eps \nabla} \,\mathcal{U}(t;s)]] \cU(s;0) \omega 
\\ &+ \tr \, \cU^* (s;0) [ [\eps \nabla_{x_i}, e^{ik \cdot x} ] , \mathcal{U}^*(t;s)\,e^{ip\cdot x + q \cdot \eps \nabla} \,\mathcal{U}(t;s)] \cU(s;0) \omega
\end{split} \]
Again, the first term on the r.h.s. of the last equation can be handled with (\ref{eq:eikxcom}). As for the second term, we use that $[\eps \nabla_{x_i}, e^{ik \cdot x} ] = i \eps k_i e^{ik \cdot x}$. Integrating (\ref{eq:2termsgro3}) over time, we find
\[  \begin{split} \Big| \tr \, \cU^* (s;0) &[ \eps \nabla_{x_i} , \,[ \eps \nabla_{x_j} , \mathcal{U}^*(t;s)\,e^{ip\cdot x + q \cdot \eps \nabla} \,\mathcal{U}(t;s)\, ] ] \cU (s;0) \, \o \Big| \\ \leq \;& \left| \tr \, \cU^* (t;0) [ \eps \nabla_{x_i} , \,[ \eps \nabla_{x_j} , \,e^{ip\cdot x + q \cdot \eps \nabla} \, ] ] \cU (t;0) \, \o \right| \\ &+ C \int_s^t d\tau \sup_{\o,i,j} \left| \tr \, [ \eps \nabla_{x_i}, [x_j , \mathcal{U}^*(t;\tau)\,e^{ip\cdot x + q \cdot \eps \nabla} \,\mathcal{U}(t;\tau)]]  \omega \right| \\ &+ C \eps \left[ \int dk \, |\widehat{V} (k)| |\widehat{\rho}_s (k)| |k|^3 dk \right] \left[ \int_s^t d\tau \, \sup_{\o, k} \frac{1}{|k|} \left| \tr \, [ e^{ik\cdot x},  \mathcal{U}^*(t;\tau)\,e^{ip\cdot x + q \cdot \eps \nabla} \,\mathcal{U}(t;\tau)] \omega \right| \right]  \end{split} \]
To bound the integral involving the potential in the last term on the r.h.s. of the last equation, we use (\ref{eq:ass-pot}) with $\| \widehat{\rho}_s \|_\infty \leq 1$. 
{F}rom 
\[ [ \eps \nabla_{x_i} , \,[ \eps \nabla_{x_j} , \,e^{ip\cdot x + q \cdot \eps \nabla} \, ] = -\eps^2 p_i p_j e^{ip\cdot x + q \cdot \eps \nabla} \]
and from Lemma \ref{lm:cm1}, we obtain
\begin{equation}\label{eq:gro3}  \begin{split} \Big| \tr \, \cU^* (s;0) &[ \eps \nabla_{x_i} , \,[ \eps \nabla_{x_j} , \mathcal{U}^*(t;s)\,e^{ip\cdot x + q \cdot \eps \nabla} \,\mathcal{U}(t;s)\, ] ] \cU (s;0) \, \o \Big|\\ \leq \;&C \eps^2 + C \int_s^t d\tau \sup_{\o,i,j} \left| \tr \, \, [ \eps \nabla_{x_i}, [x_j , \mathcal{U}^*(t;\tau)\,e^{ip\cdot x + q \cdot \eps \nabla} \,\mathcal{U}(t;\tau)]]  \omega \right| 
\end{split} 
\end{equation}
Combining (\ref{eq:gro1}) and (\ref{eq:gro2}) with the last equation and applying Gronwall lemma, we deduce that
\[ \begin{split}  \sup_{i,j,\o} \Big| \tr \, \, [ x_i , \,[ x_j , \mathcal{U}^*(t;s)\,e^{ip\cdot x + q \cdot \eps \nabla} \,\mathcal{U}(t;s)\, ] ] \, \o \Big| & \leq C\eps^2  (|p|+ |q|)^2 e^{C|t-s|} \\
\sup_{i,j,\o} \Big| \tr \, \, [ \eps \nabla_{x_i} , \,[ x_j , \mathcal{U}^*(t;s)\,e^{ip\cdot x + q \cdot \eps \nabla} \,\mathcal{U}(t;s)\, ] ]\, \o \Big| & \leq C\eps^2  (|p|+ |q|)^2 e^{C|t-s|}\\
\sup_{i,j,\o} \Big| \tr \, [ \eps \nabla_{x_i} , \,[ \eps \nabla_{x_j} , \mathcal{U}^*(t;s)\,e^{ip\cdot x + q \cdot \eps \nabla} \,\mathcal{U}(t;s)\, ] ] \, \o \Big| &\leq C\eps^2  (|p|+ |q|)^2 e^{C|t-s|}\end{split} \]
\end{proof}


\appendix

\section{Well-posedness of the Vlasov equation for signed measures}
\label{app:dobr}

The goal of this appendix is to show that the arguments of \cite{Dobr79} can be extended to prove the well-posedness of the Vlasov equation (\ref{eq:vlasov}) for initial data given by signed measures. 

Following the notation of \cite{Dobr79}, let $\cM$ denote the space of all finite signed measures on the Borel $\sigma$-algebra $\cB (\bR^6)$. For an open interval $\Delta \subset \bR$, we denote by $\cM_\Delta$ the set of all families $M = \{ \mu_t \}_{t \in \Delta}$, with $\mu_t \in \cM$ for all $t \in \Delta$ such that, for all bounded intervals $\Delta' \subset \Delta$, there exists $C_{\Delta'} > 0$ with $\sup_{t \in \Delta'} \| \mu_t \| < C_{\Delta'}$, and such that the function 
\[ (\nabla V * \mu_t) \, (x) = \int \nabla V (x-x') \, d\mu_t (x',v') \]
is continuous in $t \in \Delta$, for all $x \in \bR^3$ ($V$ denotes the interaction potential entering the Vlasov equation (\ref{eq:vlasov})). For $C > 0$, we also denote by $\cM_\Delta (\kappa)$ the set of families $M = \{ \mu_t \}_{t \in \Delta}$ with $\| \mu_t \| = \sup_{B \in \cB (\bR^6)} |\mu (B)| = \kappa$ for all $t \in \Delta$.

Defining $A,B : \bR^3 \times \bR^3 \to \bR^3 \times \bR^3$ by $A(x,v) = (2v,0)$ and $B(x,v) = (0, \nabla V (x))$ and, for every $\mu \in \cM$, 
\[ B_\mu (x,v) = \int B(x-x' , v-v') d\mu (x',v') \]
we say that a family $M = \{ \mu_t \}_{t \in \Delta} \in \cM_\Delta$ is a weak solution of the Vlasov equation on the interval $\Delta$ if, for every test function $h \in \cD (\bR^6)$ in the Schwarz space, 
\[ \mu_t (h) = \int h(x,v) d\mu_t (x,v) \]
is differentiable in $t \in \Delta$ and 
\[ \frac{d}{dt} \mu_t (h) = \mu_t ((A+B_{\mu_t}) \nabla h) \]
It is easy to check that, if the weak solution $\mu_t$ has a density $W_t (x,v)$, differentiable in $t$, then 
$W_t$ is a solution of the standard Vlasov equation (\ref{eq:vlasov}). 

\begin{proposition}\label{prop:dobr}
Let $V \in C^2_b (\bR^3)$. For any finite signed measure $\mu^0 \in \cM$, and every open interval $\Delta \subset \bR$ with $0 \in \Delta$, there exists a unique weak solution $M = \{ \mu_t \}_{t \in \Delta}$ of the Vlasov equation on $\Delta$ with $\mu_{t= 0} = \mu^0$.
\end{proposition}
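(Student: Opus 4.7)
The plan is to extend Dobrushin's fixed-point argument from probability measures to signed measures by replacing the Wasserstein distance (which has no natural analogue in $\cM$) with the bounded Lipschitz (Kantorovich--Rubinstein) norm
\[
d_{BL}(\mu,\nu) := \sup\Bigl\{ \int h\,d(\mu-\nu) \colon \|h\|_\infty \leq 1,\ \mathrm{Lip}(h)\leq 1 \Bigr\},
\]
which is well-defined and finite on all of $\cM$. The key structural observation is that in the self-consistent Vlasov dynamics the \emph{same} characteristic flow transports both the positive and negative parts of $\mu^0$, so that the total variation $\|\mu_t\|$ is conserved and one may work at fixed mass $\kappa := \|\mu^0\|$ throughout. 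Concretely, for any $M=\{\mu_t\}\in\cM_{\Delta'}$ with $\sup_t \|\mu_t\|\leq \kappa$, I would define the force field $F_t^M(x):=(\nabla V * \mu_t)(x)$, which, since $V\in C^2_b$, is bounded and Lipschitz in $x$ with constants depending only on $\|V\|_{C^2_b}$ and $\kappa$, and continuous in $t$ by hypothesis. Let $\Phi_t^M:\bR^6\to\bR^6$ be the flow solving $\dot X=2V$, $\dot V=-F_t^M(X)$, $(X_0,V_0)=(x,v)$, and set $\Psi(M)_t := \Phi_t^M\#\mu^0$. Because pushforward preserves total variation, $\Psi$ maps $\cM_{\Delta'}(\kappa)$ into itself, and by standard characteristic theory a weak solution with initial datum $\mu^0$ is exactly a fixed point of $\Psi$.

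The contraction estimate proceeds as follows. For two candidates $M^{(j)}$, $j=1,2$, with associated flows $\Phi_t^{(j)}$, a Gronwall argument on the characteristic ODE gives a pointwise displacement bound $|\Phi_t^{(1)}(x,v)-\Phi_t^{(2)}(x,v)|\leq C\,|t|\,e^{C|t|}\sup_{s}\|F_s^{(1)}-F_s^{(2)}\|_\infty$, with $C$ depending only on $\|V\|_{C^2_b}$ and $\kappa$. Since the function $(x',v')\mapsto \nabla V(x-x')$ is bounded and Lipschitz on $\bR^6$ with norms controlled by $\|V\|_{C^2_b}$, one has $\|F_s^{(1)}-F_s^{(2)}\|_\infty \leq C\, d_{BL}(\mu_s^{(1)},\mu_s^{(2)})$. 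Exploiting the crucial fact that both $\Psi(M^{(j)})_t$ are pushforwards of the \emph{same} $\mu^0$, any test function $h$ with $\|h\|_\infty,\mathrm{Lip}(h)\leq 1$ satisfies
\[
\Bigl|\int h\, d\bigl(\Psi(M^{(1)})_t-\Psi(M^{(2)})_t\bigr)\Bigr|
= \Bigl|\int (h\circ \Phi_t^{(1)} - h\circ\Phi_t^{(2)})\,d\mu^0\Bigr|
\leq C\,\kappa\, |t|\, e^{C|t|} \sup_{s\in\Delta'} d_{BL}(\mu_s^{(1)},\mu_s^{(2)}).
\]
For $|\Delta'|$ smaller than a constant depending only on $\kappa$ and $\|V\|_{C^2_b}$, $\Psi$ is therefore a strict contraction on $\cM_{\Delta'}(\kappa)$ equipped with the metric $\sup_{t\in\Delta'}d_{BL}$, and the Banach fixed-point theorem produces a unique local weak solution. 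Continuity in $t$ of $F_t(x)=(\nabla V * \mu_t)(x)$ at the fixed point, as required by the definition of $\cM_\Delta$, is automatic from continuity of $\Phi_t^M$ in $t$ and uniform continuity of $\nabla V$. Finally, since the local existence time depends only on $\kappa=\|\mu^0\|$ and $\|\mu_t\|=\kappa$ is preserved by the flow, one iterates the construction to cover the whole interval $\Delta$; uniqueness on $\Delta$ follows from local uniqueness by the standard connectedness argument on the maximal interval of agreement of two solutions.

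The main obstacle, and the reason Dobrushin's argument does not transfer verbatim, is the absence of a transport-type metric on signed measures: the Kantorovich duality underlying Dobrushin's use of Wasserstein distances is tied to positivity and fixed mass. The way around this is to never attempt to transport a signed measure on its own; instead, one observes that both candidate solutions at time $t$ are pushforwards of the common reference measure $\mu^0$ under close flows, reducing the comparison of signed measures to a pointwise comparison of maps integrated against the single positive measure $|\mu^0|$ (which has finite mass $\kappa$). The bounded Lipschitz norm then enters only to control differences of mean-field forces, and this is a linear operation on measures that is insensitive to signs. No Jordan decomposition of the evolving $\mu_t$ is needed, and all estimates are quantitative in $\kappa$ and $\|V\|_{C^2_b}$, which is what allows iteration to arbitrary intervals $\Delta$.
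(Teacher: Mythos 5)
Your proof is correct and follows the same basic plan as the paper's: write the weak solution as a fixed point of the pushforward map $M\mapsto \Phi^M\#\mu^0$ and show this map is a contraction in a suitable metric on families of signed measures. The difference is the metric. The paper Jordan-decomposes each $\mu_t=\mu_{t,+}-\mu_{t,-}$ and sets $\wt d(\mu,\mu')=d_{KR}(\mu_+,\mu'_+)+d_{KR}(\mu_-,\mu'_-)$ with $d_{KR}$ in its primal (optimal-coupling) form; the contraction is then proved by exhibiting the coupling $m_{t,\pm}(F)=\mu^0_\pm(\{u:(z_M(t,u),z_{M'}(t,u))\in F\})$. You instead use the dual (bounded Lipschitz / flat-norm) formulation directly on the signed difference $\mu-\mu'$, which avoids Jordan decomposition entirely. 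Both routes rest on the identical key observation that the two candidate solutions at time $t$ are pushforwards of the \emph{same} $\mu^0$ under close flows, so the comparison reduces to a pointwise flow estimate integrated against the fixed reference mass $\|\mu^0\|$. Your version is arguably cleaner: because $d_{BL}$ is a genuine norm on all of $\cM$, you do not need to verify that the positive and negative masses of $\mu_t$ and $\mu'_t$ match (as the coupling definition of $d_{KR}$ tacitly requires), and you read off $\|F^{(1)}_s-F^{(2)}_s\|_\infty\le C\,d_{BL}(\mu^{(1)}_s,\mu^{(2)}_s)$ immediately from the fact that $(x',v')\mapsto\nabla V(x-x')$ is bounded Lipschitz, whereas the paper reaches the same bound by manipulating couplings of the positive and negative parts separately. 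One caveat that applies equally to both write-ups: neither addresses completeness of the chosen metric space directly; as usual one circumvents this by noting that the Picard iterates are all pushforwards of $\mu^0$, hence their limits are too.
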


\begin{proof}
We follow the strategy of \cite{Dobr79}, adapting it to the case of signed $\mu^0$. We will use the variable $z = (x,v) \in \bR^6$. For $M = \{ \mu_t \}_{t\in \Delta} \in \cM_\Delta$, we define \begin{equation}\label{eq:GM} G_M (t,z) = A(z) + B_{\mu_t} (z) 
\end{equation} 
and we consider the solution of Newton's equation
\begin{equation}\label{eq:newt} 
\frac{d}{dt} z(t) = G_M (t, z(t)) 
\end{equation}
We denote by $z_M (t,u)$ the solution of (\ref{eq:newt}), with initial data $z_M (0,u) = u$. 

For a fixed $\mu^0 \in \cM$, we define the map $T : \cM_\Delta \to \cM_\Delta$ by
\[ (TM)_t (E) = \mu^0 \left( \left\{ u \in \bR^6 : z_M (t,u) \in E \right\} \right) \]
for all $E \in \cB (\bR^6)$. As in \cite{Dobr79}, it is easy to check that $M \in \cM_\Delta$ is a weak solution of the Vlasov equation with initial data $\mu^0$ if and only if $M$ is a fixed point of $T$, i.e. if $TM=M$. 

Hence, to prove Proposition \ref{prop:dobr}, it is enough to show that $T$ is a contraction on $\cM_\Delta$. In fact, since clearly $TM \in \cM_\Delta (\| \mu^0 \|)$, for all $M \in \cM_\Delta$, it is enough to show that the restriction of $T$ to $\cM_\Delta (\| \mu^0 \|)$ is a contraction, with respect to an appropriate metric, that we are now going to define.   

For two signed measures $\mu, \mu' \in \cM$, we define
\begin{equation}\label{eq:def-d} \wt{d} (\mu, \mu') = d_{KR} (\mu_+ , \mu'_+) + d_{KR} (\mu_-, \mu'_-) \end{equation}
where $\mu = \mu_+ - \mu_-$ is the Jordan decomposition of $\mu$ in its positive and negative parts and where $d_{KR}$ is the Kantorovich-Rubinshtein metric, defined by
\[ d_{KR} (\nu, \nu') = \inf_{m \in N(\nu,\nu')} \int \rho (z_1, z_2) dm (z_1 ,z_2) \]
where $\rho (z_1, z_2) = \min (|z_1 - z_2|,1)$ and $N(\nu,\nu')$ is the space of all positive measures $m$ on $\cB (\bR^{12})$ such that $m (E \times \bR^6) = \nu (E)$ and $m (\bR^6 \times E) = \nu' (E)$ for all $E \in \cB (\bR^6)$. Furthermore, for $M = \{ \mu_t \}_{t \in  \Delta}, M' = \{ \mu'_t \}_{t \in \Delta} \in \cM_\Delta$, we define
\[ d (M,M') = \int_{\Delta} \wt{d} (\mu_t , \mu'_t) \]
It is easy to check that (\ref{eq:def-d}) defines a metric on $\cM (\Delta)$. 

We claim that, for $|\Delta|$ small enough, 
\begin{equation}\label{eq:contr} 
d (TM, TM') \leq \frac{1}{2} d (M,M') 
\end{equation}
for all $M,M' \in \cM_\Delta ( \| \mu^0 \| )$. To prove (\ref{eq:contr}) we observe that, for all $\mu,\mu' \in \cM$,
\[ \begin{split}
B_\mu (z) - B_{\mu'} (z) = \; & \int B(z-w) d\mu (w) - \int B(z-w) d\mu' (w) \\ 
\leq \; & \int ( B(z-w_1) - B (z - w_2)) dm_+ (w_1, w_2) \\ &- \int  ( B(z-w_1) - B (z - w_2)) dm_- (w_1, w_2) 
\end{split} \]
for any $m_+ \in N (\mu_+ , \mu'_+), m_- \in N (\mu_-, \mu'_-)$. Recalling that $B (x,v) = (0, \nabla V (x))$ and the assumption $V \in C^2_b (\bR^3)$, we find
\[ \begin{split} 
|B_\mu (z) - B_{\mu'} (z)| \leq \; &  \int |B(z-w_1) - B (z - w_2)| dm_+ (w_1, w_2) \\ &+ \int  |B(z-w_1) - B (z - w_2)| dm_- (w_1, w_2) \\ \leq \; & C \int \rho (w_1, w_2) dm_+ (w_1, w_2) + \int \rho (w_1, w_2) dm_- (w_1, w_2)
\end{split} \]
Since the inequality holds for every $m_+ \in N (\mu_+, \mu'_+)$ and $m_- \in N (\mu_-, \mu'_-)$, we conclude that
\begin{equation}\label{eq:B-B'} |B_\mu (z) - B_{\mu'} (z)| \leq C \wt{d} (\mu,\mu') 
\end{equation}
for all $\mu, \mu' \in \cM$ and all $z \in \bR^6$. 

Furthermore, recalling the definition (\ref{eq:GM}), we observe that there is a constant $C$, depending on $\| \mu^0 \|$, such that 
\begin{equation}\label{eq:GM-bd} 
| G_M (z) - G_M (z') | \leq C |z-z'| 
\end{equation}
for all $z,z' \in \bR^6$ and for all $M \in \cM_\Delta (\| \mu^0 \|)$. 

For $M,M' \in \cM_\Delta (\| \mu^0 \|)$ and $u \in \bR^6$, we define the quantity
\[ \alpha (M,M', u) = \sup_{t\in \Delta} |z_M (t,u) - z_{M'} (t,u)| \]
With (\ref{eq:GM-bd}), we obtain 
\[ \begin{split} 
|z_M (t,u) - z_{M'} (t,u)| \leq \; &\int_0^t |G_M (s, z_M (s,u) - G_{M'} (s, z_{M'} (s,u))| ds \\ \leq \; & \int_\Delta |G_M (s, z_M (s,u) - G_{M} (s, z_{M'} (s,u))| ds \\ &+ \int_\Delta G_M (s, z_{M'} (s,u) - G_{M'} (s, z_{M'} (s,u))| ds \end{split} \]
for all $t \in \Delta$. Combining (\ref{eq:B-B'}) and (\ref{eq:GM-bd}), we find
\[  |z_M (t,u) - z_{M'} (t,u)| \leq C \int_\Delta |z_M (s,u) - z_{M'} (s,u)| ds + C d (M,M') \]
Taking the supremum over $t$, we conclude that, for sufficiently small $|\Delta|$, 
\[ \alpha (M,M',u) \leq \frac{C}{1-C|\Delta|} d (M,M') \]

We are now ready to bound $d (TM, TM')$. For $M,M' \in \cM_\Delta (\| \mu^0 \|)$, we notice that
\[ (TM_t)_\pm (E) = \mu^0_\pm \left( \{ u \in \bR^6 : z_M (t,u) \in E \} \right)  \]
for every $E \in \cB (\bR^6)$. Now, let 
\[ m_{t,\pm} (F) = \mu_\pm^0 \left( \{ u \in \bR^6 : ( z_M (t,u) , z_{M'} (t,u)) \in F \} \right) \]
for every $F \in \cB (\bR^{12})$. Then we have $m_{t,\pm} \in N ( (TM_t)_\pm , (TM'_t)_\pm )$ and 
\[ \begin{split} \int \rho (z_1, z_2) dm_{t, \pm} (z_1, z_2) &= \int \rho (z_M (t,u) , z_{M'} (t,u)) d \mu^0_\pm (u) \\ &\leq \int \alpha (M,M',u) d\mu_\pm^0 (u) \\ &\leq \frac{C \| \mu_\pm^0 \|}{1-C|\Delta|} d(M,M') \end{split} \]
This implies that 
\[ d_{KR} ((TM_t)_\pm , (TM'_t)_\pm) \leq \frac{C \| \mu_\pm^0 \|}{1-C|\Delta|} d(M,M') \]
and therefore that 
\[ d (TM, TM') \leq \frac{C |\Delta|\| \mu^0 \|}{1-C|\Delta|} d (M,M') \]
Hence, for $|\Delta|$ sufficiently small, we obtain  (\ref{eq:contr}) for all $M,M' \in \cM_\Delta (\| \mu^0 \|)$. This proves that $T$ defines a contraction on $\cM_\Delta (\| \mu^0 \|)$ and implies the existence and the uniqueness of a weak solution of the Vlasov equation, for $|\Delta|$ sufficiently small. The argument can then be iterated to obtain existence and uniqueness for all times. 
\end{proof}

\section{Regularity estimates for solutions of the Vlasov equation}
\label{app:prop}
\setcounter{equation}{0}

In the next proposition we estimate the weighted Sobolev norms $\| W_t \|_{H^k_2}$ of the solution at time $t$ of the Vlasov equation in terms of their value at $t=0$. 

\begin{proposition}\label{prop:regularity}
Assume that
\begin{equation}\label{eq:ass-pot2} \int dp \, |\widehat{V} (p)| (1+|p|^2) < \infty 
\end{equation}
Let $W_t$ be the solution of the Vlasov equation (\ref{eq:vlasov}) with initial data $W_0$. For $k=1,2,3,4,5$, there exists a constant $C > 0$, which depends on $\|W_{0}\|_{H^{2}_{4}}$ but not on the higher Sobolev norms, such that 
\be\label{eq:Hk4regularity} 
\| W_t \|_{H^k_4} \leq C e^{C|t|} \| W_0 \|_{H^k_4}
\ee
\end{proposition}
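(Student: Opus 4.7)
I would prove the lemma by the method of characteristics. Let $\Phi_t(u,w)=(X_t(u,w),V_t(u,w))$ be the Hamiltonian flow of the Vlasov equation,
\[
\dot{X}_t = 2V_t, \qquad \dot{V}_t = -\nabla(V*\wt\rho_t)(X_t), \qquad (X_0,V_0)=(u,w),
\]
where $\wt\rho_t(x)=\int W_t(x,v)\,dv$. Since the Vlasov equation is a transport along $\Phi_t$ by a divergence-free vector field, Liouville's theorem gives $|\det\nabla\Phi_t|=1$, and $W_t=W_0\circ\Phi_{-t}$. In particular all $L^p$ norms of $W_t$ are conserved, so $\|\wt\rho_t\|_1\leq\|W_t\|_1=\|W_0\|_1\leq C\|W_0\|_{H^0_4}$.

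The first step is to bound derivatives of the flow. Differentiating the flow equations, $\nabla\Phi_{\pm t}$ satisfies a linear ODE with coefficient $\nabla^2(V*\wt\rho_t)(X_t)$. Under the assumption \eqref{eq:ass-pot2}, $\|\nabla^2(V*\wt\rho_t)\|_\infty\leq\|\nabla^2 V\|_\infty\|\wt\rho_t\|_1\leq C$ uniformly in $t$, so Gronwall yields $\|\nabla\Phi_{\pm t}\|_\infty\leq e^{C|t|}$. For $j\geq 2$, Faà di Bruno applied to $\nabla^j[\nabla(V*\wt\rho_t)(X_t)]$ gives a linear ODE for $\nabla^j\Phi_{\pm t}$ with the same top-order coefficient $\nabla^2(V*\wt\rho_t)(X_t)$, plus an inhomogeneity made of products of $\nabla^\ell(V*\wt\rho_t)(X_t)$ for $2\leq\ell\leq j+1$ and $\nabla^m\Phi_{\pm t}$ for $m<j$. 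The key estimate is
\[
\|\nabla^\ell(V*\wt\rho_t)\|_\infty \leq \|\nabla^2 V\|_\infty\,\|\nabla^{\ell-2}\wt\rho_t\|_1 \leq C\,\|W_t\|_{H^{\ell-2}_4}, \qquad \ell\geq 2,
\]
together with the conservation of $\|W_t\|_{L^p}$ for the base step.

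Next I would transfer this into a bound on $\|W_t\|_{H^k_4}$. Faà di Bruno applied to $W_t=W_0\circ\Phi_{-t}$ gives, for $|\beta|\leq k$,
\[
|\nabla^\beta W_t(x,v)| \leq C \sum_{j=1}^{|\beta|} |\nabla^j W_0|(\Phi_{-t}(x,v))\, \mathcal{P}_j\bigl(\nabla\Phi_{-t},\ldots,\nabla^{|\beta|}\Phi_{-t}\bigr),
\]
with $\mathcal{P}_j$ polynomials of bounded degree. Changing variables $(x,v)=\Phi_t(u,w)$ (Jacobian one) and using that $|V_t(u,w)|\leq|w|+C|t|$ (since $\|\nabla(V*\wt\rho)\|_\infty\leq C$) and $|X_t(u,w)|\leq|u|+2|w||t|+Ct^2$, we get
\[
(1+X_t^2+V_t^2)^4 \leq C(1+|t|)^8 (1+u^2+w^2)^4,
\]
so that the weight is absorbed into $Ce^{C|t|}$. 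Combining,
\[
\|W_t\|_{H^k_4}^2 \leq C e^{C|t|}\,\|W_0\|_{H^k_4}^2\,\Bigl(1+\sum_{j\leq k}\|\nabla^j\Phi_{-t}\|_\infty\Bigr)^{\!2k}.
\]

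Finally I would close the estimate by induction on $k$. For $k=1$, the flow bound $\|\nabla\Phi_{\pm t}\|_\infty\leq e^{C|t|}$ with $C$ universal (depending only on $V$ and $\|W_0\|_{H^0_4}$) gives the result. Inductively, having \eqref{eq:Hk4regularity} for $j\leq k-1$ with $C$ depending only on $\|W_0\|_{H^2_4}$, the estimate on $\|\nabla^\ell(V*\wt\rho_t)\|_\infty$ above for $\ell\leq k+1$ and Gronwall's inequality applied to the ODE for $\nabla^j\Phi_{\pm t}$ yield $\|\nabla^j\Phi_{\pm t}\|_\infty\leq Ce^{C|t|}$ for $j\leq k$, with $C$ depending only on $\|W_0\|_{H^2_4}$. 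Substituting into the displayed bound for $\|W_t\|_{H^k_4}^2$ closes the induction. The main technical obstacle is the book-keeping: the Faà di Bruno expansion produces many terms involving intermediate Sobolev norms $\|W_0\|_{H^j_4}$ for $2\leq j\leq k-1$, which must be absorbed into $\|W_0\|_{H^k_4}$ via $\|W_0\|_{H^j_4}\leq\|W_0\|_{H^k_4}$, while verifying at each step of the induction that only $\|W_0\|_{H^2_4}$ enters the multiplicative constant $C$.
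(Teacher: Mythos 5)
Your strategy is the same as the paper's: propagate sup-norm bounds on the flow derivatives $\|\nabla^j\Phi_{\pm t}\|_\infty$ via Gronwall, using $\|\nabla^\ell(V*\wt\rho_t)\|_\infty\le \|\nabla^2 V\|_\infty\|W_t\|_{H^{\ell-2}_4}$, then transfer to $\|W_t\|_{H^k_4}$ through the Fa\`a di Bruno expansion of $W_0\circ\Phi_{-t}$ and the weight propagation $1+|X_t|^2+|V_t|^2\le Ce^{C|t|}(1+x^2+v^2)$. Those ingredients are all right. However, two of your explicit steps, as written, would not give the claimed dependence of the constant on $\|W_0\|_{H^2_4}$ alone.

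First, the claim that the induction yields $\|\nabla^j\Phi_{\pm t}\|_\infty\le Ce^{C|t|}$ with $C$ depending \emph{only} on $\|W_0\|_{H^2_4}$ for all $j\le k$ is false once $j\ge 4$. The Gronwall inhomogeneity for $\nabla^j\Phi_t$ involves $\|\nabla^{j+1}(V*\wt\rho_s)\|_\infty\le C\|W_s\|_{H^{j-1}_4}$, which by the inductive hypothesis is $\lesssim e^{C|s|}\|W_0\|_{H^{j-1}_4}$. Thus the correct output is $\|\nabla^j\Phi_{\pm t}\|_\infty\le Ce^{C|t|}\,\|W_0\|_{H^{j-1}_4}$ (with $C$ depending on $\|W_0\|_{H^2_4}$); the explicit higher Sobolev norm cannot be absorbed into the constant.

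Second, the intermediate display
\[
\|W_t\|_{H^k_4}^2 \le Ce^{C|t|}\,\|W_0\|_{H^k_4}^2\,\Bigl(1+\sum_{j\le k}\|\nabla^j\Phi_{-t}\|_\infty\Bigr)^{2k}
\]
discards exactly the structure you need. Once the correct flow bounds are substituted, this gives $\|W_0\|_{H^k_4}^2(1+\|W_0\|_{H^{k-1}_4})^{2k}$, so the constant would depend on $\|W_0\|_{H^{k-1}_4}$, not just $\|W_0\|_{H^2_4}$. The resolution is to \emph{not} factor out $\|W_0\|_{H^k_4}^2$: keep the Fa\`a di Bruno sum term by term,
\[
\|W_t\|_{H^k_4}^2 \le C\sum_{n=1}^{k}\sum_{m_1+\cdots+m_n=k,\;m_i\ge1}\|W_0\|_{H^n_4}^2\,\prod_{i}\|\nabla^{m_i}\Phi_{-t}\|_\infty^2\,,
\]
and observe that in each summand, after inserting $\|\nabla^{m_i}\Phi_{-t}\|_\infty\le Ce^{C|t|}\|W_0\|_{H^{m_i-1}_4}$ (for $m_i\ge2$), the product $\|W_0\|_{H^n_4}\prod_{m_i\ge2}\|W_0\|_{H^{m_i-1}_4}$ has at most one factor whose index exceeds $2$, and that index is $\le k$, while all remaining factors are $\le\|W_0\|_{H^2_4}$. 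This term-by-term pairing is what makes the constant depend only on $\|W_0\|_{H^2_4}$; your formula, as stated, cannot deliver it even though you correctly identify the bookkeeping as the crux at the end.
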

\begin{proof}
We use a standard argument. We denote by $\Phi_{t}(x,v) := (X_t (x,v), V_t (x,v))$ the solution of Newton's equations
\[ \begin{split} \dot{X}_t (x,v) &= 2V_t (x,v) \\
\dot{V}_t (x,v) &= - \nabla \left( V* \wt{\rho}_t \right) (X_t (x,v)) \end{split} \]
with initial data $X_0 (x,v) = x$ and $V_0 (x,v) = v$.
Here $\wt{\rho}_t (x) = \int dv\, W_t (x,v)$. We can rewrite Newton's equation in integral form
\begin{equation}\label{eq:XVint} \begin{split} X_t (x,v) &= x + 2\int_0^t ds \, V_s (x,v) \\ 
V_t (x,v) &= v - \int_0^t \nabla (V * \rho_s) (X_s (x,v)) \end{split} 
\end{equation}
In the following, it will be convenient to introduce the following shorthand notation:
\be\label{eq:Phi}
\begin{split}
\| X^{(j)}_{t} \|_{\infty} &:= \max_{|\alpha| = j}\big\| \nabla^{\alpha} X_{t}(x,v) \big\|_{\infty} \\
\| V^{(j)}_{t} \|_{\infty} &:= \max_{|\alpha| = j}\big\| \nabla^{\alpha} V_{t}(x,v) \big\|_{\infty} \\
\| \Phi^{(j)}_{t} \|_{\infty} &:= \| X^{(j)}_{t} \|_{\infty} + \| V^{(j)}_{t} \|_{\infty}\;.
\end{split}
\ee
In general, to control $\|W_{t}\|_{H^{k}_{s}}$, it is sufficient to control $\| \Phi^{(j)}_{t} \|_{\infty}$ for $j\leq k$. In fact, it is not difficult to see that:
\be\label{eq:compositeder}
\begin{split}
\| W_{t} \|_{H^{k}_{4}}^{2} &= \sum_{|\alpha| = k}\int dxdv\, (1 + x^{2} + v^{2})^{4} \big| \nabla^{\alpha} W_{0}\big(X_{-t}(x,v),\, V_{-t}(x,v) \big) \big|^{2}\\
&\leq C\sum_{|\beta|\leq k}\sum_{\substack{\alpha_{1},\ldots,\, \alpha_{|\beta|} \\ \alpha_{1}+\ldots + \alpha_{|\beta|} = k,\, |\alpha_{i}|\geq 1}}\int dxdv\, (1 + x^{2} + v^{2})^{4} \big| \big(\nabla^{\beta} W_{0} \big) \big(X_{-t}(x,v),\, V_{-t}(x,v) \big) \big|^{2} \\
&\quad\times \big| \nabla^{\alpha_{1}} \big( X_{-t}(x,v),\, V_{-t}(x,v) \big) \big|^{2}\cdots \big| \nabla^{\alpha_{|\beta|}} \big( X_{-t}(x,v),\, V_{-t}(x,v) \big) \big|^{2}\\
&\leq C\sum_{n=1}^{k}\sum_{\substack{m_{1},\ldots,\, m_{n} \\ m_{1}+\ldots + m_{n} = k,\, m_{i} \geq 1}} \| W_{0} \|_{H^n_4}^{2} \| \Phi^{(m_{1})}_{-t} \|_{\infty}^{2} \cdots \| \Phi_{-t}^{(m_{n})} \|_{\infty}^{2}\;;
\end{split}
\ee
to get the last step we performed a change of variables and we used that, by Gronwall's lemma together with (\ref{eq:XVint}) and $\| \nabla V \|_\infty <\infty$:
\be\label{eq:propweight} 
1+ X^2_t (x,v)+ V^2_t (x,v) \leq C e^{C|t|} (1+x^2+v^2)
\ee
We start by estimating $\| W_t \|_{H^1_4}$. To this end, we need to control $\|\Phi^{(1)}_{t}\|_{\infty}$. For any multi-index $\alpha$ with $|\alpha|=1$, we obtain from (\ref{eq:XVint}) that 
\[ \begin{split}  \| \nabla^\alpha X_t \|_\infty &\leq 1 + 2\int_0^t ds \, \| \nabla^\alpha V_s \|_\infty \\ 
\| \nabla^\alpha V_t \|_\infty &\leq 1 + \int_0^t ds\, \| \nabla^2 (V*\wt{\rho}_s) \circ X_s \cdot \nabla^\alpha X_s \|_\infty \\ &\leq 1 + C \int_0^t ds\,  \| \nabla^\alpha X_s \|_\infty \end{split} \]
where we used that $\| \nabla^{2} (V * \wt{\rho}_{s}) \|_{\infty} \leq \| \nabla^{2} V \|_{\infty} \| \wt{\rho}_{s} \|_{1}$, and $\| \wt{\rho}_{s} \|_{1}\leq \| W_{0} \|_{1} \leq C\| W_{0} \|_{H^{0}_{4}}$ (see (\ref{eq:L1toH14})). Gronwall's lemma, together with the assumption $\|\nabla^{2}V\|_{\infty}<\infty$, implies that
\begin{equation}\label{eq:alpha=1} 
\| \Phi^{(1)}_{t} \|_{\infty} \leq C e ^{C|t|} 
\end{equation}
where the constant $C$ depends on $\|W_{0}\|_{H^{0}_{4}}$, but not on the higher Sobolev norms. Thanks to (\ref{eq:compositeder}), the bound (\ref{eq:alpha=1}) immediately implies 
\be\label{eq:Wt1-bd}
\| W_t \|_{H^1_4}^2 \leq C e^{C|t|} \| W_0 \|_{H^1_4}^2
\ee
where the constant $C$ depends on $\|W_{0}\|_{H^{0}_{4}}$, but not on the higher Sobolev norms. This concludes the proof of (\ref{eq:Hk4regularity}) with $k=1$. Next, let $k=2$. As before, we start by considering the derivatives $\nabla^\alpha X_t, \nabla^\alpha V_t$, now for $|\alpha|=2$. We have:
\be
\begin{split}
\| \nabla^{\alpha} (\nabla (V*\wt{\rho}_s) \circ X_s) \|_\infty &\leq \| \nabla^3 (V*\wt{\rho}_s) \|_\infty \| X^{(1)}_s \|_\infty^{2} + \| \nabla^2 (V*\wt{\rho}_s) \|_\infty \| X^{(2)}_{s} \|_\infty \\
&\leq Ce^{C|s|}\|W_{0}\|_{H^{1}_{4}} + \| \nabla^2 (V*\wt{\rho}_s) \|_\infty \| X^{(2)}_s \|_\infty
\end{split}
\ee
where in the last step we used $\| \nabla^3 (V*\wt{\rho}_s) \|_\infty\leq \|\nabla^{2}V\|_{\infty}\|\nabla \wt{\rho}_{s}\|_{1}\leq Ce^{C|s|}\|W_{0}\|_{H^{1}_{4}}$, and we estimated $\| \nabla\wt{\rho}_{s} \|_{1}\leq C\| W_{s} \|_{H^{1}_4}\leq Ce^{C|s|}\| W_{0} \|_{H^{1}_{4}}$. This, together with the estimate (\ref{eq:alpha=1}), implies:
\[ 
\begin{split} 
\| X^{(2)}_t \|_{\infty}  &\leq 2\int_0^t ds \, \| V^{(2)}_s \|_{\infty} \\ 
\| V^{(2)}_t \|_{\infty} & \leq C\int_0^t ds \| X^{(2)}_s \|_\infty + C e^{C|t|}\| W_{0} \|_{H^{1}_{4}} 
\end{split} 
\]
thus, by Gronwall's lemma: 
\be\label{eq:step2} 
\| \Phi^{(2)}_{t} \|_{\infty} \leq C e ^{C|t|} \| W_{0} \|_{H^{1}_{4}}
\ee
Therefore, proceeding as in (\ref{eq:Wt1-bd}), we get 
\be\label{eq:H24}
\| W_t \|_{H^2_4} \leq C e^{C|t|}\|W_{0}\|_{H^{2}_{4}}
\ee
where the constant $C>0$ is allowed to depend on $\|W_{0}\|_{H^{1}_{4}}$, but not on the higher Sobolev norms. This concludes the proof of (\ref{eq:Hk4regularity}) for $k=2$. Consider now $k=3,4,5$. We will use that, for $|\alpha| = k$:
\be\label{eq:L1toHk-1}
\begin{split}
\| \nabla^{\alpha}\nabla (V*\wt{\rho}_s) \|_{\infty} \leq &\; C\| \nabla^{2}  V \|_{\infty} \sum_{|\beta| = k-1} \| \nabla^{\beta} \tilde\rho_{s} \|_{1} \\
\leq&\; C \| \nabla^{2} V \|_{\infty} \|W_{s}\|_{H^{k-1}_{4}}\\
\end{split}
\ee
and
\be\label{eq:dercomposta}
\begin{split}
\| \nabla^{\alpha}(\nabla (V*\wt{\rho}_s)\circ X_{s}) \|_{\infty} \leq C\sum_{|\beta|\leq |\alpha|}\sum_{\substack{\alpha_{1},\ldots,\, \alpha_{|\beta|} \\ \alpha_{1} + \ldots + \alpha_{|\beta|} = k}}\big\| \nabla^{\beta}\nabla (V*\wt\rho_{s}) \big\|_{\infty} \| X^{(|\alpha_{1}|)}_{s} \|_{\infty}\cdots \| X^{(|\alpha_{|\beta|}|)}_{s} \|_{\infty}
\end{split}
\ee
for a $k$-dependent constant $C>0$. Let $k= 3$. We have, for $|\alpha| = 3$:
\begin{equation}\label{eq:nabalpha3} 
\begin{split} 
&\| \nabla^{\alpha}( \nabla (V*\wt{\rho}_s) \circ X_s) \|_\infty \\
&\leq C\Big[\| \nabla^4 (V*\wt{\rho}_s) \|_\infty  \| X^{(1)}_{s} \|_{\infty}^{3} + \| \nabla^3 (V*\wt{\rho}_s) \|_\infty \| X^{(2)}_{s} \|_{\infty} \| X^{(1)}_{s} \|_{\infty} + \| \nabla^2 (V*\wt{\rho}_s) \|_\infty \| X^{(3)}_{s} \|_{\infty} \Big] \\
&\leq C e^{C|s|}\| W_{0} \|_{H^{2}_{4}} + C \| X^{(3)}_s \|_\infty 
\end{split} 
\end{equation}
where the constant $C>0$ is allowed to depend on $\|W_{0}\|_{H^{1}_{4}}$, but not on the higher Sobolev norms. The last step follows from (\ref{eq:L1toHk-1}) and from the previous estimates on $\|W_{s}\|_{H^{j}_{4}}$, $\| X^{(j)}_{s} \|_{\infty}$, $j=1,2$. Plugging this bound in (\ref{eq:XVint}), we find
\be\label{eq:step3} 
\| \Phi^{(3)}_{t} \|_{\infty} \leq C e ^{C|t|} \| W_{0} \|_{H^{2}_{4}}
\ee
where the constant $C>0$ is allowed to depend on $\|W_{0}\|_{H^{1}_{4}}$, but not on the higher Sobolev norms. Thus, proceeding as in (\ref{eq:Wt1-bd}):
\be\label{eq:H34}
\| W_t \|_{H^3_4} \leq C e^{C|t|} \| W_0 \|_{H^3_4}
\ee
where the constant $C>0$ is allowed to depend on $\|W_{0}\|_{H^{2}_{4}}$, but not on the higher Sobolev norms. This concludes the proof of (\ref{eq:Hk4regularity}) for $k=3$. Let $k= 4$. Similarly to (\ref{eq:nabalpha3}), using (\ref{eq:L1toHk-1}) together with the estimates for $\| W_{s} \|_{H^{j}_{4}}$, $j=1,2,3$, we find, for $|\alpha| = 4$:
\[ 
\begin{split} 
&\| \nabla^{\alpha} (\nabla (V*\wt{\rho}_s) \circ X_s )\|_\infty \\
&\leq C\Big[\| \nabla^5 (V*\wt{\rho}_s) \|_\infty  \| X^{(1)}_s \|^4_\infty  + \| \nabla^4 (V*\wt{\rho}_s) \|_\infty \| X^{(2)}_s \|_\infty \| X^{(1)}_s \|_\infty^{2} \\
&\quad + \| \nabla^3 (V*\wt{\rho}_s) \|_\infty \Big( \| X^{(3)}_s \|_\infty \| X^{(1)}_s \|_\infty + \| X^{(2)}_s \|^2_\infty \Big) + \| \nabla^2 (V*\wt{\rho}_s) \|_\infty \| X^{(4)}_s \|_\infty\Big] \\ 
&\leq  C e^{C|s|}\| W_{0} \|_{H^{3}_{4}} + C \| X^{(4)}_s \|_\infty  
\end{split} 
\]
where the constant $C>0$ is allowed to depend on $\|W_{0}\|_{H^{2}_{4}}$, but not on the higher Sobolev norms. This implies
\be\label{eq:step4} 
\| \Phi^{(4)}_{t} \|_\infty \leq C e ^{C|t|}\| W_{0} \|_{H^{3}_{4}}
\ee
where the constant $C>0$ is allowed to depend on $\|W_{0}\|_{H^{2}_{4}}$, but not on the higher Sobolev norms. Then, we claim that:
\be\label{eq:endstep4} 
\| W_t \|_{H^4_4} \leq C e^{C|t|} \| W_0 \|_{H^4_4}
\ee
where the constant $C>0$ is allowed to depend on $\|W_{0}\|_{H^{2}_{4}}$, but not on the higher Sobolev norms. In fact, from (\ref{eq:compositeder}) we get and from the previous estimates on $\| \Phi^{(j)}_{t} \|_{\infty}$, $j\leq 4$, we have:
\be\label{eq:compositeder2}
\| W_{t} \|_{H^{4}_{4}}^{2} \leq Ce^{C|t|}\Big[ \| W_{0} \|_{H^{1}_{4}}^{2} \| \Phi^{(4)}_{0} \|_{\infty}^{2} + \sum_{k=2}^{4}\|W_{0}\|^{2}_{H^{k}_{4}} \| W_{0} \|_{H^{2}_{4}}^{2k} \Big]\\
\ee
This, together with (\ref{eq:step4}), implies (\ref{eq:endstep4}) and concludes the proof of (\ref{eq:Hk4regularity}) for $k=4$. The case $k=5$ can be studied in a similar way. Let $|\alpha| = 5$. Using once more (\ref{eq:L1toHk-1}), (\ref{eq:dercomposta}), and proceeding as for the previous cases, we get:
\begin{equation}\label{eq:nabalpha5} \begin{split} \| \nabla^{\alpha}( \nabla (V*\wt{\rho}_s) \circ X_s) \|_\infty \leq C e^{C|t|} \| W_0 \|_{H^4_4} + C \| X^{(5)}_s \|_\infty \end{split} 
\end{equation}
where the constant $C>0$ is allowed to depend on $\|W_{0}\|_{H^{2}_{4}}$, but not on the higher Sobolev norms. 
By Gronwall's lemma, we get:
\be\label{eq:step5} 
\| \Phi^{(5)}_{t}\|_\infty \leq C e ^{C|t|} \| W_0 \|_{H^4_4}
\ee
where the constant $C>0$ is allowed to depend on $\|W_{0}\|_{H^{2}_{4}}$, but not on the higher Sobolev norms. Then, we claim that:
\be\label{eq:endstep5}
\| W_t \|_{H^5_4} \leq C e^{C|t|} \| W_0 \|_{H^5_4}
\ee
where the constant $C>0$ is allowed to depend on $\|W_{0}\|_{H^{2}_{4}}$, but not on the higher Sobolev norms. To see this, we use again (\ref{eq:compositeder}). We get:
\be
\begin{split}
\| W_{t} \|_{H^{5}_{4}}^{2} &\leq Ce^{C|t|} \Big[ \|W_{0}\|_{H^{1}_{4}}^{2}\| \Phi^{(5)}_{0} \|_{H^{4}_{4}}^{2} + \| W_{0} \|_{H^{2}_{4}}^{2}\| \Big( \| \Phi^{(4)}_{0} \|^{2}_{\infty}\| \Phi^{(1)}_{0} \|^{2} + \| \Phi^{(3)}_{0} \|^{2}_{\infty}\| \Phi^{(2)}_{0} \|^{2}_{\infty}\Big) \\
&\quad + \sum_{k=3}^{5} \|W_{0}\|_{H^{k}_{4}}^{2}\| W_{0}\|_{H^{2}_{4}}^{2k}\Big]
\end{split}
\ee
which, together with (\ref{eq:step3}), (\ref{eq:step4}), (\ref{eq:step5}), implies (\ref{eq:endstep5}). This concludes the proof of (\ref{eq:Hk4regularity}) for $k=5$, and of Proposition \ref{prop:regularity}.
\end{proof} 
 

\section{Propagation of commutator bounds along the Hartree dynamics}
\label{sect:comm}
\setcounter{equation}{0}

Bounds for norms of commutators of the form $[x,\omega_{N,t}]$ and $[\eps \nabla, \omega_{N,t}]$ play an important role in our analysis. In this section, 
we show how they can be propagated along the Hartree evolution. Similar bounds have been proven in \cite{BPS}. 
\begin{proposition}\label{prop:prop-comm}
Assume 
\begin{equation}\label{eq:pot-ass3} \int |\widehat{V} (p)| (1+|p|^2) dp < \infty 
\end{equation}
Let $\omega_{N,t}$ be the solution of the nonlinear Hartree equation 
\[ i\eps \partial_t \omega_{N,t} = \left[ -\eps^2 \Delta + (V* \rho_t) , \omega_{N,t} \right] \]
with initial data $\omega_{N,t=0}= \omega_N$. Then there exists a constant $C > 0$ such that 
\[ \begin{split} \| [x, \omega_{N,t} ] \|_\text{HS}  
&\leq C e^{C|t|} \left[ \| [x, \omega_{N} ] \|_\text{HS} + \| [\eps \nabla, \omega_{N} ] \|_\text{HS}  \right] 
\\ \| [\eps \nabla, \omega_{N,t} ] \|_\text{HS}  &\leq C e^{C|t|} \left[ \| [x, \omega_{N} ] \|_\text{HS} + \| [\eps \nabla, \omega_{N} ] \|_\text{HS} \right] \end{split} \]
Moreover, 
\[ \begin{split} \| [x, \omega_{N,t} ] \|_\text{tr}  
&\leq C e^{C|t|} \left[ \| [x, \omega_{N} ] \|_\text{tr} + \| [\eps \nabla, \omega_{N} ] \|_\text{tr} \right]  \\ \| [\eps \nabla, \omega_{N,t} ] \|_\text{tr}  &\leq C e^{C|t|} \left[ \| [x, \omega_{N} ] \|_\text{tr} + \| [\eps \nabla, \omega_{N} ] \|_\text{tr} \right]
\end{split} \]
\end{proposition}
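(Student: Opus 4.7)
The plan is to derive a closed Gronwall system for the two quantities $\|[x,\omega_{N,t}]\|_{\rm HS}$ and $\|[\eps\nabla,\omega_{N,t}]\|_{\rm HS}$ (and analogously in trace norm) by applying the Jacobi identity to the Hartree generator and then conjugating away the ``free'' commutator with $h_H(t)$.

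Let $\mathcal{U}(t;0)$ denote the unitary generated by $h_H(t)=-\eps^2\Delta+(V*\rho_t)$, so that $\omega_{N,t}=\mathcal{U}(t;0)\omega_N\mathcal{U}^*(t;0)$. Differentiating $[x,\omega_{N,t}]$ and using the Jacobi identity together with $[x,h_H(t)]=[x,-\eps^2\Delta]=2\eps^2\nabla$ (since $V*\rho_t$ commutes with multiplication by $x$), I obtain
\[
i\eps\,\partial_t[x,\omega_{N,t}]=[h_H(t),[x,\omega_{N,t}]]+2\eps\,[\eps\nabla,\omega_{N,t}].
\]
Conjugating with $\mathcal{U}(t;0)$ kills the first term on the right, so integration yields
\[
\mathcal{U}^*(t;0)[x,\omega_{N,t}]\mathcal{U}(t;0)=[x,\omega_N]+\frac{2}{i}\int_0^t\mathcal{U}^*(s;0)[\eps\nabla,\omega_{N,s}]\mathcal{U}(s;0)\,ds.
\]
Taking Hilbert-Schmidt (or trace) norm and using unitary invariance gives
\[
\|[x,\omega_{N,t}]\|\leq\|[x,\omega_N]\|+2\int_0^t\|[\eps\nabla,\omega_{N,s}]\|\,ds.
\]

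The same procedure for $[\eps\nabla,\omega_{N,t}]$, using $[\eps\nabla,h_H(t)]=\eps\,\nabla(V*\rho_t)(x)$, produces
\[
\|[\eps\nabla,\omega_{N,t}]\|\leq\|[\eps\nabla,\omega_N]\|+\int_0^t\|[\nabla(V*\rho_s)(x),\omega_{N,s}]\|\,ds.
\]
The key step is then to bound the last integrand by $\|[x,\omega_{N,s}]\|$ so the system closes. I expand in Fourier modes,
\[
\nabla(V*\rho_s)(x)=\int dp\,ip\,\widehat V(p)\,\widehat\rho_s(p)\,e^{ip\cdot x},
\]
and use $\|\widehat\rho_s\|_\infty\leq\|\rho_s\|_1=1$ (which follows from $\rho_s\geq0$ and $\tr\omega_{N,s}=N$). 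For each exponential I write
\[
[e^{ip\cdot x},\omega_{N,s}]=\int_0^1 d\lambda\,e^{i\lambda p\cdot x}[ip\cdot x,\omega_{N,s}]e^{i(1-\lambda)p\cdot x},
\]
and unitary invariance of $\|\cdot\|_{\rm HS}$ and $\|\cdot\|_{\rm tr}$ gives $\|[e^{ip\cdot x},\omega_{N,s}]\|\leq|p|\,\|[x,\omega_{N,s}]\|$. Combining, the assumption (\ref{eq:pot-ass3}) yields
\[
\|[\nabla(V*\rho_s)(x),\omega_{N,s}]\|\leq\Bigl(\int|p|^2|\widehat V(p)|dp\Bigr)\|[x,\omega_{N,s}]\|\leq C\,\|[x,\omega_{N,s}]\|.
\]

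Adding the two integral inequalities produces
\[
\|[x,\omega_{N,t}]\|+\|[\eps\nabla,\omega_{N,t}]\|\leq\|[x,\omega_N]\|+\|[\eps\nabla,\omega_N]\|+C\int_0^t\bigl(\|[x,\omega_{N,s}]\|+\|[\eps\nabla,\omega_{N,s}]\|\bigr)ds,
\]
valid in both the HS and the trace norm. Gronwall's lemma then gives the two stated bounds simultaneously. The only conceptual obstacle is the control of the commutator with the nonlinear mean-field term $V*\rho_s$, which is handled by the Fourier argument above and uses only the integrability hypothesis (\ref{eq:pot-ass3}); everything else is a direct unitary-conjugation computation.
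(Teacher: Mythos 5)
Your proposal is correct and follows essentially the same route as the paper: conjugate by the Hartree propagator, use the Jacobi identity to isolate $[[x,h_H],\omega]$ (respectively $[[\eps\nabla,h_H],\omega]$), expand the mean-field term in Fourier modes, bound $\|[e^{ip\cdot x},\omega]\|\leq|p|\,\|[x,\omega]\|$ via the $\int_0^1$ interpolation identity, and close with Gronwall; the paper merely differentiates the already conjugated object $\cU^*[\,\cdot\,,\omega_{N,t}]\cU$ rather than first writing the evolution equation and then conjugating, which is an immaterial reordering. (Incidentally, your coefficient $2\eps^2\nabla$ for $[x,-\eps^2\Delta]$ is the correct one; the paper's intermediate line has a harmless constant slip there that does not affect the conclusion.)
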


\begin{proof}
Let $h_H (t) = -\eps^2 \Delta + (V*\rho_t) (x)$ and $\cU (t;s)$ be the unitary evolution generated by $h_H (t)$, as defined in (\ref{eq:aux-dyn}). We compute
\[ \begin{split} 
i\eps \partial_t \cU^* (t;0) [ x, \omega_{N,t}] \cU (t;0) = \; & - \cU^* (t;0) [h_H (t), [x,\omega_{N,t}]] \cU (t;0) + \cU^* (t;0) [x, [h_H(t), \omega_{N,t}]] \cU (t;0) \\ = \; &\cU^* (t;0) [[h(t),x], \omega_{N,t}] \cU (t;0) \\ = \; & \eps \, \cU^* (t;0) [ \eps \nabla , \omega_{N,t}] \cU (t;0)  \end{split} \]
Integrating over time, we find
\[ [x,\omega_{N,t}] = \cU (t;0) [x, \omega_N] \cU^* (t;0) + i \int_0^t ds \, \cU (t;s) [\eps \nabla , \omega_{N,s} ] \cU^* (t;s) \]
and thus
\begin{equation}\label{eq:comm-bdHS} \| [x, \omega_{N,t} ] \|_\text{HS} \leq \| [x,\omega_N] \|_\text{HS} + \int_0^t ds \, \| [\eps \nabla , \omega_{N,s} ] \|_\text{HS} 
\end{equation}
On the other hand,
\[ \begin{split} 
i\eps \partial_t \cU^* (t;0) [\eps \nabla , \omega_{N,t}]  \cU (t;0) = \; &- \cU^* (t;0) [ h_H (t), [\eps \nabla, \omega_{N,t}]] \cU (t;0) + \cU^* (t;0) [\eps \nabla, [ h_H (t) , \omega_{N,t}]] \cU (t;0) 
\\ = \; &\cU^* (t;0) [ [\eps \nabla, h_H (t)], \omega_{N,t}] \cU (t;0) \\ = \; &\eps \, \cU^* (t;0) [ \nabla (V*\rho_t) , \omega_{N,t} ] \cU (t;0) \\ = \; & \eps \int dp \, p\, \widehat{V} (p) \, \widehat{\rho}_t (p) \, \cU^* (t;0) [e^{ip \cdot x}, \omega_{N,t} ] \, \cU (t;0) \end{split} \] 
Using the identity \[ [e^{ip \cdot x}, \omega_{N,t} ] = \int_0^1 d\lambda \,  e^{i\lambda p \cdot x} [ip \cdot x,  \omega_{N,t} ] e^{i(1-\lambda) p \cdot x} \]
we obtain, with (\ref{eq:pot-ass3}), 
\begin{equation} \begin{split} 
\|  [\eps \nabla , \omega_{N,t}] \|_\text{\text{HS}} \leq \; &\| [\eps \nabla, \omega_N ] \|_\text{HS} +   \int dp |\widehat{V} (p)| |p|^2 |\widehat{\rho}_t (p)| \int_0^t ds \| [x,\omega_{N,s}] \|_\text{HS} \\ \leq \; &
\| [\eps \nabla, \omega_N ] \|_\text{HS} + C \int_0^t ds \, \| [x,\omega_{N,s}] \|_\text{HS}
\end{split} 
\end{equation}
Combining the last equation with (\ref{eq:comm-bdHS}) and applying Gronwall's lemma, we find
\[ \left[ \| [x, \omega_{N,t} ] \|_\text{HS} + \|  [\eps \nabla , \omega_{N,t}] \|_\text{\text{HS}} \right] \leq C e^{C|t|} \left[ \| [x, \omega_{N} ] \|_\text{HS} + \|  [\eps \nabla , \omega_{N}] \|_\text{\text{HS}} \right] \]
as claimed. In the same way, one can also prove the estimates for the trace norms of the commutators.
\end{proof}

{\it Acknowledgments.} The work of M. Porta, C. Saffirio and B. Schlein has been partially supported by the ERC grant MAQD 240518, the NCCR SwissMAP and the SNF Project ``Effective equations from quantum dynamics''. N. Benedikter has been partially supported by the ERC grant CoMBoS-239694 and by the ERC Advanced grant 321029. The authors declare that there is no conflict of interest. The authors would also like to thank the Erwin Schr\"{o}dinger Institute in Vienna for the hospitality; part of this paper has been written during the conference ``Scaling limits and effective theories in classical and quantum mechanics''.

\bigskip

\adresse

 \end{document}